\PassOptionsToPackage{nameinlink}{cleveref}
\documentclass[a4paper,UKenglish,cleveref,thm-restate]{lipics-v2021}
\usepackage[subrefformat=simple,labelformat=simple,position=b]{subcaption}

\usepackage{nicefrac}
\usepackage{mathtools}
\usepackage{xspace}
\usepackage{complexity}
\usepackage{xcolor}
\usepackage{makecell}
\usepackage{todonotes}

\usepackage{etoolbox}

\newif\ifrestatingthm
\restatingthmfalse

\DeclareRobustCommand{\thmstar}[2]{%
	\ifrestatingthm
	\hyperref[#1]{\texorpdfstring{$\star$}{*}}%
	\else
	\hyperlink{#2}{\texorpdfstring{$\star$}{*}}%
	\fi
}

\crefname{figure}{Figure}{Figures}
\crefname{theorem}{Theorem}{Theorems}
\crefname{lemma}{Lemma}{Lemmas}
\crefname{corollary}{Corollary}{Corollaries}
\crefname{section}{Section}{Sections}
\crefname{appendix}{Appendix}{Appendices}
\crefname{remark}{Remark}{Remarks}
\crefname{claim}{Claim}{Claims}
\crefname{conjecture}{Conjecture}{Conjectures}
\crefname{observation}{Observation}{Observations}

\graphicspath{{./figures/}}
\title{Restless Guards: Robust Guarding of Polyominoes}

\titlerunning{Restless Guards}

\author{Linda Kleist}{Department of Computer Science, University of Hamburg}{linda.kleist@uni-hamburg.de}{https://orcid.org/0000-0002-3786-916X}{}
\author{Ramin Kosfeld}{Department of Computer Science, TU Braunschweig}{kosfeld@ibr.cs.tu-bs.de}{https://orcid.org/0000-0002-1081-2454}{}
\author{Christian Rieck}{Institute of Mathematics, University of Kassel}{christian.rieck@mathematik.uni-kassel.de}{https://orcid.org/0000-0003-0846-5163}{}

\authorrunning{Linda Kleist, Ramin Kosfeld, and Christian Rieck}

\Copyright{Linda Kleist, Ramin Kosfeld, and Christian Rieck}

\ccsdesc{Theory of computation~Computational geometry}

\keywords{Art Gallery problem, polyominoes, robustness, restless guard, rectangular visibility, r\nobreakdash-visibility, locational uncertainty, imprecision}

\DeclarePairedDelimiter\floor{\lfloor}{\rfloor}

\newtoggle{insectionheading}
\newcommand{\head}[1]{\toggletrue{insectionheading}#1\togglefalse{insectionheading}}
\let\oldsection\section
\renewcommand{\section}[1]{\head{\oldsection{\boldmath{#1}}}}
\let\oldsubsection\subsection
\renewcommand{\subsection}[1]{\head{\oldsubsection{\boldmath{#1}}}}
\newcommand{\ifsectionheading}[2]{\iftoggle{insectionheading}{#1}{#2}}

\newcommand{\compatible}[1]{\texorpdfstring{\ifsectionheading{\boldmath{$#1$}-}{$#1$\nobreakdash-}}{#1-}}

\newcommand{\rguard}[1]{\compatible{#1}restless guard\xspace}

\newcommand{\rguards}[1]{\compatible{#1}restless guards\xspace}

\newcommand{\guarding}[1]{\compatible{#1}restless guarding\xspace}

\newcommand{\guard}[1]{\compatible{#1}restless guard\xspace}
\newcommand{\guarded}[1]{\compatible{#1}restless guarded\xspace}

\newcommand{\guardable}[1]{\compatible{#1}restless guardable\xspace}    

\newcommand{\outerdent}{outer dent\xspace}
\newcommand{\outerdents}{outer dents\xspace}

\newcommand{\vregion}{visibility region\xspace}
\newcommand{\vregions}{visibility regions\xspace}

\newcommand{\Vregions}{Visibility regions\xspace}

\newcommand{\bigO}{\mathcal{O}\xspace}
\newcommand{\bO}{\bigO}

\newcommand{\emptybase}{bordered base\xspace}
\newcommand{\fullbase}{inside base\xspace}

\newcommand{\block}{\ifmmode B\else$B$\xspace\fi}

\DeclareMathOperator{\vis}{Vis}

\newcommand{\viska}[2]{\ensuremath{\vis_{#1}(#2)}}
\DeclareMathOperator{\opt}{\textsc{opt}\xspace}
\DeclareMathOperator{\apx}{\mathcal{A}}
\DeclareMathOperator{\rect}{rect}
\DeclareMathOperator{\minguard}{\textsc{min}\xspace}
\DeclareMathOperator{\offs}{\textsc{offset}\xspace}

\DeclareMathAlphabet{\mathsfit}{T1}{\sfdefault}{\mddefault}{\sldefault}
\hideLIPIcs
\nolinenumbers

\begin{document}
\maketitle

\begin{abstract}
The art gallery problem is a fundamental problem whose practical applications have inspired numerous variants. 
In this work, we investigate a setting in which full coverage of the gallery shall be ensured despite locational uncertainty of the guards.

We study this problem for polyominoes under rectangular visibility, where a \rguard{k} assigned to a cell may move to any cell within distance at most $k$.
While all polyominoes are guardable for $k\in\{0, 1\}$, the same is not true for all $k\geq 2$.
However, for a given polyomino $P$ and integer $k$, we present an algorithm that identifies a \rguard{k} set of $P$ in polynomial time, or decides that no such set exists. 
For the cases $k\in\{0,1\}$, where \rguard{k} sets always exist, we provide tight worst-case bounds on the number of \rguards{k} that are always sufficient and sometimes necessary to guard a hole-free polyomino with $n$ vertices or $m$ cells.
 	
Moreover, we show that computing a minimum guard set is \NP-hard.
Complementing the hardness, we provide polynomial-time algorithms for solving the problem optimally for histograms, and leverage this to obtain a $20k$-approximation for polyominoes without holes.
\end{abstract}


\section{Introduction}
\label{sec:introduction}

The Art Gallery Problem is a classic and fundamental problem in computational geometry:
Given a polygon $P$ in the plane, what is the minimum set $G$ of points (called \emph{guards}) such that every point of $P$ is \emph{seen} by at least one guard in $G$.
The original question, that gave rise to this area of research more than 50 years ago, is due to Klee~(see~\cite{o1987art}) and was concerned with establishing bounds on the number of guards that are always sufficient and sometimes necessary to guard a hole-free polygon with $n$ vertices.
Chvátal~\cite{chvatal1975combinatorial} proved that the answer is~$\floor{\nicefrac{n}{3}}$. Subsequently, an elegant and simple proof was given by Fisk~\cite{fisk1978short}.

Over the years, many variants of the problem have been investigated, based on constraints and properties such as the allowed guard positions (vertices, edges, interior), the shape of the art gallery (polygons with or without holes, orthogonal, monotone, polyominoes), the parts to be guarded (vertices, boundary, interior), as well as the type of vision (straight-line visibility, rectangular visibility, omni-directional field-of-view, restricted vision~cone).

Recent research also studies variants that incorporate more realistic assumptions, such as more accurate lighting conditions~\cite{eisenbrand2005energy,openproblemsCCCG} or constraints on visibility~\cite{danner2000randomized,ernestus2017algorithms,kazazakis2002fast}.
In this vein, we explore a scenario with locational uncertainty.
In an optimal solution for the classic art gallery problem, guards might be required to be at points with irrational coordinates~\cite{irrationalGuards}.
However, in real-world scenarios, it is unrealistic to assume that guards remain (or are installed) precisely at predetermined positions; instead, they are typically located in a small vicinity of their intended position. 
This requires developing models and algorithms that remain robust to placement inaccuracies while still ensuring full coverage.
Most recently, Das, Filtser, Katz, and Mitchell~\cite{robustly-guarding-polygons-socg} introduced the concept of robust visibility, where a guard's allowed movement depends on the distance to the point being guarded.

In this work, we investigate guard sets that accommodate such inaccuracies (regardless of the distance between a guard and a point), while still ensuring that the polygon is guarded. 
In particular, we investigate the case where the art gallery is modeled as a polyomino and guards can move within a $k$-neighborhood; 
we consider the guards to be \emph{restless}. 
Specifically, a \rguard{k} is placed in some cell of the polyomino and may move to any cell within distance at most $k$; the distance is measured in terms of the number of edges of the dual graph of the polyomino. 
A set of points $G$ in a polyomino is a \emph{\rguard{k} set} if, regardless of the guards' exact positions within their respective $k$-neighborhoods, the entire polyomino is guarded.
We use the standard visibility model for rectangular domains, where a point $p$ is visible from a guard $g$ whenever the axis-aligned rectangle defined by $p$ and $g$ is entirely contained in $P$. 
Thus, in our setting, the visibility region of a \rguard{k} is given by the intersection of the visibility regions from all reachable positions; for examples consider \cref{fig:intro-examples}.
For precise definitions, we refer to~\cref{sec:preliminaries}.

\begin{figure}[htb]
	\centering
	\captionsetup[subfigure]{justification=centering}
	\begin{subfigure}{0.3\columnwidth}%
		\centering
		\includegraphics[page=10]{intro_examples}
		\subcaption{}
		\label{fig:intro-examples-a}
	\end{subfigure}%
	\begin{subfigure}{0.3\columnwidth}%
		\centering
		\includegraphics[page=11]{intro_examples}
		\subcaption{}
		\label{fig:intro-examples-b}
	\end{subfigure}%
	\begin{subfigure}{0.4\columnwidth}%
		\centering
		\includegraphics[page=9]{intro_examples}
		\subcaption{}
		\label{fig:intro-examples-c}
	\end{subfigure}%
	\caption{
		Are the depicted \rguard{k} sets valid or minimal for small values of $k$? 
		How many guards appear in optimal solutions? Solutions are given in the footnote.\footnotemark{}
	}
	\label{fig:intro-examples}
\end{figure}
	\footnotetext{
		(a) Optimal for $k=0$, not valid for $k=1$. 
		(b) Valid but not minimal for $k=0$, not valid for $k=1$.\\
		(c) Valid for $k=0,1,2$. The minimum numbers of guards are 2 ($k=0$), and 3 ($k=1,2$).
	}

With these notions at hand, our problem reads as follows.

\subparagraph*{The Art Gallery Problem with Restless Guards.}
Given a polyomino~$P$ and an integer $k \geq 0$, what is the size of a smallest \rguard{k} set of $P$?

\subsection{Our contribution}
We provide both combinatorial and algorithmic results for guarding polyominoes with a set of \rguards{k} for different values of $k$.

\subparagraph*{Guardability and tight worst-case bounds.}
We start by investigating the existence of \rguard{k} sets. While all polyominoes are guardable for $k\in\{0,1\}$, the same is not true for all $k\geq 2$ (\cref{prop:existence}).
However, we present a polynomial-time algorithm that decides for a given polyomino with $n$ vertices and an integer $k$, whether a \rguard{k} set exists (\cref{prop:existencePolytime}). 
We highlight that this algorithm (and all of our other algorithms) is polynomial in the encoding of the input; in particular, we encode a polyomino as a sequence of vertex coordinates rather than a set of cells.

For the two cases where \rguard{k} sets exist, we provide tight worst-case bounds on their size, depending on the number of vertices and cells of the polyomino, denoted by~$n$ and~$m$, respectively. 
In particular, for \rguards{0}, we show that $\floor{\nicefrac{n}{4}}$ as well as $\floor{\nicefrac{m}{2}}$ guards (for $m\geq 2$) are always sufficient, and sometimes necessary (\cref{prop:0-restless-guards-can-guard-polyominoes}). 
In case of \rguards{1}, $\nicefrac{(n-2)}{2}$ as well as $m-2$ guards are always sufficient and sometimes necessary, when $m\geq 4$ (\cref{prop:1-restless-guards-can-guard-polyominoes}).

\subparagraph*{Computing small guard sets: Hardness and approximation.}

As the existence of \rguard{k} sets is decidable in polynomial time, we are interested in the computational complexity of computing optimal or small guard sets.
To this end, we show that the problem is \NP-hard for every fixed $k\in \mathbb{N}$ by a reduction from \textsc{Vertex Cover} (\cref{prop:np-hardness-for-any-k}).

To complement this,
we present a $20k$-approximation algorithm for computing \rguard{k} sets in hole-free
polyominoes (\cref{thm:k-guard-poly}), if they exist. 
As a stepping stone, we show how to compute minimum \rguard{k} sets for histograms (\cref{thm:k-guard-opt-hist}).
We~then turn towards the more general case of
so-called \emph{embedded} histograms and  present a $4k$\nobreakdash-approximation (\cref{lem:k-guard-embedded}); 
as the boundaries are not necessarily present in this setting, computing guard sets is a much  more complex task. 
Lastly, for $k\in\{0,1\}$, we establish improved approximation guarantees, namely $2$ for $k=0$ (\cref{prop:0-approx-2}), and $(10+\varepsilon)$ for $k=1$~(\cref{thm:1-guard-poly}).

\medskip
Full details of the proofs for statements marked with ($\star$) are given in the appendix.

\subsection{Related work}
Since its introduction, the art gallery problem has inspired extensive research, resulting in hundreds of publications on numerous variants. 
We provide a concise overview of the most essential related work, and refer to the book by O'Rourke~\cite{o1987art} and surveys by Shermer~\cite{shermer1992recent} and Urrutia~\cite{Urrutia00} for a comprehensive introduction.
One important aspect concerns the algorithmic and complexity-theoretic perspective.
Abrahamsen, Adamaszek, and Miltzow prove that the art gallery problem is $\exists \mathbb{R}$-complete~\cite{agp-exist-r-complete}, and almost all more restricted variants remain \NP-hard, e.g., for vertex guards, even in polygons without holes~\cite{LeeL86}, orthogonal polygons~\cite{SchuchardtH95}, or polyominoes~\cite{BiedlIIKM11}.
Stade~\cite{Stade25} shows that the problem remains $\exists \mathbb{R}$-complete even when only the polygon boundary shall be guarded.
In contrast, when each guard is required to cover exactly one contiguous portion of the boundary the problem can be solved in polynomial time~\cite{BiniazMMMOPRRSS25,deberg2025contiguousartgalleryproblem}. 
Concerning approximation algorithms, Eidenbenz, Stamm, and Widmayer~\cite{EidenbenzSW01} show that there exists no $o(\log(n))$-approximation for polygons with holes, unless \P=\NP. 
Bonnet and Miltzow~\cite{BonnetM17} provide a randomized $\mathcal{O}(\log(n))$-approximation algorithm assuming vertices at integer coordinates 
and strong general position. 
For a few polygon classes such as weakly-visible polygons~\cite{AshurFK21,AshurFKS22}, terrains~\cite{Ben-MosheKM07}, or monotone polygons~\cite{Nilsson05}, constant factor approximations or even {\sffamily PTAS}s are known. 
In the remainder, we focus on the most relevant work related to our paper, namely polyominoes and guarding with imprecision. 

\subparagraph{Polyominoes.}
Orthogonal polygons are generalizations of polyominoes, for which the worst-case bound of $\floor{\nicefrac{n}{3}}$ can be improved.
Kahn, Klawe, and Kleitman~\cite{kahn1983traditional} show that $\floor{\nicefrac{n}{4}}$ guards are always sufficient and sometimes necessary to guard an orthogonal polygon with $n$ vertices under straight-line visibility; 
O'Rourke~\cite{o1983alternate} gives a simpler proof. 

Restricting the input polygons  to polyominoes, Biedl, Irfan, Iwerks, Kim, and Mitchell~\cite{BiedlIIKM11} show that the problem remains \NP-hard for hole-free polyominoes with straight-line visibility. 
Furthermore, the same set of authors~\cite{agpForPolynominoes} prove that, in terms of the number $m$ of cells, $\floor{\nicefrac{(m+1)}{3}}$ guards are always sufficient and sometimes necessary to guard a polyomino, both with straight-line visibility and rectangular visibility.
Moreover, if entire cells can be chosen as guards (so-called \emph{pixel guards}), they provide a bound of $\floor{\nicefrac{3m}{11}}+1$ for both visibilities~\cite{agpForPolynominoes}. 
We highlight that pixel guards are closely related to but more powerful than \rguards{0}, as the visibility of a pixel guard is the union of the visibility regions of all points in the cell rather than the intersection.

In contrast, for hole-free orthogonal polygons under rectangular visibility, Worman and Keil~\cite{worman07} show that the problem becomes polynomial-time solvable. 
The algorithm computes a maximum independent set of a perfect graph as a subroutine and has a running time in~$\mathcal{\tilde O}(n^{17})$.
Motivated by the high running time, Lingas, Wasylewicz, and \.Zyli\'nski~\cite{3approx} provide a linear-time $3$-approximation algorithm for the same setting.
Biedl and Mehrabi~\cite{Biedl016,BiedlM21} prove \NP-completeness, even for thin orthogonal polygons possibly with holes and rectangular visibility. 
Furthermore, they provide a linear-time algorithm for thin orthogonal polygons without holes, and argue that the problem becomes linear-time solvable for hole-free orthogonal polygons with bounded treewidth. 
Their methods actually generalize to orthogonal polygons with $h$ holes and thickness $t$, i.e., the problem is fixed-parameter tractable in $h+t$.

Guarding polyominoes was also considered under the notion of $k$-hop visibility in which each guard's vision is limited to paths of length~$k$ in the dual graph of the polyomino~\cite{k-hop-visibility-polyominoes-algorithmica}.
With this notion, guards may be able to guard points that lie behind corners.
The problem is \NP-complete for any fixed~$k$, even for thin polyominoes; on the positive side, for hole-free polyominoes that contain no induced $3\times 3$-square, there is a linear-time $4$-approximation.

\subparagraph{Guarding with imprecisions.}
Guarding problems with imprecision have appeared in different contexts.
Dobbins, Kleist, Miltzow, and Rz{\k{a}}{\.z}ewski \cite{areaUniv} pose three interesting open problems concerning the computational complexity of art gallery problems related to imprecision, either concerning the guard positions or the given polygon. 
In \textsc{Robust Guarding}, we are given a hole-free polygon $P$ and a number $k$. 
We place unit disks fully contained
inside the polygon, such that no matter how we choose to place a guard in each disk we get a valid guard set. 
Our setting is closely related.

The notion of \emph{$\alpha$-robust guarding} was recently introduced by Das, Filtser, Katz, and Mitchell~\cite{robustly-guarding-polygons-socg}. 
In particular, for any $\alpha > 0$, a guard $g$ \emph{$\alpha$-robustly guards} a point~$p$ if each segment between $p$ and a point $g'$ of distance at most $\alpha\lVert g-p\rVert$ from $g$ is contained in the polygon.
A crucial distinction is that, in their model, the guard’s permissible movement is governed by its distance from the points it must guard, whereas this is not the case in the concept we consider.
They show that the art gallery problem remains \APX-hard under this notion of visibility. 
Complementarily, they provide $\mathcal{O}(1)$-approximation algorithms, where the approximation factor depends on the robustness parameter $\alpha$. 

Another notion of robustness concerns the number of substantially different directions from which each point can be viewed; 
specifically, a point is visible only when it lies in a triangle spanned by guards~\cite{EfratHM05}.
The authors show that it is \NP-hard to decide whether~$k$ guards suffice to \emph{triangle-guard} a given subset of a polygon; 
moreover, they present a polynomial-time $\mathcal{O}(k\log k)$-approximation when guard positions are restricted to an imposed grid on $P$, where $k$ denotes the size of a smallest guard set on the grid.

A few works on terrain guarding also consider variants with imprecise guards or inputs.
Keikha, Löffler, Saumell, and Valtr~\cite{keikha2025guarding} find a placement of guards on a given subset of a $1.5$D terrain such that the total length of the visible part is maximized. 
McCoy, Zhu, and Dutt~\cite{mccoy2023guarding} study the problem of computing a shortest watchtower on an imprecise terrain, where each vertex lies on a given vertical interval. 
When the location of a watchtower is fixed, the problem in 2D and 3D can be solved with linear programming, which leads to an additive $\varepsilon$-approximation for the general problem. 

\section{Concepts}
\label{sec:preliminaries}
We now introduce the fundamental notions. 
To this end, consider the plane with a partition into unit squares, called \emph{cells}, by cutting with horizontal and vertical lines through the integer grid. 
The \emph{dual graph} of a set of cells has a vertex for each cell and an edge between any two cells sharing a side.
A subregion consisting of a union of cells is a \emph{polyomino} if its dual graph is connected; see, e.g.,~\cref{fig:prelims}. 
Clearly, the vertex coordinates of a polyomino are integers, and vertices are either \emph{convex} (90° interior angle) or \emph{reflex}~(270°). 
We denote edges between two consecutive convex vertices as \emph{\outerdents}.
The cells not contained in a polyomino $P$ are \emph{outer cells} (of $P$). 
A polyomino is \emph{hole-free} if its complement is connected.

\begin{figure}[htb]
	\centering
	\captionsetup[subfigure]{justification=centering}
	\begin{subfigure}{0.5\columnwidth}%
		\centering
		\includegraphics[page=1]{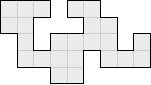}
		\subcaption{}
		\label{fig:prelims-a}
	\end{subfigure}\hfil%
	\begin{subfigure}{0.5\columnwidth}%
		\centering
		\includegraphics[page=2]{prelims}
		\subcaption{}
		\label{fig:prelims-b}
	\end{subfigure}%
	\caption{
		(a) A hole-free polyomino, and 
		(b) an upwards histogram with three (red) top \outerdents and its (hatched) base.}
	\label{fig:prelims}
\end{figure}
We denote the location of a cell by the $x$- and $y$-coordinates of its bottom-left corner.
In a polyomino $P$, a (vertical/horizontal) \emph{strip} is a rectangle of width or height 1, i.e., a $(1\times h)$ or $(w\times 1)$-subpolyomino of $P$. 
A \emph{histogram} is a polyomino that (can be rotated so that it) consists of vertical strips aligned on a horizontal \emph{base strip}, see \cref{fig:prelims}. 
The necessary rotation determines the orientation of the histogram. 
When speaking about a histogram, we usually assume that its base is at the bottom side; 
we also say the histogram is \emph{upwards}. 
A crucial property of a histogram~$H$ is that if a cell $c$ is contained in $H$ so are all cells between~$c$ and the base strip.

\subparagraph{Visibility and neighborhoods.}
We start by defining rectangular visibility also known as r-visibility.
For two points $p$ and $q$ in the plane, let $\rect(p,q)$ denote the smallest axis-aligned rectangle containing them, i.e., their axis-aligned bounding box. 
Specifically, the \emph{visibility region} of a point $x$ in a polyomino $P$ consists of all points $p \in P$ such that $\rect(x,p) \subseteq P$. 

Moreover, we aim to model the situation where a guard does not necessarily stay at its predefined position but may move to any point in a cell at distance at most $k$; its visibility region contains the points that are seen from all possible positions.
Note that the \vregions for all interior points of a cell are equivalent; 
only for boundary points the \vregion might increase. 
However, by the slightest movement from the boundary inside the cell, the enhanced visibility vanishes.
Therefore, without loss of generality we place guards on the midpoints of cells;
these positions can be identified with vertices in the dual graph.

For a \guard{k} $g$ in $P$, we denote the \emph{$k$-neighborhood} of $g$ by $N_k(g)$, i.e., $N_k(g)$ contains all cells in $P$ at distance  at most $k$ from $g$ in the dual graph.
Now, $\viska{0}{g}$ denotes the set of all points $p$ in $P$ such that $\rect(p,g) \subseteq P$ and $\viska{k}{g}:=\bigcap_{c\in N_{k}(g)}\viska{0}{c}$. 
For an illustration, see \cref{fig:intro}. 

\begin{figure}[htb]
	\centering
	\includegraphics[page=2,scale=1.1]{intro_scaled_125}
	\caption{Illustration for \guarding{k} with $k=4$: 
		The neighborhood of a \rguard{k}~$g$ is highlighted in gray.
		Because both cells $c$ and $c'$ are in the $k$-neighborhood of $g$, the $k$-\vregion $\viska{k}{g}$ of $g$ is (a subset of) the intersection of $\viska{0}{c}$ and $\viska{0}{c'}$, which are indicated in red and cyan, respectively.}
	\label{fig:intro}
\end{figure}

Note that $\viska{0}{g}$ is itself a polyomino. 
For simplicity, we say that a \rguard{k} $g$ \emph{guards} or \emph{sees} a point $p$  or subpolyomino $P'$ of $P$, exactly if $p \in \viska{k}{g}$, or $P' \subseteq \viska{k}{g}$, respectively. 
We remark that, for two cells/guards $c$ and $c'$, we usually illustrate $\rect(c,c')$ as the axis-aligned bounding box of the cells rather than their midpoints. 

Finally, we note some basic properties. 
Recall that a set is \emph{orthogonally convex} if the intersection with any axis-aligned line is connected.
\begin{observation}\label{obs:basicproperties}
	The following properties hold for a cell $c$ and a guard $g$ in a polyomino $P$:
	\begin{itemize}
		\item $\viska{k}{g}\subseteq \viska{0}{g}$
		\item symmetry: $c\in \viska{0}{g}\iff g\in \viska{0}{c}$
		\item $\viska{0}{g}$ is connected, orthogonally convex, and hole-free.
	\end{itemize}
\end{observation}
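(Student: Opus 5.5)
The three items are proved one at a time, directly from the definitions $\viska{k}{g}=\bigcap_{c\in N_k(g)}\viska{0}{c}$ and $\viska{0}{g}=\{p\in P : \rect(p,g)\subseteq P\}$.

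For the first item, recall that $N_k(g)$ is the set of cells at dual-graph distance at most $k$ from $g$, so it contains $g$ itself (distance $0$). Hence $\viska{0}{g}$ is one of the sets in the intersection defining $\viska{k}{g}$, and the inclusion $\viska{k}{g}\subseteq\viska{0}{g}$ follows. Symmetry is immediate because $\rect(p,q)$ is the bounding box of the two points and so $\rect(c,g)=\rect(g,c)$. Here cells are identified with their midpoints, as fixed in the paper.

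For the third item, first show that $\viska{0}{g}$ is star-shaped from $g$ in a rectangular sense. If $p\in\viska{0}{g}$, then every point $q\in\rect(p,g)$ satisfies $\rect(q,g)\subseteq\rect(p,g)\subseteq P$, so $q\in\viska{0}{g}$. Connectivity follows at once, since each point $p$ is joined to $g$ through $\rect(p,g)$.

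For orthogonal convexity, take two points $p,p'\in\viska{0}{g}$ on a common horizontal line $y=y_0$ and a point $q$ between them on that line. I would compare the $x$-coordinates with that of $g$.
\begin{itemize}
\item If $q$ lies between $p$ (or $p'$) and $g$ in $x$, then $q\in\rect(p,g)$ (respectively $\rect(p',g)$), because its $y$-coordinate $y_0$ equals that of $p$. The star property then gives $q\in\viska{0}{g}$.
\item Otherwise $g_x$ lies strictly between $p_x$ and $p'_x$. Then $\rect(q,g)$ lies in $\rect(p,g)\cup\rect(p',g)$, since it spans between $g_x$ and $q_x$, which lies on one side, with the same vertical range $[\min(y_0,g_y),\max(y_0,g_y)]$.
\end{itemize}
Vertical lines are handled symmetrically.

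Hole-freeness is the step I expect to be the main obstacle. A bounded complementary component would be surrounded by $\viska{0}{g}$ and would in particular meet some axis-parallel line between two points of $\viska{0}{g}$, contradicting orthogonal convexity. To make this precise, I take a point $z\notin\viska{0}{g}$ inside such a component and shoot a horizontal ray in each direction from $z$. If both rays hit $\viska{0}{g}$, orthogonal convexity forces $z\in\viska{0}{g}$, a contradiction. So one ray misses $\viska{0}{g}$ entirely, and it connects $z$ to the unbounded region. Thus the complement has no bounded component, and $\viska{0}{g}$ is hole-free.

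Finally, the remark in the statement that $\viska{0}{g}$ is itself a polyomino needs one further check. The rectangle condition is invariant for points within the same open cell, so $\viska{0}{g}$ is essentially a union of cells up to boundary effects. Its connectivity as a union of cells then follows from the star property applied to cell midpoints.
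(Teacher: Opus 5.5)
Your proof is correct and follows the paper's argument: the first two items come straight from the definitions, connectivity from $\rect(p,g)\subseteq\viska{0}{g}$, orthogonal convexity from the fact that $\rect(g,p)\cup\rect(g,p')$ covers everything between $p$ and $p'$, and hole-freeness from orthogonal convexity. You give more detail than the paper: your ray argument spells out the ``orthogonally convex, hence hole-free'' step that the paper leaves implicit. Your second convexity case is actually vacuous, since any $q$ between $p$ and $p'$ already lies in $x$ between $g$ and one of them, so the first case covers everything.
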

\begin{proof}
	The first statement follows directly from the definition and the fact that $g$ is contained in its $k$-neighborhood. 
	Symmetry follows from the fact that $\rect(g, c) \subseteq P$ in both cases.
	For connectedness, note that every point~$p$ seen by $g$ satisfies $\rect(g, p) \subseteq \viska{0}{g}$, connecting $p$ to $g$.
	To see orthogonal convexity, let two cells $c, c' \in P$ share a coordinate, and let $g$ see both. 
	Then $\rect(g,c)$ and $\rect(g,c')$ are contained in $\viska{0}{g}$, and together contain every cell between $c$ and $c'$. Thus, $\viska{0}{g}$ is orthogonally convex, and hence hole-free.
\end{proof}

\section{Guardability}
\label{sec:first-observations}

We start with the crucial question of whether a polyomino can be \guarded{k}.

\begin{proposition}
	\label{prop:existence}
	While for $k\in\{0,1\}$, every polyomino is \guard{k}{able}, for every $k\geq 2$, there exists a polyomino that is not \guard{k}{able}.
\end{proposition}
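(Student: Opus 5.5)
For the positive part, I will place a \rguard{k} on every cell of $P$ and show that every cell $c$ is seen by at least one guard. For $k=0$ this is immediate, since the guard placed on $c$ itself satisfies $\rect(c,c)=c\subseteq P$. For $k=1$, I will show that the guard placed on $c$ still sees $c$. Its neighborhood $N_1(c)$ consists of $c$ and the cells sharing a side with $c$. For any such neighbor $c'$, the set $\rect(c,c')$ is the union of the two adjacent cells $c$ and $c'$, which lies in $P$. Hence $c\in\viska{0}{c'}$ for all $c'\in N_1(c)$, and therefore $c\in\viska{1}{c}$. Guarding cells suffices to guard all points: a point on a shared side is seen together with one of its cells, because $\rect$ of the guard with a boundary point lies inside $\rect$ with an adjacent cell's interior point. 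So every polyomino is \guardable{k} for $k\in\{0,1\}$.

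For the negative part, fix $k\ge 2$. I need a polyomino containing a cell $c$ that lies in no $\viska{k}{g}$. By symmetry (\cref{obs:basicproperties}), $c\in\viska{k}{g}$ holds iff every cell of $N_k(g)$ lies in $\viska{0}{c}$, that is, iff $N_k(g)\subseteq\viska{0}{c}$. It therefore suffices to construct $c$ such that no ball $N_k(g)$ in $P$ is contained in $\viska{0}{c}$.

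The construction I would try is a plus-shaped polyomino with $c$ at the center and four arms of length $L\ge k$. The four cells diagonally adjacent to $c$ are removed, so every arm has width one. Then $\viska{0}{c}$ is exactly the cross, because any rectangle spanned by $c$ and an off-axis cell would contain a missing diagonal cell, and here no such cell exists. Now consider any $g$ with $N_k(g)\subseteq$ cross. If $g$ lies on an arm at distance $d\ge1$ from $c$, then either $d<k$, so $N_k(g)$ reaches through $c$ into the other arms, or $d\ge k$, so $N_k(g)$ is a segment of the arm. Both cases stay inside the cross, so this construction fails, because the plus is too thin.

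The fix is to thicken the arms so that $P$ contains cells outside $\viska{0}{c}$ near every cell of $\viska{0}{c}$, while $\viska{0}{c}$ itself stays width one. Concretely, I would take $c$ with two unit-width arms (left and up, say) of length $1$ each, attached to thick blocks where $\viska{0}{c}$ does not extend sideways. The idea is to ensure that every cell of $\viska{0}{c}$ is within distance $2$ of some cell of $P\setminus\viska{0}{c}$. A guard $g$ whose ball lies in $\viska{0}{c}$ must then be a cell of $\viska{0}{c}$ at distance more than $k\ge2$ from all of $P\setminus\viska{0}{c}$, which is impossible.

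A simpler candidate is an L-shaped or staircase region. Take $c$ at a corner of a $2\times2$ block with one cell removed, and surround it by a larger body so that $\viska{0}{c}$ consists of $c$ and short width-one strips, each blocked at distance $1$ by a reflex vertex. Verifying that $\viska{0}{c}$ has every cell within distance $2$ of its complement in $P$ is the main obstacle, and I would settle it with an explicit small picture.
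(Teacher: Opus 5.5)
\textbf{The positive part is fine; the negative part is never proved.} For $k\le 1$ your argument matches the paper's: a guard placed on a cell sees that cell, since $\rect(c,c')=c\cup c'\subseteq P$ for every side-neighbor $c'$. Your reduction for the negative part is also correct. A cell $c$ is seen by a guard $g$ iff $N_k(g)\subseteq\viska{0}{c}$. So it suffices that every cell of $\viska{0}{c}$ lies within dual distance $2$ of some cell of $P\setminus\viska{0}{c}$.

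The gap is that you never exhibit a polyomino meeting this criterion. You leave that step open (``I would settle it with an explicit small picture''), and it is the entire content of the statement for $k\ge 2$.

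The one construction you describe concretely also fails for your chosen witness $c$. Suppose $c$ has only a left neighbor $c_\ell$ and an upper neighbor $c_u$, each continued straight on into a thick block, with the diagonal cells absent. Then $N_2(c)$ consists of $c$, $c_\ell$, $c_u$, and the next cell straight left and straight up. All of these lie in $\viska{0}{c}$, so a \rguard{2} placed at $c$ sees $c$.

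\textbf{The missing idea is to make the polyomino small, not large.} Take the L-tromino with elbow $c$ and end cells $c',c''$. Its dual graph is a path of diameter $2$, so for every $k\ge 2$ each $N_k(g)$ is all of $P$. But $\viska{0}{c'}=\{c,c'\}$ misses $c''$, because the rectangle spanned by $c'$ and $c''$ contains the missing fourth cell of the $2\times 2$ square. Hence $\viska{k}{g}\subseteq\viska{0}{c'}\cap\viska{0}{c''}=\{c\}$ for every $g$, and neither end cell is ever seen. This is the paper's example.

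Your own distance-$2$ criterion verifies it at once, provided the witness is an end cell rather than the elbow; the elbow sees the whole tromino. Your ``$2\times 2$ block with one cell removed'' candidate is exactly this polyomino, and no surrounding body is needed.
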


\begin{proof}
	For $k\leq 1$, we observe that a \guard{k} always sees the cell that it is placed on. 
	Therefore, by placing one guard in each cell, every polyomino is \guard{k}{able}.
	
	However, for all $k\geq 2$, there exist hole-free polyominoes that are not \guardable{k}. 
	To this end, consider the polyomino consisting of $3$ cells, depicted in \cref{fig:2-restless-guards-with-no-guarding}. 
	As the $k$-neighborhood of every guard $g$ contains $c'$ and $c''$, it follows that $\viska{k}{g}=\{c\}$. \cref{fig:2-restless-guards-with-no-guarding} also illustrates additional examples of non-guardable polyominoes.
\end{proof}
\begin{figure}[htb]
	\centering
	\includegraphics[scale=1.1]{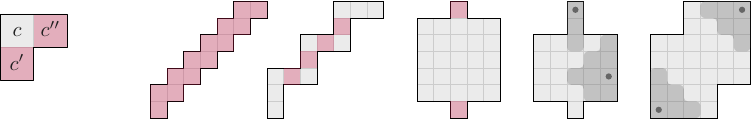}
	\caption{Examples for \guarding{2}: 
		Cells that cannot be guarded are marked in red; 
		in the guardable polyominoes, possible guard sets are indicated.} 
	\label{fig:2-restless-guards-with-no-guarding}
\end{figure}

The fact that not all polyominoes are \guardable{k}, as established in \cref{prop:existence}, naturally raises the question of the computational complexity of testing this property. 
To~address this, we present a polynomial-time algorithm.

\begin{restatable}[{\thmstar{prop:existencePolytime}{proof:existencePolytime}}]{proposition}{existencePolytime}
	\label{prop:existencePolytime}
	There exists an algorithm to decide for a given polyomino~$P$ and an {integer}~$k$ whether $P$ is \guardable{k} in time polynomial in the number of vertices of~$P$.
\end{restatable}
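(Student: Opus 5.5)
The plan is to reduce guardability to a local condition for each cell, and then to check that condition for only polynomially many representative cells. By the symmetry in \cref{obs:basicproperties}, a guard $g$ sees a cell $c$ under \guarding{k} exactly if every cell $c'\in N_k(g)$ satisfies $c'\in\viska{0}{c}$, that is, $N_k(g)\subseteq \viska{0}{c}$. Placing guards wherever possible does not hurt, so $P$ is \guardable{k} if and only if
\[
\text{for every cell } c \text{ of } P \text{ there is a cell } g \text{ with } N_k(g)\subseteq V_c:=\viska{0}{c}.
\]
Since $V_c$ depends only on $c$, we can test each cell separately. The difficulty is that $P$ may have exponentially many cells in terms of $n$, so we must group cells.

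First, I would cut $P$ by all horizontal and vertical lines through vertex coordinates. This gives $\bO(n^2)$ rectangular blocks. I claim that $V_c$ is the same for all cells $c$ in one block. The reason is that whether $\rect(c,p)\subseteq P$ holds depends only on which vertex lines separate $c$ from $p$, and within a block this is determined. Each such region $V$ is orthogonally convex and hole-free by \cref{obs:basicproperties}. It has $\bO(n)$ vertices and can be computed in polynomial time, for instance by ray-shooting from the block in the four quadrant directions. So it suffices to handle $\bO(n^2)$ regions $V$.

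Fix such a region $V$. We need to decide whether some cell $g\in V$ has dual distance greater than $k$ from $P\setminus V$. Any dual path from $g$ to a cell outside $V$ stays inside $V$ until it first crosses a \emph{window}. A window is a part of the boundary of $V$ lying in the interior of $P$. Therefore the distance from $g$ to $P\setminus V$ equals $1$ plus the dual distance within $V$ from $g$ to a cell of $V$ adjacent to a window. Now $V$ is a connected orthogonally convex polyomino, so its dual-graph distance coincides with the $L_1$ distance between cells. This holds because a monotone staircase path exists between any two cells of $V$, which one can argue by orthogonal convexity. The windows form $\bO(n)$ axis-parallel segments. The set of cells of $V$ at $L_1$ distance at most $k-1$ from the cells along a window segment is an octagonal region intersected with $V$, bounded by $\bO(1)$ lines of slope $0,\infty,\pm1$. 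Thus the condition ``$V$ is not covered by the union of these $\bO(n)$ regions'' can be tested by building their arrangement together with $V$. This has polynomial complexity, and for each face we check whether it contains an integer cell; the faces are simple, so this is easy. If $V$ has no window at all, then $V=P$ and any cell works.

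The step I expect to be the main obstacle is making the two structural claims rigorous. The first is the exact invariance of $V_c$ within a block, where boundary and degenerate cases at vertex lines need care. The second is the equality of the dual distance with the $L_1$ distance inside an orthogonally convex polyomino, together with the correct handling of window cells, including the off-by-one in the definition of $N_k$. I would prove the distance claim first, by induction on the $L_1$ distance using orthogonal convexity to step toward the target. The remaining steps are routine polynomial-size arrangement computations.
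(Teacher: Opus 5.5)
Your proposal is correct and follows the paper's proof essentially step by step. Both use the symmetry-based characterization that $c$ is guardable iff $\viska{0}{c}$ contains some $N_k(g)$, a partition into $\bO(n^2)$ rectangles on which $\viska{0}{\cdot}$ is constant, and the removal of all cells within Manhattan distance $k$ of the exits (your windows) by cuts of slopes $0,\infty,\pm 1$, justified by the orthogonal convexity of $\viska{0}{c}$. The differences are cosmetic: you take the full grid through all vertex coordinates instead of rays from reflex vertices, which is a refinement and makes the invariance claim easier, and you test the complement of one global arrangement of distance regions instead of cutting each rectangle of the partition separately.
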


\begin{proof}[Proof sketch]
	Clearly, $P$ is \guardable{k} if and only if all its cells are. Moreover, it is easy to see that
	a cell $c$ is \guardable{k} if and only if $\viska{0}{c}$ contains the \mbox{$k$-neighborhood} of some guard in $P$.
	Hence, two important steps remain. 
	First, as the number of cells might be exponential in the encoded size of $P$, it is crucial to bound the number of interesting cells. 
	Second, we need to efficiently check whether a visibility polyomino contains a \mbox{$k$-neighborhood}; 
	in particular, we do that in a running time independent of $k$.
	
	To this end, we show that there exists a partition $R$ of $P$ into at most $n^2$ rectangles such that any two cells in the same rectangle are either both guardable or not. 
	Specifically, $R$ is obtained by shooting a horizontal and vertical ray at each reflex corner, see~\cref{fig:partition0main}.
		\begin{figure}[htb]
		\centering
			\centering
			\includegraphics[page=2,scale=1.1]{bruteForce.pdf}
			\label{fig:partition0-amain}
		\caption{
			The partition $R$ of $P$ is obtained by shooting two (red) rays from each reflex vertex.}
		\label{fig:partition0main}
	\end{figure}
	
	The proposition then follows by showing and combining the following three properties.
	
	\begin{enumerate}[(i)]
		\item \label{item:Pi} For any two cells $c$ and $c'$ in a rectangle in $R$ it holds that: $\viska{0}{c}=\viska{0}{c'}$.
		\item \label{item:Pii} The visibility region $\viska{k}{g}$ of any \rguard{k} $g$ is the union of rectangles from $R$.
		\item \label{item:Piii} There exists a polynomial-time algorithm to check whether the visibility region $\viska{0}{c}$ of a cell $c$ contains the $k$-neighborhood of some guard $g$, and to identify such a guard.\qedhere
	\end{enumerate}

\end{proof}

As the described properties allow us to identify a \rguard{k} for every guardable rectangle in~$R$, \cref{prop:existencePolytime} immediately yields a valid guard set for guardable instances.

\medskip
With techniques similar to the ones used in the proof of \cref{prop:existencePolytime}, one can test whether two cells can be guarded by the same guard. 
This property is crucial for our~algorithms. 

\begin{restatable}[{\thmstar{lem:guardable}{proof:TwoGuardable}}]{lemma}{TwoGuardable}
	\label{lem:guardable}
	For any two given cells $c$ and $c'$ of a polyomino, there exists a polynomial-time algorithm to identify a \rguard{k} that sees both $c$ and $c'$, or decide that no such guard exists.
\end{restatable}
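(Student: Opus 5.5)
We reduce the statement to the machinery behind \cref{prop:existencePolytime}. A \rguard{k} $g$ sees a cell $c$ exactly if $c\in\viska{0}{x}$ for every $x\in N_k(g)$. By the symmetry in \cref{obs:basicproperties}, this is equivalent to $N_k(g)\subseteq \viska{0}{c}$. Hence $g$ sees both $c$ and $c'$ if and only if
\[
N_k(g)\subseteq Q:=\viska{0}{c}\cap\viska{0}{c'} .
\]
So the task is to decide whether the region $Q$ contains the complete $k$-neighborhood (taken in $P$) of some cell, and to output such a cell.

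First, we compute $Q$ explicitly. Both $\viska{0}{c}$ and $\viska{0}{c'}$ are orthogonally convex polyominoes, and by property~(\ref{item:Pi}) each is a union of rectangles of the partition $R$. Their intersection is therefore again a union of $\bigO(n^2)$ rectangles of $R$, computable in polynomial time. It may be disconnected, but it has polynomially many vertices.

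Second, we search for a guard. Since $N_k(g)$ is connected and contains $g$, only the component of $Q$ containing $g$ matters, so we may treat each component $Q_i$ separately. For a cell $g\in Q_i$, the condition $N_k(g)\subseteq Q_i$ says that no cell of $P\setminus Q_i$ is within dual distance $k$ of $g$. Equivalently, every cell of $P$ adjacent to $Q_i$ but outside it, which we call a boundary-exit cell, has distance greater than $k$ from $g$, where the distance is measured through $P$.

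The key claim is that the dual distance from $g$ to the nearest exit cell is governed by the partition. Between two cells, a path in $P$ has length equal to the $L_1$ distance whenever a monotone staircase connects them. In general, the distance is the shortest path in the grid graph of $P$, and this can be computed on a compressed graph whose nodes are the rectangles of $R$ (or of a refinement by the grid lines through all vertices, which has polynomial size). Within each rectangle, the distance to a fixed exit set is a piecewise-linear function, namely a minimum of $L_1$-type terms. Its maximum over the rectangle is therefore attained at a corner or at a point determined by polynomially many candidate coordinates.

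This gives the algorithm. For each rectangle $r\subseteq Q_i$, we compute the minimum over exit cells $e$ of the distance $\min_{g\in r} d(g,e)$, and we look for a cell $g$ in $r$ where $\min_e d(g,e)>k$. Because this is the maximization of a minimum of piecewise-linear functions with polynomially many pieces, we enumerate the breakpoint candidates and test each one. Each test costs polynomial time, independent of $k$, since $k$ enters only through the final comparison.

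The main obstacle is exactly this distance computation: the dual distance inside $P$ is not simply $L_1$ because of reflex corners. To handle it, we would reuse the structural argument behind item~(\ref{item:Piii}) of \cref{prop:existencePolytime}, which already checks whether a single region $\viska{0}{c}$ contains some $k$-neighborhood. The only change is that the region is now $Q$ instead of $\viska{0}{c}$. Since that procedure relies only on $Q$ being a union of rectangles of $R$ with polynomial complexity, it should apply verbatim, and it also returns the witness guard $g$.
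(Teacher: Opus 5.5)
Your reduction matches the paper's: $g$ sees $c$ and $c'$ iff $N_k(g)\subseteq Q=\viska{0}{c}\cap\viska{0}{c'}$, $Q$ is a union of rectangles of $R$ (this is property~(\ref{item:Pii}), not~(\ref{item:Pi})), and each component of $Q$ is handled separately.

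The gap is at the step you yourself flag as the main obstacle. You justify reusing the procedure of \cref{clm:checkNeighborhood} by saying it ``relies only on $Q$ being a union of rectangles of $R$ with polynomial complexity.'' That is not what it relies on. Its horizontal, vertical and diagonal cuts are correct only because dual distance inside the region equals Manhattan distance. The paper derives this from $\viska{0}{c}$ being connected and orthogonally convex.

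For a general union of rectangles of $R$ this fails. Take a U-shaped region whose notch consists of outer cells of $P$. An exit at the tip of one arm is close in Manhattan distance to cells at the tip of the other arm, but far in the dual graph. The Manhattan cuts would then discard valid guard positions and wrongly report that no guard exists. Your fallback (shortest paths on a compressed graph, piecewise-linear distance functions, enumerating breakpoints) is only asserted. The number of pieces, the candidate set, and integrality of the maximizer would all need proofs.

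The missing idea is one line, and you already had the ingredient when you noted that both visibility regions are orthogonally convex. The intersection of $Q$ with any row or column is the intersection of two intervals, hence an interval, and so lies in a single component. Therefore every component $Q_i$ is itself connected and orthogonally convex, and dual distances within $Q_i$ are Manhattan distances.

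Your concern that distances in $P$ are not $L_1$ because of reflex corners is then beside the point. A shortest path from $g\in Q_i$ to a cell of $P\setminus Q_i$ stays in $Q_i$ until it crosses an exit. Hence $N_k(g)\subseteq Q_i$ holds iff every cell of $Q_i$ incident to an exit has Manhattan distance at least $k$ from $g$. This is exactly the situation of \cref{clm:checkNeighborhood}, which then applies unchanged to each $Q_i$; that is the paper's proof.
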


\section{Tight worst-case bounds}

\label{sec:worstCaseBounds}
We present tight worst-case bounds for the cases when \rguard{k} sets exist for all polyominoes. 
We begin with \rguards{0}; this setting is closely related to the problem of guarding polyominoes with general point guards. 
However, \rguards{0} are more restricted, as they must lie within a single cell. 
As a consequence, two \rguards{0} might be required to achieve the visibility of one point guard, as illustrated in \cref{fig:0-guards-point-guards}. 
Therefore, we derive combinatorial bounds that differ from the ones for polyominoes by Biedl et al.~\cite{agpForPolynominoes}. 

\begin{figure}[htb]
	\centering
	\includegraphics[scale=1.1]{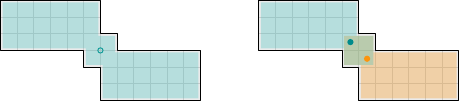}
	\caption{As point guards can be placed on cell borders, it may require two \rguards{0} to achieve the visibility of a single point guard.
	}
	\label{fig:0-guards-point-guards}
\end{figure}

\begin{restatable}[{\thmstar{prop:0-restless-guards-can-guard-polyominoes}{proof:zerorestlessnecessarysufficient}}]{proposition}{zerorestlessnecessarysufficient}
	\label{prop:0-restless-guards-can-guard-polyominoes}
	For a hole-free polyomino $P$ with $n$ vertices and $m\geq 2$ cells, $\floor{\nicefrac{n}{4}}$ as well as $\floor{\nicefrac{m}{2}}$ \rguards{0} are always sufficient and sometimes necessary. 
\end{restatable}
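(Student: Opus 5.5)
The sufficiency of $\floor{\nicefrac{n}{4}}$ follows from a partition argument, and the cell bound is derived from it. By the O'Rourke/Lubiw version of the Kahn--Klawe--Kleitman theorem~\cite{o1983alternate}, every hole-free orthogonal polygon with $n$ vertices can be partitioned into at most $\floor{\nicefrac{n}{4}}$ L-shaped pieces, where rectangles count as degenerate L-shapes. In that proof all cuts are axis-parallel segments extended from reflex vertices of $P$. If $P$ is a polyomino, all pieces therefore again have integer vertices, so every piece is a union of cells. Each piece $L$ is the union of two rectangles $R_1, R_2$ that overlap in at least one cell $c$. A \rguard{0} placed in $c$ sees all of $R_1$ and all of $R_2$ under rectangular visibility: for $p\in R_i$ we have $\rect(c,p)\subseteq R_i\subseteq P$. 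Since $\viska{0}{c}=\viska{k}{c}$ for $k=0$, one guard per piece suffices, which gives $\floor{\nicefrac{n}{4}}$ guards. The step I must double-check here is that the partition really uses only cuts on lines through reflex vertices, so that no piece has half-integral corners. If it does not, I would instead rerun O'Rourke's induction directly on polyominoes.

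For the bound $\floor{\nicefrac{m}{2}}$, I will show $n\le 2m$ for every hole-free polyomino with $m\ge 2$; then $\floor{\nicefrac{n}{4}}\le\floor{\nicefrac{m}{2}}$. The perimeter of a polyomino with $m$ cells is at most $2m+2$, with equality exactly when the dual graph is a tree. Moreover $n$ is at most the perimeter, because every vertex ends a boundary edge of length at least $1$. Both quantities are even, so it suffices to show that $n$ is strictly less than the perimeter. That holds as soon as some boundary edge has length at least $2$, or the perimeter is already at most $2m$. So the remaining claim is this: if the dual graph is a tree and $m\ge 2$, some boundary edge has length at least $2$. I would take an extremal cell, for instance the leftmost cell $c$ in the topmost row. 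If its top edge is not a unit edge we are done; otherwise $c$ hangs on a neighbor $d$ below it, and I examine the left and right sides of $c$ and $d$. If $d$ has no neighbor on one of these sides, the corresponding sides of $c$ and $d$ merge into a boundary edge of length $2$. I expect this case analysis to be the main obstacle, since it must use the tree structure, or hole-freeness, to exclude configurations in which every boundary edge has unit length. If the local argument loops, I would instead count turns: a boundary consisting only of unit edges that turn at every vertex forces a staircase-like dual that must contain a $2\times 2$ block, which contradicts the tree assumption.

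For necessity of $\floor{\nicefrac{n}{4}}$, I use the standard comb with $t$ teeth, which has $n=4t$ vertices. With thin teeth separated by gaps, no cell sees the tips of two different teeth, because the rectangle spanned by two such tips contains a gap cell; hence $t$ guards are needed. For necessity of $\floor{\nicefrac{m}{2}}$, I build a family in which every cell sees only cells of its own private pair. The base case is the S-tetromino with cells $(0,0),(1,0),(1,1),(2,1)$: no cell sees both end cells, so it needs $2=\nicefrac{m}{2}$ guards. The general family will glue S-shaped pieces onto a spine so that the two end cells of each piece can only be seen from inside that piece, analogously to the teeth of the comb. I will then verify by checking every relevant pair of cells against $\rect$ that no single guard sees private cells of two different pieces. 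Finding a gluing that keeps $n=2m$ and this privacy property is the second delicate step of the proof.
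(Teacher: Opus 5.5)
Your argument for $\lfloor n/4\rfloor$ sufficiency is the paper's. The paper cites Gy\"ori's partition into at most $\lfloor n/4\rfloor$ rectangles and L-shapes and places one guard per piece, in the cell at the corner of each L. Your check that the pieces are unions of cells is a point the paper leaves implicit.

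The derivation of the cell bound, however, fails. The inequality $n\le 2m$ is false, and so is your key claim that a hole-free polyomino whose dual graph is a tree has a boundary edge of length at least $2$. The plus-shaped pentomino is a counterexample. Its dual graph is the star $K_{1,4}$, and all twelve boundary edges have unit length, so the boundary turns at every lattice point. Yet it contains no $2\times 2$ block, so neither your local case analysis nor your turn-counting fallback can be completed. Here $n=12=2m+2$, and $\lfloor n/4\rfloor=3>2=\lfloor m/2\rfloor$, so even with the floors the vertex bound does not imply the cell bound. More generally, any tree-shaped polyomino with only unit boundary edges has $n=2m+2$, and for odd $m$ the vertex bound then exceeds $\lfloor m/2\rfloor$ by one.

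The missing idea is that the cell bound needs no detour through $n$. A $0$-restless guard sees its own cell and every edge-adjacent cell, because the rectangle spanned by the guard and any point of an adjacent cell lies in the union of the two cells. The dual graph is connected and bipartite, and since $m\ge 2$ every cell has a neighbour of the opposite colour. Hence each colour class sees the whole polyomino, and guarding the smaller class uses at most $\lfloor m/2\rfloor$ guards. This is exactly the paper's argument.

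For necessity, you also leave the family for the cell bound unconstructed. The paper uses one family for both bounds: a horizontal strip $b_1,\dots,b_t$ with a cell $a_i$ attached above $b_i$ for odd $i$ and below $b_i$ for even $i$. Every rectangle from $a_i$ to a cell outside its column contains one of the missing cells beside $a_i$. So $a_i$ is seen only from $a_i$ or $b_i$, which forces $t$ guards with $m=2t$ and $n=4t$. Appending one cell to the end of the strip gives odd $m$ with $n=4t+2$ and the same lower bound. Your S-tetromino is the case $t=2$ of this family, so the ``gluing'' you were looking for is simply the alternating strip, and the separate comb becomes unnecessary.
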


\begin{proof}[Proof sketch]
	\cref{fig:necessary-0} depicts a class of polyominoes establishing the lower bound for both~$n$ and $m$. 
	
	\begin{figure}[htb]
		\centering
		\includegraphics[page=2,scale=1.2]{necessary_0}
		\caption{A class of polyominoes requiring $\floor{\nicefrac{n}{4}}$ and $\floor{\nicefrac{m}{2}}$ \rguards{0}.}
		\label{fig:necessary-0}
	\end{figure}

	It is easy to see that each cell on the topmost and bottommost row requires a distinct \rguard{0}. 
	For sufficiency in terms of $n$, we apply a result of Gy\"ori~\cite{orthoDecomposition} to partition the polyomino into at most $\floor{\nicefrac{n}{4}}$ simple shapes that each can be guarded by a single \rguard{0}. 
	Lastly, $\floor{\nicefrac{m}{2}}$ guards suffice because each \rguard{0} sees its own and all neighboring cells. Therefore, we can color the dual graph of $P$ with two colors and choose the smaller set as our \rguard{0} set.
\end{proof}

\medskip

For \rguards{1}, we obtain the following bounds.

\begin{restatable}[{\thmstar{prop:1-restless-guards-can-guard-polyominoes}{proof:onerestlessnecessarysufficient}}]{proposition}{onerestlessnecessarysufficient}
	\label{prop:1-restless-guards-can-guard-polyominoes}
	For a hole-free polyomino $P$ with $n$ vertices and $m$ cells, $\nicefrac{(n-2)}{2}$ as well as $(m-2)$ \rguards{1}, for $m\geq 4$, are always sufficient and sometimes necessary.
\end{restatable}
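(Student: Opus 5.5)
The two bounds call for different arguments, so I would prove sufficiency in $m$, sufficiency in $n$, and then give lower-bound constructions. Everything rests on one local fact about $\viska{1}{g}$. A \rguard{1} $g$ always sees its own cell, because every neighbor of $g$ is adjacent to it. A cell $d$ is seen by $g$ if and only if $\rect(c,d)\subseteq P$ for every $c\in N_1(g)$. In particular, two neighbors $c',c''$ of $g$ that lie on opposite sides of $g$ see each other through a $1\times 3$ strip. Two neighbors on perpendicular sides see each other only if the fourth cell of the $2\times 2$ square spanned by $g,c',c''$ belongs to~$P$.

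\textbf{Sufficiency in $m$.} The baseline, as in the proof of \cref{prop:existence}, is one guard per cell. It therefore suffices to find a single \rguard{1} that sees at least three cells; placing singleton guards on all remaining cells then gives $1+(m-3)=m-2$ guards. I would find such a guard by a case analysis on the dual graph, which is connected and has $m\ge 4$ vertices.
\begin{itemize}
\item If some cell $g$ has exactly two neighbors and they lie on opposite sides, then $g$ sees itself and both neighbors.
\item If some cell $g$ is a leaf with unique neighbor $c'$, and $c'$ has a neighbor $d$ collinear with $g$ and $c'$, then $N_1(g)=\{g,c'\}$ and both cells see $d$ through a strip. So $g$ sees $g,c',d$.
\item If $P$ contains a full $2\times 2$ square, I would place the guard on a square cell of minimum degree and check that it sees the square. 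If that fails, I would show that some other local configuration arises.
\item If no configuration above occurs, every degree-2 cell is a bend and no leaf is followed by a straight step. I would then argue directly, using hole-freeness and $m\ge 4$, that such a polyomino still contains a cell whose neighborhood sees three cells.
\end{itemize}

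\textbf{Sufficiency in $n$.} With $r$ reflex vertices we have $n=2r+4$, so $\nicefrac{(n-2)}{2}=r+1$. The plan is to partition $P$ into at most $r+1$ pieces, each seen by one \rguard{1}. I would cut $P$ by one axis-parallel segment from each reflex vertex, which yields at most $r+1$ rectangles. A rectangle whose both sides have length at least $3$ is seen by a guard at an interior cell, since that guard's whole neighborhood lies in the rectangle. For thin rectangles the neighborhood of the guard may leave the piece, and the cut directions must be chosen so this causes no harm. I would try an induction: remove a suitable extremal piece bounded by an outer dent (for instance a "leaf" rectangle of the cut decomposition), guard it together with the adjacent region using one guard, and check that the remainder has at least two fewer vertices. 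Alternatively one can merge each thin piece into a neighbor so that the merged piece still has a good guard.

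\textbf{Lower bounds.} For $n$, I would use a width-one staircase and check that no \rguard{1} sees two of the cells at the staircase's bends. The reason is that each bend cell's neighborhood contains two perpendicular neighbors whose completing $2\times2$ cell is missing. This forces one guard per bend and matches $\nicefrac{(n-2)}{2}$ up to the end segments. For $m$, I would modify the staircase so that only its two end segments allow a guard to cover an extra cell, giving exactly $m-2$.

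I expect the main obstacle to be the $n$-sufficiency argument. The naive rectangle partition fails for thin pieces, so the choice of cuts, or the inductive step removing an outer dent, must be designed so that each guard's restlessness stays inside the region it is meant to cover.
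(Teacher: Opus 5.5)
Your proposal has a genuine gap in both sufficiency arguments, and your lower-bound family for $n$ is off by one.

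\textbf{The $m$-bound.} The reduction to a single guard that sees three cells is false. The counterexamples are exactly the case your last bullet leaves open. Take the unit staircase $c_0,\dots,c_{m-1}$, in which every inner cell is a bend and there is no $2\times 2$ square; for $m=4$ this is already the S-tetromino. Here $\viska{0}{c_0}=\{c_0,c_1\}$ and $\viska{0}{c_i}=\{c_{i-1},c_i,c_{i+1}\}$ for inner cells. Hence $\viska{1}{c_0}=\{c_0,c_1\}$. A guard on a bend $c_i$ sees only $c_i$, because $\viska{0}{c_{i-1}}\cap\viska{0}{c_{i+1}}=\{c_i\}$. So no $1$-restless guard sees three cells, and the count $1+(m-3)$ is unattainable. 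In this polyomino, $m-2$ is reached only by two guards, one at each end, each seeing one extra cell. That second, independent saving is the missing idea. The paper finds one guard in the two topmost rows that also sees an adjacent cell in those rows, and another in the two bottommost rows. These are disjoint once the bounding box has height or width at least $4$. The bounding boxes of size at most $3\times 3$ (plus, rotated T, $2\times 2$) are settled by hand.

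\textbf{The $n$-bound.} Your plan stops exactly where the difficulty lies. Cutting from reflex vertices does give at most $r+1=(n-2)/2$ rectangles. However, not every partition into few rectangles admits one guard per piece: the paper exhibits a five-rectangle partition in which no valid guard set has a single guard in the thin central rectangle. Neither your induction nor your merging step is specified.

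The paper instead builds a specific partition. It starts from a maximal rectangle on an outer dent and repeatedly grows a maximal rectangle in $P\setminus P'$ on an open edge. By hole-freeness, each step meets two new edges of $P$, so there are at most $(n-2)/2$ rectangles. The guard of each rectangle is placed next to an outer cell so that it sees the rectangle and typically also its \emph{cellar}, the row just below its base, which lies in an earlier rectangle. Width-$1$ rectangles are then covered largely as cellars of the rectangles grown from them, and their own guard handles the remaining topmost cells.

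\textbf{The lower bound for $n$.} The staircase does not give it. A unit staircase with $m\ge 4$ cells has $n=2m$ vertices and needs exactly $m-2=(n-4)/2$ guards, one short of $(n-2)/2$. If you lengthen its segments, a guard on a straight cell sees the whole segment, both bends included, so your one-guard-per-bend count breaks. You need a different family, such as the paper's chained windmills. There, the four extremal cells and the center of each windmill force distinct guards, at exactly one guard per two vertices apart from two extra vertices in the first windmill.

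Your staircase argument for the $m$ lower bound is correct and needs no modification.
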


\begin{proof}[Proof sketch]
\cref{fig:necessary-1} depicts polyominoes used for establishing the lower bounds.

\begin{figure}[htb]
	\centering
	\captionsetup[subfigure]{justification=centering}
	\begin{subfigure}{0.5\columnwidth}%
		\centering
		\includegraphics[page=1]{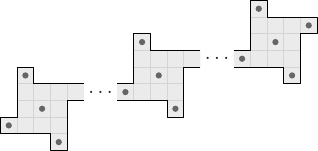}
		\subcaption{}
		\label{fig:necessary-1-a}
	\end{subfigure}%
	\begin{subfigure}{0.3\columnwidth}%
		\centering
		\includegraphics[page=1]{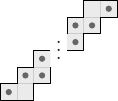}
		\subcaption{}
		\label{fig:necessary-1-b}
	\end{subfigure}%
	\caption{(a) A family of polyominoes that require $\nicefrac{(n-2)}{2}$ \rguards{1} to be entirely guarded.
		(b) Staircase polyominoes require $m-2$ guards for a complete \guarding{1}.}
	\label{fig:necessary-1}
\end{figure}

For the upper bounds, we describe two separate strategies. 
In order to guard $P$ with at most $(m-2)$ \rguards{1}, we identify two distinct cells that can be seen from an adjacent guard position, and place a \rguard{1} on all other cells. 
We prove the existence of two such cells by analyzing configurations in the topmost and bottommost rows of $P$.

The strategy for placing at most $\nicefrac{(n-2)}{2}$ \rguards{1} is more involved. 
The idea is to compute a partition of~$P$ into at most $\nicefrac{(n-2)}{2}$ rectangles, so that a single \rguard{1} in each rectangle suffices.
Interestingly, not every partition into rectangles has this property.
We therefore present a suitable partitioning strategy.
\end{proof}

\section{Computing minimum guard sets is \NP-complete}
\label{sec:complexity}
In this section, we discuss the computational complexity for computing minimum guard sets. In particular, we show that 
the general problem of deciding whether $\ell$ many \rguards{k} are sufficient to guard a given polyomino is \NP-complete.

\begin{restatable}[{\thmstar{prop:np-hardness-for-any-k}{proof:propHardness}}]{theorem}{propHardness}
	\label{prop:np-hardness-for-any-k}
	For each fixed  $k\in \mathbb N$, the decision problem of whether a given polyomino has a \rguard{k} set of size at most $\ell$ is \NP-complete.
\end{restatable}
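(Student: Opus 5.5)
Membership in \NP and \NP-hardness are handled separately.

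\emph{Membership in \NP.} A candidate guard set of size at most $\ell$ can be written down in polynomial size. We may assume $\ell$ is at most the number of rectangles in the partition $R$ from \cref{prop:existencePolytime}, i.e., $\ell\le n^2$, since one guard per guardable rectangle suffices. Guards are given by their cell coordinates, which have polynomial bit length. To verify a certificate, we use property~(\ref{item:Pii}): each $\viska{k}{g}$ is a union of rectangles of $R$. So it suffices to check, for every rectangle $r\in R$ and a representative cell $c\in r$, whether some guard $g$ satisfies $N_k(g)\subseteq\viska{0}{c}$. This test can be done in polynomial time by the techniques behind property~(\ref{item:Piii}) and \cref{lem:guardable}. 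Concretely, we compute $N_k(g)$ implicitly and intersect it with the orthogonally convex polyomino $\viska{0}{c}$ of \cref{obs:basicproperties}, in time independent of $k$.

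\emph{Hardness.} I would reduce from \textsc{Vertex Cover}, and for convenience from \textsc{Vertex Cover} in cubic (or planar max-degree-3) graphs, which stays \NP-hard.
\begin{itemize}
\item \textbf{Vertex gadgets.} Each vertex $v$ becomes a long horizontal corridor of width about $2k+1$, so that a \rguard{k} placed in the middle of the corridor has a $k$-neighborhood that stays inside it.
\item \textbf{Edge gadgets.} Each edge $uv$ becomes a thin pocket of width one, a spike attached to the corridors of $u$ and $v$. Its tip cell is seen robustly only by guards located in a suitable region of the corridor of $u$ or of $v$.
\item \textbf{Forced private cells.} Each corridor additionally gets a constant number of private dents. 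Each dent needs its own guard, and these guards cannot be robustly placed so as to also see any edge tip. This fixes a baseline count $B$, and we set $\ell=B+\tau$, where $\tau$ is the target vertex cover size.
\end{itemize}

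The correctness argument has two directions. First, a vertex cover yields a guard set: place one extra "central" guard in each selected corridor, positioned so that it sees all incident edge tips robustly. Second, any guard set of size $\ell$ can be normalized. Each guard either sees only private cells, or sees edge tips that belong to a single corridor, and can then be moved to that corridor's central position. The edges covered this way define a vertex cover of size at most $\tau$.

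The normalization step is where I expect the main obstacle. We must show that no single \rguard{k} sees edge tips belonging to two different vertices. We must also show that no guard can combine private-dent duty with tip duty. Under rectangular visibility with movement radius $k$, a guard near a junction between two corridors might see more than intended. I would therefore separate corridors by long detours of length well above $2k$, and use width-one spikes. Then $\rect$ from any tip to any position in a foreign corridor leaves $P$. I would also exploit that the intersection over $N_k(g)$ shrinks visibility whenever the neighborhood spans a bend. Careful case checks per gadget should give this for each fixed $k$. Polynomial size follows because all gadget dimensions are $O(k)$ times a constant.
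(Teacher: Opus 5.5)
\textbf{Verdict: genuine gap in the hardness part.} Your \NP-membership argument is fine. Testing, for one representative cell $c$ of each rectangle of $R$, whether some certificate guard $g$ satisfies $N_k(g)\subseteq\viska{0}{c}$ is exactly \cref{claim:char} combined with \cref{claim:partition}. It is an equivalent variant of the paper's computation of $\viska{k}{g}$ as a union of rectangles of $R$.

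\textbf{The edge gadget fails.} Let $t$ be a cell in a width-one spike with outer cells on both sides. Any point outside the column of $t$ has $\rect(\cdot,t)$ containing an outer cell beside $t$, so $\viska{0}{t}$ is a width-one vertical strip. A \rguard{k} $g$ sees $t$ if and only if $N_k(g)\subseteq\viska{0}{t}$. For $k\geq 1$, a guard in a corridor of width $2k+1$ always has horizontal neighbours in $N_k(g)$. Hence only guards sitting deep inside the spike itself can see its tip.

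The property you invoke to block foreign corridors (every $\rect$ from a tip leaves $P$) equally blocks the tip's own corridor. So no ``central'' corridor guard sees any incident tip, every edge needs its own private guard, and the correspondence with vertex covers collapses.

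Three further points are left open:
\begin{itemize}
\item A spike ``attached to the corridors of $u$ and $v$'' with a single tip visible from both is not a well-defined geometric object.
\item Non-planar cubic graphs would additionally require crossing gadgets.
\item The normalization step, which is the real content of the reduction, is deferred to unspecified case checks.
\end{itemize}

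\textbf{The missing idea is to make the edges the thick objects.} The paper reduces from vertex cover on orthogonally embedded planar graphs of maximum degree~3 and thickens \emph{edges} into corridors of width $2k+1$.
\begin{itemize}
\item Each corridor $C$ contains a cell whose visibility polyomino is exactly $C$. So every guard set contains a guard whose $k$-neighborhood lies in $C$, and such a guard sees all of $C$.
\item Vertices become corner, tunnel and junction gadgets. Witness cells $w_i$, whose visibility polyominoes lie inside the gadget, force a ground set of $\minguard(\Gamma)\in\{0,1,2\}$ guards.
\item One further guard placed in the gadget's overlap with the corridors sees all incident corridors.
\item Conversely, if a guard fully contained in a gadget sees a corridor, the gadget must contain $\minguard(\Gamma)+1$ guards.
\end{itemize}
This forced counting yields: $G$ has a vertex cover of size $\lambda$ if and only if $P$ can be guarded with $\lambda+\offs$ \rguards{k}.
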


We leverage insights from Biedl and Mehrabi~\cite{Biedl016} who showed that finding minimum guard sets in orthogonal polygons with holes under rectangular visibility is \NP-hard, even when the polygons are thin. 
They present a simple reduction from \textsc{Minimum Vertex Cover} for plane graphs of maximum degree 3. 
The idea of their reduction is as follows: 
Starting from a rectilinear embedding of the input graph $G$, they construct a polyomino $P$ by ``thickening'' the edges of the embedding. 
It is then straightforward to verify that any guard set for~$P$ corresponds to a vertex cover of $G$, and vice versa. 

We construct similar gadgets but require new ideas, as our guards are both more restricted (since they cannot be placed on the boundary of a cell) and more flexible (because they can move).
Our gadgets are depicted in~\cref{fig:hardness}. 
\begin{figure}[htb]
	\centering
	\captionsetup[subfigure]{justification=centering}
	\begin{subfigure}{0.17\textwidth}%
		\centering
		\includegraphics[scale=0.8,page=1]{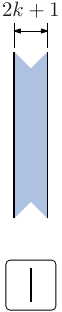}
		\subcaption{}
	\end{subfigure}%
	\begin{subfigure}{0.18\textwidth}%
		\centering
		\includegraphics[scale=0.8,page=2]{hardness_pages}
		\subcaption{}
	\end{subfigure}%
	\begin{subfigure}{0.25\textwidth}%
		\centering
		\includegraphics[scale=0.8,page=4]{hardness_pages}
		\subcaption{}
	\end{subfigure}%
	\begin{subfigure}{0.4\textwidth}%
		\centering
		\includegraphics[scale=0.8,page=6]{hardness_pages}
		\subcaption{}
	\end{subfigure}%
	\caption{
		The gadgets used for reducing from minimum vertex cover. 
		The corridors connect the other gadgets.
		(a)
		A~corridor. 
		(b)
		A~corner gadget.
		(c)
		A~tunnel gadget.
		(d)~A~junction gadget.}
	\label{fig:hardness}
\end{figure}
To incorporate the movement in the neighborhoods, we scale the gadgets so that each corridor is assigned a width or height of $2k+1$. 
Another key difference in our reduction is that
certain gadgets (tunnel and junction gadget) require a `ground set' of guards that must always be present, depicted by the gray guards in~\cref{fig:hardness}. 
This set alone does not suffice to see any neighboring corridor. 
However, placing a single additional guard within the interior of the gadget extends the coverage to all neighboring corridors, as it enables a configuration in which all guards are positioned as indicated by the outlined guards.

\section{Computing small guard sets}
\label{sec:k-restless}

In this section, we present algorithms for guarding polyominoes with \rguards{k}. 
Firstly, we show how to compute a minimum \rguard{k} set for histograms or determine that no such guard set exists. 
Secondly, for $k=0$, a 2-approximation follows by slightly modifying a result from previous work. 
Lastly, we exploit a simple decomposition of hole-free polyominoes into histograms and show that those can be guarded efficiently, yielding a $20k$-approximation algorithm for $k>0$.

\subsection{Optimal restless guard sets for histograms}
In this section, we show how to compute optimal guard sets for histograms.
We begin by establishing a property that will be essential to prove lower bounds. 
For later use, we already state it in a slightly more general form, namely for the more general `embedded histograms';
for now, the reader is welcome to restrict their attention to `real' histograms.

\begin{restatable}[{\thmstar{lem:separation-better}{proof:separation}}]{lemma}{separation}
	\label{lem:separation-better}
	Let $H$ be a histogram (possibly embedded in a hole-free polyomino $P$), $c = (x,y)$ and $c' = (x',y')$ be two cells in $H$ with $y \geq y'$, and $S$ be the horizontal set of cells from $(x',y')$ to $(x,y')$.
	If a \rguard{k} $g$ (in $P$) sees both $c$ and $c'$, then $S\subseteq H$. 
	
	In particular, if $S\nsubseteq H$, then no \rguard{k} can see both $c$ and $c'$. 
\end{restatable}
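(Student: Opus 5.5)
The plan is to reduce to plain rectangular visibility and then use the defining monotonicity of a histogram. By \cref{obs:basicproperties}, $\viska{k}{g}\subseteq\viska{0}{g}$, so it suffices to show that if a single cell $g=(g_x,g_y)$ of $P$ satisfies $\rect(g,c)\subseteq P$ and $\rect(g,c')\subseteq P$, then $S\subseteq H$. The ``in particular'' part is just the contrapositive. Throughout, $H$ is upwards with base strip at some row $b\le y'\le y$.

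First I will show that some full horizontal row segment between columns $x'$ and $x$, at a height $r$ with $y'\le r\le y$, lies in $P$. Set $r:=\max\bigl(y',\min(g_y,y)\bigr)$. Row $r$ lies in the vertical range of both rectangles: $\rect(g,c)$ spans rows between $g_y$ and $y$, and $\rect(g,c')$ spans rows between $g_y$ and $y'$. I will check the three cases $g_y\le y'$, $y'<g_y<y$ and $g_y\ge y$ separately to confirm this. In row $r$, $\rect(g,c)$ contains all cells from column $g_x$ to column $x$, and $\rect(g,c')$ contains all cells from $g_x$ to $x'$. Their union is an interval containing every column between $x'$ and $x$, whatever the position of $g_x$. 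Hence the row-$r$ segment $S_r$ from column $x'$ to column $x$ lies in $P$.

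Next I will push $S_r$ down to row $y'$. If $S_r\subseteq H$, the histogram property finishes the argument: every cell between a cell of $H$ and the base strip belongs to $H$. Since $b\le y'\le r$, each cell $(x'',y')$ of $S$ lies below $(x'',r)\in H$. So the remaining task is to upgrade $S_r\subseteq P$ to $S_r\subseteq H$.

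For a genuine histogram ($H=P$) this step is trivial, so the main obstacle is the embedded case. There I plan to use hole-freeness of $P$. The columns of $H$ over the base strip form a contiguous interval containing $x$ and $x'$. Inside $H$ there is a path from $c$ down its column to the base, along the base, and up column $x'$ to height $r$ or to $c'$. Together with $S_r$ (and, if needed, the segments of the two rectangles joining row $r$ to $c$ and $c'$), this path bounds a closed curve in $P$. Suppose some cell of $S_r$ is in $P$ but not in $H$. Then the portion of that column of $H$ below it is interrupted, which by the histogram structure and the way $H$ is cut out of $P$ would force an outer cell to be enclosed by this curve, contradicting that $P$ is hole-free. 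The precise form of this step depends on the definition of embedded histograms given later (presumably $H$ is determined by its base and maximal upward strips in $P$). In that case I expect the argument to reduce to observing that each column strip of $H$ extends upward as far as $P$ does, so that $(x'',r)\in P$ together with the vertical segment from $(x'',r)$ down to the base lying in $P$ gives $(x'',r)\in H$. Proving that this vertical segment indeed lies in $P$ via hole-freeness is the step I would work out first.
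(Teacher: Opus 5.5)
Your reduction to $k=0$, the choice of $r$ with $S_r\subseteq P$, and the push-down from $S_r\subseteq H$ to $S\subseteq H$ are correct, and they settle the case $H=P$. The gap is the embedded case, which is the real content of the lemma. You do not fix which property of $H$ you use, and the enclosure step is only announced, not carried out. Hole-freeness of $P$ alone does not suffice. If $P$ is a large rectangle and $H\subseteq P$ is a histogram whose middle column is short, a guard in the center of $P$ sees the tops of both tall columns although $S\nsubseteq H$. The ingredient you need is the maximality property (i) of embedded histograms: the cell directly above the topmost cell of each column of $H$ is an outer cell of $P$.

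With this property your own construction finishes the proof, and you need not pass through $S_r\subseteq H$. Suppose $(x'',y')\in S\setminus H$.
\begin{itemize}
\item Columns $x$ and $x'$ contain cells of $H$ in row $y'$, so $x''$ lies strictly between $x'$ and $x$.
\item The column of $H$ at $x''$ ends in some row $t$ with $b\le t<y'$. Hence $o:=(x'',t+1)$ is an outer cell with $b<t+1\le y'\le r$.
\item Moreover $t+1\neq r$, because $S_r\subseteq P$.
\end{itemize}
Now consider the base row between columns $x'$ and $x$, column $x$ from row $b$ to $r$ (in $H$, as $r\le y$), column $x'$ from row $b$ to $y'$ (in $H$) and from $y'$ to $r$ (in $\rect(g,c')$), and $S_r$. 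These cells form a rectangular ring contained in $P$, with $o$ strictly inside. So the complement of $P$ has a bounded component, a contradiction.

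This is the paper's idea. The paper takes $o$ below the offending cell of $S$, notes that $g$ lies above $o$, and closes a loop through $\rect(c,g)\cup\rect(g,c')$ and a path in $H$. Your ring is a more explicit version of that loop: it stays between columns $x'$ and $x$, which makes the enclosure of $o$ evident.
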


We highlight that the proof of the general statement crucially relies on the fact that~$P$ has no holes.
The proof for histograms is much simpler, with the underlying argument illustrated in~\cref{fig:separation}.
In particular, if~a guard~$g$ sees both cells $c$ and $c'$, then the rectangles $\rect(g,c)$ and $\rect(g,c')$ are contained in~$H$. 
Moreover, as $H$ is a histogram, the extensions of the rectangles to the base are also contained in~$H$. Hence, $S\subseteq H$.

\begin{figure}[htb]
	\centering
		\captionsetup[subfigure]{justification=centering}
	\begin{subfigure}{0.5\textwidth}
		\centering
		\includegraphics[page=1]{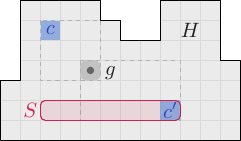}
		\subcaption{}
		\label{fig:separation-a}
	\end{subfigure}%
	\begin{subfigure}{0.5\textwidth}
		\centering
		\includegraphics[page=2]{separation_1.2}
		\subcaption{}
		\label{fig:separation-b}
	\end{subfigure}%
	\caption{Illustration of \cref{lem:separation-better} for histograms.
		(a) If $g$ sees both cells $c$ and $c'$, then the cells in $S$ are contained in the histogram $H$. 
		(b) Because $S$ contains outer cells, no \rguard{k} is able to see both cells $c$ and $c'$.}
	\label{fig:separation}
\end{figure}

\medskip

With \cref{lem:separation-better} at hand, we are ready to present an algorithm for computing an optimal set of \rguards{k} for any histogram, or decide that no such set exists.

\begin{restatable}[{\thmstar{thm:k-guard-opt-hist}{proof:theoremOptHist}}]{theorem}{theoremOptHist}
    \label{thm:k-guard-opt-hist}
    There exists an algorithm that, given any histogram $H$ with $n$ vertices and any $k\in \mathbb{N}$, computes an optimal \rguard{k} set in time polynominal in the number of vertices of $H$, or decides that $H$ is not \guardable{k}. 
\end{restatable}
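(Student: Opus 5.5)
First, decide guardability with \cref{prop:existencePolytime}; if $H$ is not \guardable{k}, report this and stop. Otherwise I would run a greedy sweep that uses \cref{lem:separation-better} as the lower-bound certificate. The plan is to find a set $W$ of ``witness'' cells, no two of which can be seen by a common \rguard{k}, together with a \rguard{k} set of size $|W|$. Then $|W|$ is a lower bound that is matched by the construction, and both are optimal.

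\textbf{Step 1: compress the histogram.} Work in the rectangle partition $R$ from the proof of \cref{prop:existencePolytime}, which has at most $n^2$ pieces. For a histogram, its vertical strips group into $\mathcal O(n)$ maximal columns of equal height, the \emph{bars*}. By property~(\ref{item:Pii}), every \vregion $\viska{k}{g}$ is a union of pieces of $R$. So it suffices to guard one representative cell per piece, and candidate guards can be restricted to polynomially many positions, namely the ones produced by \cref{lem:guardable} for pairs of representatives.

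\textbf{Step 2: the greedy sweep.} Process the top \outerdents from the lowest one upward; the lowest dents are the hardest to see jointly. Maintain the set of pieces not yet seen. Repeatedly pick an unseen piece $c$ whose top edge is lowest, breaking ties leftmost, and add $c$ to $W$. Then choose, among all guards seeing $c$, one whose \vregion is maximal with respect to the remaining pieces.

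The structural claim I need is that these \vregions are nested. By \cref{lem:separation-better}, any guard seeing $c=(x,y)$ sees only cells $c'$ with $y'\le y$ (or higher cells $c'$ whose horizontal segment at height $y$ lies in $H$). Its \vregion is therefore contained in the maximal horizontal interval $I_c$ of cells at height $y$ through $c$, extended upward within the bars over $I_c$. Within this interval, the guards seeing $c$ can be ordered by how far their $k$-neighborhood reaches. I expect the guard whose $k$-neighborhood sits as low and central as $\viska{0}{c}$ allows to dominate all others on the pieces that are still unseen. I would check this ``dominating guard'' claim by direct inspection of $\viska{0}{c}$. In a histogram, $\viska{0}{c}$ is the union of $\rect$s to the base, so containment of a $k$-neighborhood is monotone when the neighborhood is pushed down toward the base.

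\textbf{Step 3: the lower bound.} Show that no single \rguard{k} sees two witnesses $c_i, c_j \in W$, and conclude $\opt\ge|W|$. Suppose a guard $g$ saw both, with $c_i$ chosen earlier. By \cref{lem:separation-better}, the horizontal segment at the lower height lies in $H$. The dominance property then implies that the guard chosen for $c_i$ sees $c_j$ as well, contradicting that $c_j$ was unseen when it was picked.

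\textbf{Main obstacle.} Step 2 is the delicate part. Restless guards can lose visibility by moving, so the set of guards seeing $c$ is not a single rectangle. It is not obvious that one of them dominates the others on all pieces above and beside $c$; the tension is between moving the guard sideways to reach other bars and keeping its whole $k$-neighborhood inside $\viska{0}{c}$.

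If pure dominance fails, my fallback is a dynamic program over the Cartesian-tree decomposition of $H$ into bars. For each sub-histogram above a base interval, I would compute the optimal number of guards, splitting at the lowest bar. The interval structure from \cref{lem:separation-better} guarantees that guards seeing cells on both sides of a split must lie in the common lower rectangle. This gives a polynomial recursion with $\mathcal O(n^2)$ states, and each state is evaluated using \cref{lem:guardable}.
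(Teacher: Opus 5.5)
\textbf{The gap is in Step~2.} The dominance claim is not merely unproven: with your processing order it is false, and the witness set of Step~3 then breaks.

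Take $k=1$ and the U-shaped histogram whose five columns have heights $3,1,1,1,3$. Your rule first picks the bottom-left cell $c=(0,0)$. Two guards see $c$:
\begin{itemize}
\item the \rguard{1} at $(2,0)$, which sees exactly the base row;
\item the \rguard{1} at $(0,1)$, which sees exactly the left column.
\end{itemize}
They are incomparable on the remaining pieces, so no dominating guard exists. Whichever one you choose, the next witness your rule selects (the upper part of the left column, resp.\ the middle of the base) is seen together with $c$ by the other guard. So $W$ is not a witness set.

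Your containment claim is also too strong. \cref{lem:separation-better} only restricts the segment at the \emph{lower} of the two heights, so a guard seeing $c$ may see cells far outside $I_c$ below height $y$. For example, with $k=0$ the guard at $(0,0)$ sees $(0,2)$ and the whole base row.

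The fallback DP does not repair this. It never says how guards that see part of the shared lower rectangle and part of one side are accounted for, so as stated the recursion is not well defined.

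\textbf{The missing idea} is a unit of coverage on which visibility is all-or-nothing; this is what the paper's proof is built on.
\begin{itemize}
\item Partition $H$ into \emph{blocks}: maximal horizontal strips, merged when vertically adjacent and of equal width. The blocks form a rooted tree $T$.
\item Each block has a \emph{key cell} with an outer cell directly above it. For a key cell $a$, $\viska{0}{a}$ is exactly the block's extension down to the base.
\item Hence any \rguard{k} seeing $a$ has its whole $k$-neighborhood in that extension and sees the entire block.
\item Combined with \cref{lem:separation-better}, the blocks seen by a guard form a directed path in $T$. Using \cref{lem:guardable} on key cells, guarding $H$ becomes covering $T$ with such paths.
\end{itemize}

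The greedy must then run from the leaves of $T$ toward the root, which is the opposite of your lowest-first sweep. For the current block, it always chooses the path that reaches closest to the root. Once the subtree of the current block is covered, the only uncovered blocks a path through it can still reach lie on its single chain to the root. So the candidate paths are totally ordered, dominance is automatic, and the key cells of the processed blocks form the witness set.

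In your root-first order, the unseen region above the current piece branches into several subtrees. That is exactly why no dominating guard exists there.
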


\begin{proof}[Proof sketch]
We assume that $H$ is upwards.
The key to optimally \guarding{k} a histogram $H$ lies in partitioning $H$ into \emph{blocks}, which are maximal horizontal rectangles that touch the histogram boundary to both sides, as illustrated in~\cref{fig:blocks}. 
We observe that the dual graph of the blocks forms a rooted tree~$T$. 
For each block, we assign a \emph{key cell}, which is an arbitrary cell incident to the top boundary of $H$ (the corresponding adjacent outer cells are marked with a cross in \cref{fig:blocks}).
We establish the following properties: 

\begin{enumerate}[(i)]
	\item If $H$ is \guardable{k}, then each block can be guarded by a single \rguard{k}. 
	\item A~\guard{k} sees a block if and only if it sees the block's key cell.
	\item Using \cref{lem:separation-better}, we show that the blocks seen by a \rguard{k} induce a directed path in $T$, and all such paths can be identified efficiently using \cref{lem:guardable}. 
\end{enumerate}

It follows that \guarding{k}~$H$ reduces to covering $T$ with a minimum number of directed paths. 
This can be done optimally by applying a greedy method starting at the leaves in $T$. 
To prove minimality, we select one key cell per selected path in $T$ and show that these cells form a witness set, meaning that no two of the selected key cells can be seen by the same \rguard{k}.
\end{proof}

\begin{figure}[htb]
    \centering
    \includegraphics[scale=1.1]{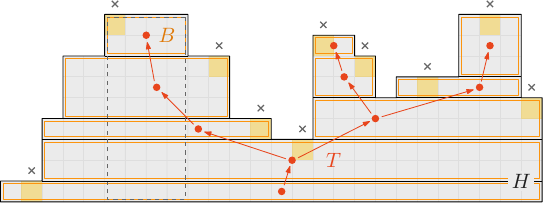}
    \caption{
        Optimally \guarding{k} a histogram $H$ is based on a partition of $H$ into rectangular blocks (highlighted in orange) that form a rooted tree $T$. 
        The extension of a block reaches down to the histogram's base, as outlined for block $B$ in dashed lines.
        Each block is assigned a key cell within the respective block, visualized in yellow.}
    \label{fig:blocks}
\end{figure}

\subsection{Approximation for \guarding{0}}

Before we focus on the general case of \guarding{k} polyominoes, we briefly remark that previous work implies a \mbox{$2$-approximation} algorithm for \guarding{0} polyominoes without holes. 
Worman and Keil~\cite{worman07} devise a polynomial-time algorithm that computes a minimum guard set for hole-free polyominoes with classic rectangular visibility. 
To leverage this result, the required insight is that two \rguards{0} can replace a point guard.

\begin{restatable}[{\thmstar{lem:twoRestlessOnePoint}{proof:twoRestlessOnePoint}}]{lemma}{twoRestlessOnePoint}\label{lem:twoRestlessOnePoint}
Let $p$ be a point guard in a polyomino $P$. 
	We can efficiently identify two \rguards{0} $g$ and $g'$ such that $\viska{0}{g} \cup \viska{0}{g'}$ covers the visibility region of $p$.
\end{restatable}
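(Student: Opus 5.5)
Since $P$ is a polyomino, all its vertices lie on the integer grid, so the boundary of $P$ consists of grid lines. A point guard $p$ sits either in the interior of a cell, in the relative interior of a cell edge, or at a grid vertex. I will distinguish these three cases and, in each, choose one or two cells $g,g'$ incident to $p$. The aim is to show that every point $q$ with $\rect(p,q)\subseteq P$ also satisfies $\rect(g,q)\subseteq P$ or $\rect(g',q)\subseteq P$. Here $g,g'$ are identified with the midpoints of their cells, and $\viska{0}{\cdot}$ is the visibility region of such a midpoint.

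\textbf{Interior of a cell.} If $p$ lies in the interior of a cell $c$, then $\viska{0}{p}=\viska{0}{c}$, as noted in \cref{sec:preliminaries}, so $g=g'=c$ works. The reason is that moving $p$ within the open cell does not change which grid cells $\rect(p,q)$ meets in a way that matters: $\rect(p,q)\subseteq P$ depends only on the cells it intersects with positive area and on boundary segments on grid lines, and the set of grid lines crossed is the same for all interior points of $c$.

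\textbf{Relative interior of a vertical edge.} Let $p$ lie on the shared vertical edge of cells $c_\ell$ (left) and $c_r$ (right); at least one of them lies in $P$. Take any $q$ seen by $p$. If $q$ lies weakly left of the line $x=x(p)$, I claim $c_\ell$ sees $q$. The rectangle $\rect(c_\ell,q)$ is obtained from $\rect(p,q)$ by moving one vertical side by half a cell. Its horizontal extent stays inside the column range already covered, and it does not cross a new grid line, since $x(p)$ is itself a grid line. Its vertical extent is unchanged up to moving within the open row of $c_\ell$. Hence it meets exactly the same cells, together with $c_\ell$, which must then be in $P$. Symmetrically, $c_r$ sees every $q$ weakly right of $p$. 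If only one of $c_\ell,c_r$ is in $P$, then $p$ lies on the boundary and sees only to that side, and that single cell suffices. Horizontal edges are handled in the same way.

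\textbf{Grid vertex.} This is the main obstacle, since four quadrant cells meet at $p$ and we may use only two. Points $q$ seen by $p$ fall into up to four quadrants around $p$; the quadrant cell $c_{NE}$ sees all $q$ in the closed NE quadrant by the same shifting argument, and likewise for the other quadrants. So I need two cells that together cover the visible parts of all quadrants. The plan is a case analysis on which of the four cells incident to $p$ lie in $P$.
\begin{itemize}
\item If $p$ is a reflex vertex or a boundary point, at most three incident cells belong to $P$. If the missing cell is, say, $c_{SW}$, then $p$ sees nothing strictly inside the SW quadrant, and points on the two boundary rays are seen from an adjacent cell. In this case $c_{NW}$ and $c_{SE}$ suffice: the NE quadrant is split by the diagonal direction, and each point in it is seen by one of the two.
\item If all four cells are in $P$, I choose the diagonal pair $c_{NE},c_{SW}$. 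To cover a point $q$ in the NW quadrant, I would show that $\rect(p,q)\subseteq P$ implies $\rect(c_{NE},q)\subseteq P$ or $\rect(c_{SW},q)\subseteq P$. This extension adds only a half-cell strip whose cells are $c_{NE}$ or $c_{SW}$ together with cells in the same row or column as cells already in $\rect(p,q)$. The remaining task is to check that this added strip lies in $P$, using the fact that the cells directly adjacent to $p$ are in $P$.
\end{itemize}
I expect this last verification to be where care is needed, and possibly a different choice of pair, for example an adjacent pair such as $c_{NW},c_{SE}$ depending on the local configuration, will be required. If a fixed choice fails, I would fall back on choosing the pair according to which incident cells lie in $P$.

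All of these choices are determined by $O(1)$ local information about $p$, so the two guards $g,g'$ can be identified efficiently.
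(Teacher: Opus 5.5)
\paragraph{A gap in the grid-vertex case.}
Your interior-point and edge cases are fine, up to degenerate points on the grid line through $p$, which the paper also disregards. In the edge case, $c_\ell$ and $c_r$ are exactly the guards the paper uses. The grid-vertex case, which you rightly flag as the crux, fails: both of your concrete choices are wrong, and so is the fallback.

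Put $p=(0,0)$ and name cells by their bottom-left corners, so $c_{NE}=(0,0)$, $c_{NW}=(-1,0)$, $c_{SE}=(0,-1)$ and $c_{SW}=(-1,-1)$.
\begin{itemize}
\item \emph{Reflex case.} Let $P$ be the $2\times 2$ square $\{(0,0),(1,0),(0,1),(1,1)\}$ together with $c_{NW}$ and $c_{SE}$. This $P$ is hole-free, and $p$ is reflex with $c_{SW}\notin P$. Then $p$ sees the cell $(1,1)$. However, every rectangle from the midpoint of $c_{NW}$ to an interior point of $(1,1)$ enters the missing cell $(-1,1)$, and every such rectangle from $c_{SE}$ enters the missing cell $(1,-1)$.
\item \emph{Four-cell case.} Let $P$ be the square $\{(-2,0),(-1,0),(-2,1),(-1,1)\}$ together with $c_{NE},c_{SE},c_{SW}$. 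Then $p$ sees $(-2,1)$, but the rectangles from $c_{NE}$ and from $c_{SW}$ enter the missing cells $(0,1)$ and $(-2,-1)$, respectively.
\end{itemize}

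The fallback fails as well. Place such a $2\times 2$ square in each of the four quadrants in turn, adding the other three incident cells and nothing else. This gives four hole-free configurations, all with all four incident cells present. In each, the far cell of the square is seen by exactly one incident cell, and it is a different one each time. Hence each of the six pairs fails on some configuration. So no rule that looks only at which incident cells lie in $P$ can work, and your claim that $O(1)$ local information determines $g$ and $g'$ is false.

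\paragraph{The missing idea.}
Let one guard handle two quadrants, rather than one guard per quadrant. Split the visibility region of $p$ along a grid line through $p$, say the vertical one.
\begin{itemize}
\item Each half is a histogram based on the column next to that line.
\item Its row lengths are non-increasing as you move away from the horizontal line through $p$. Indeed, if $p$ sees $q$, then $\rect(p,q)$ contains pieces of the same length in all rows in between.
\item So the longest row is one of the two rows incident to $p$; this is the single outer dent the paper refers to.
\item Take the base cell of that row, i.e., $c_{NW}$ or $c_{SW}$, whichever row reaches farther left inside the visibility region of $p$. This guard $g$ sees the entire left half: for $q$ in row $r$, every row crossed by $\rect(g,q)$ is at least as long as row $r$.
\item The same holds on the right.
\end{itemize}
This is the paper's argument. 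The choice compares two row lengths, which is not local information but is easy to compute.
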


The high-level idea is visualized in~\cref{fig:two-approx}.
In particular, observe that we can partition the visibility region of any point guard $p$ into two histograms.
Then, by carefully placing two \rguards{0}, those guards together cover the corresponding region.

\begin{figure}[htb]
	\centering
	\includegraphics[page=2]{2-approx}
	\caption{The visibility region of a point guard can be split into two histograms that can be guarded with a single \rguard{0} each.
	}
	\label{fig:two-approx}
\end{figure}

\cref{lem:twoRestlessOnePoint} implies that a set of point guards for a polyomino can be transformed into a set of \rguards{0} whose cardinality is at most twice that of the original.
Therefore, using the algorithm of Worman and Keil~\cite{worman07} as a subroutine, we obtain a $2$-approximation.

\begin{corollary}
	\label{prop:0-approx-2}
	There exists a $2$-approximation
	for \guarding{0} hole-free polyominoes. 
\end{corollary}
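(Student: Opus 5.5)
The approach is to run the exact algorithm of Worman and Keil~\cite{worman07} and then convert its output using \cref{lem:twoRestlessOnePoint}. Let $P$ be a hole-free polyomino. Their algorithm computes, in polynomial time, a minimum set $G^*$ of point guards that guards $P$ under rectangular visibility. We write $\opt_0$ for the size of a minimum \rguard{0} set of $P$; such a set exists by \cref{prop:existence}.

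The first step is to show $|G^*|\le \opt_0$. Take any \rguard{0} set $G_0$. Each guard in it sits at a cell midpoint, and its visibility region $\viska{0}{g}$ is by definition exactly the rectangular-visibility region of that midpoint viewed as a point guard. So $G_0$ is also a valid point-guard set, and minimality of $G^*$ gives $|G^*|\le |G_0|$. We need the visibility notion used by Worman and Keil to coincide with ours ($\rect(p,q)\subseteq P$, with guards allowed anywhere in $P$, including on cell boundaries). This holds, since a polyomino is an orthogonal polygon and the definitions agree.

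The second step replaces each $p\in G^*$ by the two \rguards{0} $g_p,g'_p$ given by \cref{lem:twoRestlessOnePoint}, so that $\viska{0}{g_p}\cup\viska{0}{g'_p}$ covers the visibility region of $p$. The union of the visibility regions of $G^*$ is all of $P$, so the new set $G=\bigcup_p\{g_p,g'_p\}$ is a \rguard{0} set. Its size is at most $2|G^*|\le 2\opt_0$.

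For the running time, \cite{worman07} is polynomial in $n$. The point guards it outputs have coordinates of polynomial bit size, because its candidate positions come from the polygon's vertex structure. The lemma is applied $|G^*|\le n$ times, and each application is efficient as stated. Hence the whole procedure runs in polynomial time.

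The only delicate point is the first step: checking that Worman and Keil's model (hole-free orthogonal polygons, r-visibility, arbitrary point guards) matches ours, so that the optimum they compute is indeed a lower bound on $\opt_0$. Once that is confirmed, the rest follows directly.
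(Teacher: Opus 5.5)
Your proposal is correct and takes the same approach as the paper: compute a minimum point-guard set with Worman and Keil's algorithm, then replace each point guard by the two \rguards{0} from \cref{lem:twoRestlessOnePoint}. You also spell out the lower bound that the paper leaves implicit, namely that every \rguard{0} set is itself a point-guard set under rectangular visibility, so the point-guard optimum is at most the \rguard{0} optimum.
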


\subsection{Approximation algorithm for hole-free polyominoes}

We now turn our attention to general hole-free polyominoes and present a $20k$-approximation.

\begin{restatable}{theorem}{thmConstantFactorHoleFreePolyominoes}
    \label{thm:k-guard-poly}
    For $k>0$, there exists a $20k$-approximation algorithm that, given a  \guardable{k} hole-free polyomino $P$, computes a \rguard{k} set in time polynomial in the number of vertices of~$P$.
\end{restatable}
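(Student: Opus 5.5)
We follow the strategy announced in the introduction: decompose the hole-free polyomino $P$ into embedded histograms, guard each with the $4k$-approximation of \cref{lem:k-guard-embedded}, and charge the total against a single optimal solution of $P$. For the decomposition we use the standard histogram decomposition of hole-free polyominoes. Pick a boundary edge, take the maximal upwards histogram $H_0$ with base on it, and then, for every window of $H_0$ (a maximal boundary segment of $H_0$ lying in the interior of $P$), grow the maximal histogram of $P\setminus H_0$ whose base lies on that window, and recurse. Since $P$ is hole-free, this yields a partition of $P$ into embedded histograms $H_0,H_1,\dots,H_t$ whose adjacency is a tree $\mathcal T$. Each $H_i$ (other than the root) hangs from a window of its parent and is orthogonal to the parent's strips. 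The decomposition is computable in time polynomial in $n$, since there are $\bigO(n)$ windows. We then run the algorithm of \cref{lem:k-guard-embedded} on every $H_i$ inside $P$ and output the union $G$. It is valid because the $H_i$ cover $P$ and each is guarded in the ambient polyomino $P$; guardability of $P$ implies guardability of each embedded $H_i$.

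For the ratio, we must show $\sum_i \opt(H_i\subseteq P)\le 5\,\opt(P)$, where $\opt(H_i\subseteq P)$ denotes the minimum number of \rguards{k} in $P$ needed to see $H_i$. Combined with the factor $4k$, this gives $20k$. So fix an optimal set $G^*$ of $P$. For each $g\in G^*$ we want $\viska{k}{g}\subseteq\viska{0}{g}$ to meet at most five histograms of the decomposition in an essential way. We then assign $g$ to those histograms and use, for each histogram, the guards assigned to it.

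The key geometric claim is that $\viska{0}{g}$ is orthogonally convex and hole-free (\cref{obs:basicproperties}), so it cannot bend around corners. Let $H_i$ be the histogram containing $g$. Then $\viska{0}{g}$ meets only $H_i$, its parent, its grandparent, and children (or grandchildren) that hang from windows crossed by the horizontal or vertical segment through $g$. We would try to prove the following, using \cref{lem:separation-better}: the children of $H_i$ that $g$ sees form at most two groups, one on each side along the axis perpendicular to $H_i$'s base. Within such a group, the portion seen lies inside rectangles attached to windows that are consecutive along a line. The plan is then to charge these parts not to the individual children but to a single ``virtual'' guard per direction. Combined with $H_i$ itself, the parent and the grandparent, this gives the count of five.

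The main obstacle is exactly this step. A single restless guard may see small pieces of many children histograms, so a naive count fails. The fix I would try is to strengthen what each histogram's subproblem must cover: the subproblem for $H_j$ only covers the cells of $H_j$ not already seen by guards charged from its ancestor histograms. Alternatively, we could merge sibling histograms sharing a window line into one embedded histogram; this is why \cref{lem:k-guard-embedded} is needed for embedded rather than real histograms. In the merged decomposition, orthogonal convexity of $\viska{0}{g}$ limits $g$ to its own piece plus at most two ancestors and two descendant groups, i.e., five pieces. We then sum the per-piece optima, which are each at most the number of charged guards, and multiply by $4k$.
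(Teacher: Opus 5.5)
\textbf{There is a genuine gap.} Your architecture matches the paper's: Levcopoulos's decomposition into embedded histograms, the $4k$-approximation of \cref{lem:k-guard-embedded} on each piece, and a charge against an optimum $G^*$. For the charge, $G_i=\{g\in G^*:\viska{k}{g}\cap H_i\neq\emptyset\}$ is a valid guard set for $H_i$ in $P$, so $\opt(H_i)\le|G_i|$.

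Everything therefore hinges on $\sum_i|G_i|\le 5|G^*|$, i.e., on each visibility region meeting at most five histograms of the partition. This is exactly the step you do not prove. Instead you assert that the naive count fails, and it does not. The paper (\cref{lem:5}) uses the result of Lingas, Wasylewicz, and \.Zyli\'nski that in Levcopoulos's partition the rectangular-visibility region of any point guard meets at most five histograms. This transfers because $\viska{k}{g}\subseteq\viska{0}{g}$ and a \rguard{0} sees no more than a point guard at the same position.

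You can also see the bound directly. For $g\in H$ and $p$ seen by $g$, both boundary paths of $\rect(g,p)$ from $p$ to $g$ lie in $P$. Since $P$ is hole-free, each component of $P\setminus H$ hangs from a single window. Hence:
\begin{itemize}
\item A point seen behind a window of $H$ is perpendicularly visible from that window, so it lies in the child grown from it.
\item Two windows on the same side of $H$ cannot both be looked through: the horizontal segment through $g$ would have to leave $H$ through the nearer window and re-enter $H$.
\item Everything seen beyond the base of $H$ lies in the parent, or in the unique sibling whose window is met by the vertical line through $g$.
\end{itemize}
So $\viska{0}{g}$ meets only $H$, its parent, at most one sibling, and at most one child on each side. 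Your candidate list is therefore off: grandparents and grandchildren are never seen, while a sibling can be.

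Neither of your workarounds repairs the gap. Restricting the subproblem for $H_j$ to cells not yet seen by guards ``charged from ancestors'' yields a target that is no longer an embedded histogram. Then neither \cref{lem:k-guard-embedded} nor \cref{lem:separation-better} applies. Moreover, if the charged guards are those of $G^*$, the algorithm cannot even compute this target. Merging siblings whose windows lie on a common line produces a disconnected union of histograms, which again is not an embedded histogram. The bound of five pieces you claim for the merged decomposition is only asserted. As written, $\sum_i\opt(H_i)\le 5\,\opt(P)$, and hence the factor $20k$, is not established.
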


Our proof consists of three steps. 
First, we decompose the polyomino into histograms. 
Second, we guard each histogram. 
Finally, we argue that the union of these guard sets yields a good guard set for the entire polyomino. 
The first and last step follow from previous work.
The interesting part lies in the second step.
A major challenge is that the boundaries of the histograms are not necessarily boundaries of the polyomino, allowing the \rguards{k} to leave the histogram. 
We now explain the three steps.

\subparagraph*{Decomposing a hole-free polyomino into histograms.}
We apply a simple method by Levcopoulos~\cite{levcopoulos} for decomposing a hole-free polyomino $P$ into a set of histograms: 
In each iteration, we construct a new maximal histogram starting at a given base. 
For an illustration, consider~\cref{fig:5}.
The first histogram is obtained by choosing an outer dent $e$; note that, without loss of generality, we choose a horizontal outer dent in~\cref{fig:5}.
Then, we construct a maximal histogram with base $e$; 
you may imagine sweeping over $P$ with $e$. 
Let $\mathcal{H}$ be the union of the already constructed histograms. 
We iteratively construct the next histogram from a maximal segment leading from $P \setminus \mathcal{H}$ to $\mathcal{H}$, until $\mathcal{H}=P$.

\begin{figure}[htb]
	\centering
	\includegraphics[]{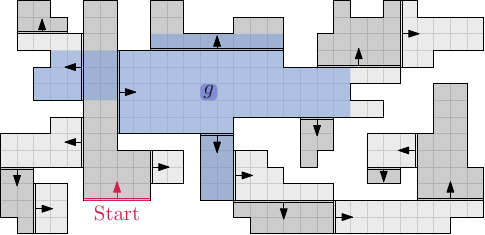}
	\caption{
		Partitioning a polyomino into histograms, with red being the starting edge. 
	}
	\label{fig:5}
\end{figure}

The histograms constructed by the aforementioned approach fulfill the following properties, which we also visualize in~\cref{fig:embedded-intro}. 
Note that for simplicity of the arguments and to match the figures, we assume that the histogram $H$ is upwards.
\begin{enumerate}[(i)]
	\item By maximality, a cell adjacent to the top boundary of $H$ but not contained in $H$ is an outer cell of $P$.
	\item All cells adjacent to the unique bottom edge of $H$ but not contained in $H$ are either all contained in~$P$ or all outer cells of $P$. 
	We refer to $H$ having an \emph{\fullbase} or a \emph{\emptybase}, respectively. 
\end{enumerate}
We~call a histogram-shaped subpolyomino~$H \subseteq P$ satisfying these properties an \emph{embedded histogram}.
We conclude with \cref{lem:partition}.

\begin{figure}[htb]
	\centering
	\includegraphics[page=2]{embedded}
	\caption{Two embedded histograms within a polyomino $P$: $H_1$ has an \fullbase, while $H_2$ has a \emptybase, both hatched in red.
	Outer cells adjacent to the top boundary of $H_1$ and $H_2$ are marked by crosses.}
	\label{fig:embedded-intro}
\end{figure}

\begin{lemma}[Levcopoulos~\cite{levcopoulos}, Lemma 4]
	\label{lem:partition}
	A hole-free polyomino with $n$ vertices can be partitioned into embedded histograms in $\bO(n)$ time.
\end{lemma}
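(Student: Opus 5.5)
The statement is attributed to Levcopoulos, so the plan is to re-derive it from the sweep construction described above. Two things must be checked: that the sweep produces embedded histograms, and that it can be run in linear time. We work directly with the boundary description of $P$ by its $n$ vertices, never with the cells.

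\emph{Step 1: maximal histograms.} Start from an outer dent $e$, say horizontal. Sweep $e$ into $P$: for each $x$-position along $e$, extend the vertical strip upward until it hits the boundary of $P$. The union of these strips is the maximal upward histogram $H_1$ with base $e$. By construction, every cell directly above the top boundary of $H_1$ lies outside $P$, which gives property (i).

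\emph{Step 2: the residual pieces.} Because $P$ is hole-free, $P\setminus H_1$ splits into components. Each component meets $H_1$ along a single maximal vertical segment $s$ lying on a side edge of a strip of $H_1$. Hole-freeness is what guarantees the component is attached along one connected segment: otherwise $H_1$ together with the component would enclose outer cells, creating a hole.

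\emph{Step 3: recursion.} Recurse on each component, using $s$ as the base and sweeping horizontally away from $H_1$. The resulting histogram has an inside base, since the cells across $s$ all belong to $H_1\subseteq P$. The very first histogram has a bordered base. Maximality again yields property (i), so property (ii) holds throughout. Induction on the number of remaining cells, or on the number of vertices, shows that the process ends with a partition of $P$.

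\emph{Step 4: running time.} This is the main obstacle. The cost of each sweep must be proportional to the boundary complexity it consumes. Computing a maximal histogram amounts to computing the trapezoidal (vertical or horizontal) decomposition visible from its base, and for a simple polygon this can be done in linear time via Chazelle's triangulation, or via simpler methods for orthogonal polygons.

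The key accounting is that every histogram's boundary consists of edges of $P$ plus new cut segments. Each cut segment starts at a reflex vertex of $P$ and becomes the base of exactly one later histogram. Hence the total complexity summed over all histograms is $\bO(n)$.

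I would carry out the computation as follows. Build the horizontal and vertical decompositions of $P$ once, each obtained by shooting rays from the reflex vertices. Then perform the recursive sweeps as traversals of the adjacency graphs of these decompositions, alternating between them, so that each trapezoid is visited only a constant number of times. This yields the claimed linear bound.
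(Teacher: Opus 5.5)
Your proposal is correct and uses the same construction the paper describes before citing the lemma: sweep a maximal histogram from an outer dent, then repeatedly sweep from each maximal window into the remaining part, with maximality giving property (i) and the bordered-versus-inside base distinction giving (ii). The paper does not prove the $\mathcal{O}(n)$ bound itself but defers it to Levcopoulos; your plan of building the horizontal and vertical decompositions once via Chazelle and traversing them is the standard route. Its one unproved step, the constant-visit claim, does hold: every window is a complete chord of the relevant decomposition, and each trapezoid is crossed by at most one window per sweep direction.
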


\subparagraph*{Challenges when handling embedded histograms.}
As mentioned above, the transition from histograms to embedded histograms might seem like a minor modification, but has fundamental effects on potential guarding strategies.
We highlight some of the encountered difficulties in~\cref{fig:embedded_difficulties}:
For guarding histograms, we used a natural partition into rectangular blocks, each of which could be covered by a single guard.
As visualized in~\cref{fig:embedded_difficulties-a}, a block within an embedded histogram in turn might require arbitrarily many \rguards{k}. 
Moreover, the example depicted in~\cref{fig:embedded_difficulties-b} shows that optimal solutions may require \vregions that do not touch the histogram boundary. 
We therefore deal with an inherently two-dimensional covering problem that cannot be simply reduced to one-dimensional strategies. 
Furthermore, even if we are able to compute optimal solutions for blocks, these may not remain optimal when merged, as illustrated in~\cref{fig:embedded_difficulties-c}. 

\begin{figure}[htb]
	\captionsetup[subfigure]{justification=centering}
	\begin{subfigure}{0.25\textwidth}%
		\centering
		\includegraphics[page=1]{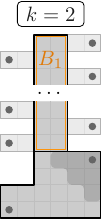}
		\subcaption{}
		\label{fig:embedded_difficulties-a}
	\end{subfigure}%
	\begin{subfigure}{0.45\textwidth}%
		\centering
		\includegraphics[page=2]{embedded_difficulties_pages}
		\subcaption{}
		\label{fig:embedded_difficulties-b}
	\end{subfigure}%
	\begin{subfigure}{0.3\textwidth}%
		\centering
		\includegraphics[page=3]{embedded_difficulties_pages}
		\subcaption{}
		\label{fig:embedded_difficulties-c}
	\end{subfigure}%
	\caption{Illustration of the challenges for guarding embedded histograms.
		(a) Guarding the block $B_1$ with \rguards{k} may require arbitrarily many guards.
		(b) \Vregions in optimal solutions may not touch the histogram boundary.
		(c) Combining optimal solutions for blocks may be suboptimal; $B_i$ needs $i$ \rguards{k}, but four \rguards{k} suffice for $B_2 \cup B_3$.}
	\label{fig:embedded_difficulties}
\end{figure}

\subparagraph*{Guarding embedded histograms.}
Before we present our algorithm, we start by guarding strips inside a polyomino. 
Recall that strips are rectangles of width or height 1.

\begin{lemma}
	\label{lem:strip-guarding-interval-cover}
	Let $P$ be a polyomino and let $S$ be a strip in $P$.
	We can compute a minimum set of \rguards{k} in $P$ covering $S$ in time polynomial in the number of vertices of~$P$, or decide that no such set exists.
\end{lemma}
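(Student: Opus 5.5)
Without loss of generality, let $S$ be a horizontal strip consisting of the cells $(a,y),(a+1,y),\dots,(b,y)$. The plan is to reduce covering $S$ to a one-dimensional interval cover problem. The key structural fact is that for every \rguard{k} $g$ in $P$, the set $\viska{k}{g}\cap S$ is a contiguous interval of cells of $S$. Indeed, $\viska{k}{g}=\bigcap_{c\in N_k(g)}\viska{0}{c}$. By \cref{obs:basicproperties}, each $\viska{0}{c}$ is orthogonally convex, so its intersection with the horizontal line through $S$, and hence with $S$, is an interval. An intersection of intervals is again an interval. Therefore every guard contributes an interval $[\ell_g,r_g]\subseteq[a,b]$, possibly empty, and covering $S$ amounts to selecting guards whose intervals cover $[a,b]$.

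Next, I would run the classical greedy interval cover. Let $x$ be the leftmost cell of $S$ not yet covered. Among all guards whose interval contains $x$, choose one maximizing the right endpoint $r_g$. Then set $x:=r_g+1$ and repeat. If at some point no guard sees $x$, the strip cannot be covered. The standard exchange argument gives optimality: any cover must contain some guard seeing $x$, and replacing it by the greedy choice keeps the cover valid without increasing its size.

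The main obstacle is carrying out this greedy efficiently, since both $S$ and $P$ may contain exponentially many cells. The iteration count is at most the output size, but that size could itself be large, so I would compress. I use the partition $R$ from the proof of \cref{prop:existencePolytime}, obtained by shooting rays from the reflex vertices. It cuts $S$ into at most $\bO(n)$ segments, and by properties (i) and (ii) of that proof, every visibility region $\viska{k}{g}$ is a union of rectangles of $R$. Hence all endpoints $\ell_g,r_g$ lie on segment boundaries, and whether a guard sees a cell depends only on the cell's segment.

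For a query cell $x$, I need the maximum $r_g$ over guards seeing $x$. I would test, for each of the $\bO(n)$ candidate segment boundaries $r$ to the right, whether some \rguard{k} sees both $x$ and the cell at $r$, using \cref{lem:guardable}. By interval convexity, such a guard sees everything in between, so the largest feasible $r$ is the answer. This makes each greedy step polynomial.

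It remains to bound the number of steps, which I would do by compressing long runs. If a step jumps only within a segment, the strip is locally uniform: the same maximal interval length repeats, so such runs can be handled arithmetically. Concretely, the greedy then places guards at translated positions with period equal to the interval length, and I can count them with a division rather than one by one. I expect making this compression rigorous, so that the output is an implicitly encoded guard set of polynomial description, to be the delicate step. If the intended setting only allows polynomially many greedy steps, it can be skipped.
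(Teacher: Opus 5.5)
Your approach is the paper's approach. Substrip structure of $\viska{k}{g}\cap S$ (you argue via orthogonal convexity of each $\viska{0}{c}$, the paper via the rectangles $\rect(s,c)$ and $\rect(s',c)$; both are fine), then greedy interval cover with the exchange argument, pairwise queries via \cref{lem:guardable}, and the partition $R$ cutting $S$ into $\bO(n)$ segments so that only polynomially many candidate endpoints matter.

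The defect is your last paragraph. As written, you never bound the number of greedy iterations. You defer it to a compression scheme you call delicate and do not make rigorous. Its premise, a greedy step that ``jumps only within a segment'', cannot occur given what you established one paragraph earlier. Every $\viska{k}{g}$ is a union of rectangles of $R$, so every nonempty interval $\viska{k}{g}\cap S$ is a union of whole segments. By induction, the current leftmost uncovered cell $x$ is then always the first cell of a segment. Any guard seeing $x$ sees that entire segment, and the chosen $r_g$ is the last cell of some segment. Each iteration therefore consumes at least one full segment, so the greedy terminates after at most as many iterations as there are segments, i.e., $\bO(n)$. Equivalently, whenever $S$ is coverable the optimum is at most the number of segments, since one guard per segment suffices.

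With this one-line bound the procedure is polynomial and outputs an explicit guard set of size $\bO(n)$; drop the compression step and the hedge about ``the intended setting''. The paper uses this fact when it says a quadratic number of cell pairs suffices: the interval-cover instance effectively lives on a ground set of $\bO(n)$ segments with $\bO(n^2)$ candidate intervals, so the greedy is polynomial.
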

\begin{proof}
	We show that this problem reduces to covering an interval with given subintervals.
	If~a~guard~$g$ sees two cells $s$ and $s'$ in the strip $S$, then $g$ also sees all cells between~$s$ and~$s'$ because $\rect(s,c)$ and $\rect(s',c)$ cover $S$ for all $c\in N_k(g)$. 
	Therefore, the visibility region of a guard restricted to $S$ forms a substrip of $S$. 
	Consequently, covering a strip with substrips corresponds to covering an interval with subintervals, see also \cref{fig:interval-cover}. 
	By \cref{{lem:guardable}}, we can check for any two cells in $S$ if there exists a \rguard{k} seeing both cells in polynomial time. 
	However, we need to restrict our attention to a polynomial number of interesting pairs of cells.
	To this end, note that $S$ intersects at most $n$ rectangles of $R$, so it suffices to test a quadratic number of cell pairs, and thus subintervals.
	(We may even further restrict our attention to maximal subintervals, therefore also $\bO(n\log n)$ pairs suffice when using binary search.)
	
	\begin{figure}[htb]
		\centering
		\includegraphics[]{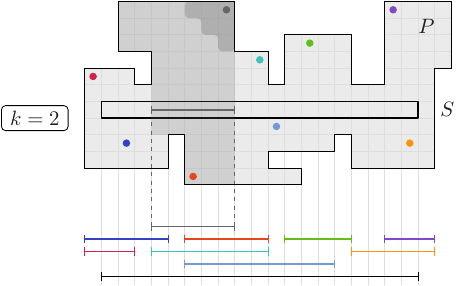}
		\caption{Illustrating the proof idea of \cref{lem:strip-guarding-interval-cover}: 
		Guarding a strip $S$ in $P$ corresponds to computing a minimum set of intervals to cover a given interval.}
		\label{fig:interval-cover}
	\end{figure}

	Finally, the interval cover problem can be solved by a greedy algorithm that iteratively chooses an interval that covers the first uncovered point and maximizes its end. 
	It is easy to see that this greedy strategy is optimal; 
	any optimal interval cover can be transformed into the greedy solution by repeatedly exchanging the first interval that differs between both solutions with an interval that covers possibly more points to the end.
	We remark that this strategy also identifies when no \rguard{k} set exists.
\end{proof}

\medskip

Using \cref{lem:strip-guarding-interval-cover,lem:separation-better}, we are now ready to describe an approximation algorithm. 

\begin{lemma}
    \label{lem:k-guard-embedded}
    There exists a $4k$-approximation algorithm to compute, given a hole-free \guardable{k} polyomino $P$ and an embedded histogram $H$ within $P$, a set of \rguards{k} contained in $P$ that sees $H$. 
	Its running time is polynomial in the number $n$ of vertices of $P$. 
\end{lemma}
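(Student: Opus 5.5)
We reduce guarding $H$ to guarding a family of strips, each handled optimally by \cref{lem:strip-guarding-interval-cover}, and charge the resulting guards to an optimal solution through a witness argument based on \cref{lem:separation-better}. Assume $H$ is upwards. As for histograms, partition $H$ into blocks (maximal horizontal rectangles of $H$ spanning between its left and right boundary) forming a rooted tree $T$. Choosing a single key cell per block is no longer enough, since guards may leave $H$ through the base. We therefore work with rows. Each block $B$ with top row $r_B$ touches the top boundary of $H$, and by property (i) of embedded histograms the cells above this boundary are outer cells of $P$.

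\textbf{Step 1 (algorithm).} For each block $B$, take a set of at most $2k+1$ strips: the top rows of $B$ together with suitable vertical strips. Cover each strip optimally by \cref{lem:strip-guarding-interval-cover}. This step guards $H$ provided the chosen strips ``control'' $B$. The claim to prove is that any \rguard{k} seeing the relevant strip pieces also sees everything of $B$ below them. This should follow from the restricted movement: a guard whose $k$-neighborhood reaches the top rows of $B$ lies within distance $k$ of them, and the outer cells above $r_B$ force the visibility rectangles downward. The count $2k+1$ is meant to absorb the rows a guard's $k$-neighborhood can straddle. If this is not enough, the fallback is to cover every row of $B$ within distance $k$ of the top of $H$, treating the part of $B$ farther than $k$ from the top boundary as automatically seen, since the guards there cannot reach outside $H$ in a harmful way.

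\textbf{Step 2 (lower bound).} For each strip $S$ handled in Step 1, the greedy interval cover produces a witness set. These are cells of $S$ such that no single \rguard{k} sees two of them, obtained as the left endpoints of the greedy intervals. Take one representative strip per ``level''. Apply \cref{lem:separation-better} to witnesses from different blocks that are not in an ancestor relation in $T$. The horizontal connecting set $S$ then leaves $H$ through the outer cells above the top boundary, so no guard sees both. As a result, witnesses of incomparable blocks are pairwise independent, while comparable blocks along a root path overlap in a controlled way. The total greedy cost is then at most $\sum_S |W_S|$. We want to show that every optimal guard is ``charged'' at most $4k$ times. Roughly, a guard can see witnesses from at most $2k+1$ rows near its position and from at most two branches of $T$, after separating the left and right side. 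This gives $2(2k)$ charges once the off-by-one in $2k+1$ is absorbed, using $k\ge1$.

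\textbf{Main obstacle.} The hard part is the charging in Step 2. An optimal guard may sit outside $H$, below an inside base, and see parts of several blocks, as in \cref{fig:embedded_difficulties}. I would first try to prove the structural claim that the set of strip-witnesses seen by one \rguard{k} lies in at most two ancestor chains of $T$ and within $2k$ consecutive rows. This is where hole-freeness of $P$ enters, through the general form of \cref{lem:separation-better}. Once this claim holds, the $4k$ bound follows by summing over strips.
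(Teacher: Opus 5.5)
\textbf{There is a genuine gap.} It is structural, not technical. You organize the strips per block, so the quantity $\sum_S|W_S|$ you want to charge is at least the number of blocks. But a single $k$-restless guard can see an entire root-to-leaf chain of $T$ of arbitrary length.

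Concretely, let $P=H$ be a histogram built as follows. Its bottom block occupies rows $1,\dots,k+1$ and columns $1,\dots,2L+2k+1$. On top sit $L$ one-row blocks whose column ranges shrink by one on each side per row, the topmost having width $2k+1$. Place a guard in row $1$ at the central column. Its $k$-neighborhood lies in rows $1,\dots,k+1$ and within the columns of the top block. Every cell of this neighborhood sees all of $H$, because the rows are nested intervals that all contain the top block's columns. Hence $\opt=1$.

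Your Step 1 still covers the top row of each of the $L+1$ blocks separately. Each cover uses at least one guard and has a nonempty witness set, so $\sum_S|W_S|\ge L+1$, and no charging can give $4k$. In particular, the structural claim you planned to prove is false: the witnesses seen by one guard need not lie within $2k$ consecutive rows, since here one guard sees witnesses in $L+1$ different rows.

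Step 1 is also unsupported on its own terms. In an embedded histogram, only the cells above the horizontal top edges are guaranteed to be outer cells. The vertical sides of $H$, and hence of a block, may border cells of $P\setminus H$. A guard chosen to cover the top row of a block can therefore sit in a one-row corridor of $P$ beside that row and see nothing below it. Your fallback, that cells farther than $k$ from the top are ``automatically seen'', has no basis, since no guard was placed for them.

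\textbf{The missing idea} is to decompose $H$ globally, by rows and strip widths rather than by blocks, so that each of $4k$ subproblems is solved optimally and has optimum at most $\opt$. The paper uses three parts:
\begin{itemize}
\item \emph{Bottom part.} Take the bottom $2k-1$ rows of $H$. In each row, the union of optimal strip covers from \cref{lem:strip-guarding-interval-cover} is optimal for that row, because strips in one row cannot share a guard by \cref{lem:separation-better}. This costs at most $(2k-1)\cdot\opt$.
\item \emph{Narrow part $T$.} Take the remaining maximal horizontal strips of width at most $2k$. Cover them by vertical strips, split into $2k$ classes by $x \bmod 2k$. Two cells of $T$ whose columns differ by at least $2k$ cannot share a guard, again by \cref{lem:separation-better}. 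This costs at most $2k\cdot\opt$.
\item \emph{Wide rest $R$.} Place one guard in row $k$ below the center of each top outer dent of $R$. Its $k$-neighborhood lies inside the rectangle of width $>2k$ under the dent, so it sees that rectangle and thereby all of $R$. Distinct dents need distinct guards, so this costs at most $\opt$.
\end{itemize}
The total is $(2k-1)+2k+1=4k$ guards per optimal guard, independent of the depth of $T$.
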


\begin{proof}
    Let $P$ be a hole-free polyomino, and let $H$ be a histogram embedded in~$P$.
    We start by partitioning $H$ into three subsets, a bottom set $B$, a top set $T$, and a rest $R$. For an illustration, see~\cref{fig:k-guard-emb}. 
    In a polyomino, we call the set of all cells that share the same $y$-coordinate a \emph{row}. 
	We count rows from bottom to top, i.e., the first row of a histogram is its baseline.
    The bottom set $B$ contains the $2k-1$ bottommost rows of $H$. 
    We then partition the remaining cells into maximal horizontal strips. 
	All strips of width $\leq 2k$ are contained in the top set $T$ and the rest is assigned to the set $R$. 
    
    \begin{figure}[htb]
		\vspace{1mm}
    	\centering
    	\includegraphics[]{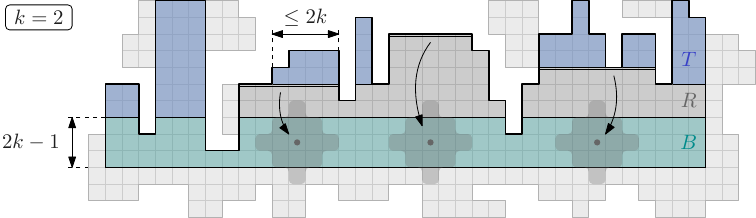}
    	\caption{An embedded histogram and its bottom set $B$, top set $T$ and rest set $R$.
		To guard $R$, we place a guard (gray) at the $k$th row centered below all top \outerdents in~$R$ (double lines).}
    	\label{fig:k-guard-emb}
		\vspace{1mm}
    \end{figure}

	\pagebreak
    
    We now describe how to \guard{k} all three sets separately. Let $\opt$ denote the number of \rguards{k} in an optimal solution.
    In particular, we use at most
    $(2k-1)\cdot\opt$ guards for $B$, $2k\cdot\opt$ guards for $T$, and $1\cdot \opt$ guards for $R$. 
	Together, this yields $4k\cdot\opt$ \rguards{k} and implies the claim.

    Applying \cref{lem:strip-guarding-interval-cover}, we compute an optimal guard set for every maximal horizontal strip in~$B$.
    If several horizontal strips lie in the same row, \cref{lem:separation-better} guarantees that the union of optimal guard sets remains optimal for the union of the strips in one row. 
	Note that \cref{lem:separation-better} crucially requires~$P$ to be hole-free.
    Clearly, optimally guarding any subset of~$P$ (i.e., any strip) requires at most $\opt$ guards.
    Therefore, taking the union of the guard sets for the~$2k-1$ rows in $B$ yields a set containing at most $(2k-1)\cdot\opt$ \rguards{k}.

    For guarding $T$, we apply \cref{lem:strip-guarding-interval-cover} to \emph{vertical} strips. 
    We observe that in $T$, no \rguard{k} can see cells that are more than $2k$ cells apart; 
	otherwise, by \cref{lem:separation-better}, $T$ would contain a horizontal strip of width $>2k$.
    We can therefore distribute our vertical strips into~$2k$ sets $T_1,\dots, T_{2k}$ based on their $x$-coordinate $\text{mod }2k$.
    In every set $T_i$, \cref{lem:separation-better} ensures that no two vertical strips can be guarded by the same \rguards{k}. 
	Therefore, each $T_i$ is optimally guarded, and the total number of guards for $T$ is at most $2k\cdot\opt$ \rguards{k}. 
    
    We consider $R$ as a set of histograms. 
	By construction, every top \outerdent of histograms in $R$ has a length exceeding $2k$. 
    To guard $R$, we place a \rguard{k} at the $k$th row in $B$ at the center of each top \outerdent in $R$, as illustrated in \cref{fig:k-guard-emb}. 
   	We argue that this set of guards is valid.
   	For a fixed outer dent, consider the rectangle $r$ that contains its top edge and reaches to the histogram base; 
	if the histogram has an \fullbase, we extend $r$ vertically to also include these cells. 
	Hence $r$ has width $>2k$.
	Its height is $>2k$ (and a boundary to the bottom), or~$>2k+1$, if the histogram has an \fullbase.
	In both cases, the $k$-neighborhood of our \rguard{k} $g$ is contained in~$r$. 
	Therefore,~$g$ guards~$r$, and every cell below an outer dent is guarded. 
	By the structure of histograms, this implies that every cell in $R$ is seen.
   	Finally, by \cref{lem:separation-better}, each \outerdent requires a distinct \rguard{k}. 
	Therefore, the constructed guard set for $R$ is minimum. 
	
	Lastly, we modify the described strategy to obtain a running time independent of $k$. 
	In particular, we apply \cref{lem:strip-guarding-interval-cover} only to a subset of at most $n^2$ \emph{crucial} vertical and horizontal strips.
	Let $R$ be the partition of $P$ into rectangles obtained by shooting a horizontal and vertical ray at each reflex corner, see also~\cref{fig:partition0main}.
	By \cref{claim:partition}(\ref{item:ii}), the visibility region of a guard is a union of rectangles in $R$. 
	Therefore, if two strips intersect the same set of rectangles, a \rguard{k} set for one strip also sees the other strip. Hence, it suffices to compute a guard set for one of these strips. 
	We can easily (and independent of $k$) identify a set of \emph{crucial} strips each of which we can charge to a rectangle of $R$. 
	Each rectangle is charged by at most one vertical and one horizontal strip.

	This completes the proof.
\end{proof}

\subparagraph*{From embedded histograms to polyominoes.}
It remains to show that an approximation algorithm for embedded histograms implies an approximation for polyominoes. 
We show that the approximation factor increases by a factor of $5$, which follows by an adaptation of a result of Lingas, Wasylewicz, and \.Zyli\'nski~\cite{3approx}.
In particular, they argue that in a partition into histograms, any rectangular visibility polygon of any point guard is contained in the union of at most five histograms. 
For an example, consider the (blue) visibility region \viska{0}{g} of $g$ in~\cref{fig:5}.
As $\viska{k}{g}\subseteq \viska{0}{g}$, this property carries over to \rguards{k}.

\begin{restatable}[{\thmstar{lem:5}{proof:fiveapproxlemma}}]{lemma}{fiveapproxlemma}
	\label{lem:5}
	An $\alpha$-approximation for computing \guard{k} sets for embedded histograms yields a $5\alpha$-approximation for \guarding{k} hole-free polyominoes. 
\end{restatable}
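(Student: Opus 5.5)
The plan is to partition $P$ with \cref{lem:partition} into embedded histograms $H_1,\dots,H_r$. For each $H_i$ we run the $\alpha$-approximation, obtaining a \rguard{k} set $G_i\subseteq P$ that sees $H_i$, and we output $G=\bigcup_i G_i$. Validity is immediate: every cell of $P$ lies in some $H_i$ and is therefore seen by a guard of $G_i$. The running time is polynomial because there are $\bO(n)$ histograms, each with polynomially many vertices, and each call is polynomial in the number of vertices of $P$. It remains to show $|G|\le 5\alpha\cdot\opt(P)$.

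For the bound, let $G^\ast$ be an optimal \rguard{k} set of $P$. For each histogram $H_i$, let $G^\ast_i\subseteq G^\ast$ be the guards $g$ with $\viska{k}{g}\cap H_i\neq\emptyset$. Since $G^\ast$ sees all of $H_i$, the set $G^\ast_i$ is a feasible \rguard{k} set for $H_i$ in $P$, so $\opt(H_i)\le |G^\ast_i|$. Hence
\[
|G|\le\sum_i |G_i|\le \alpha\sum_i \opt(H_i)\le \alpha\sum_i |G^\ast_i|=\alpha\sum_{g\in G^\ast}\bigl|\{i : \viska{k}{g}\cap H_i\neq\emptyset\}\bigr| .
\]
So it suffices to show that the visibility region of every single guard meets at most five of the histograms.

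This is the key step, and I would take it from Lingas, Wasylewicz, and \.Zyli\'nski~\cite{3approx}. They show that in the Levcopoulos partition, the rectangular visibility region of any point meets at most five histograms. I would argue this roughly as follows, or cite it directly. Let $H$ be the histogram containing the point. The partition is built as a tree, where each histogram hangs off its parent along its base segment. The region seen from the point is orthogonally convex (\cref{obs:basicproperties}). Because the top boundaries of $H$ are boundaries of $P$ (property (i) of embedded histograms), the region can leave $H$ only through its base, into the parent. It can also leave through the bases of children of $H$, but rectangular visibility together with orthogonal convexity limits this to at most one child histogram in each of the remaining directions. The count adds up to at most five.

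Then we apply this to a \rguard{k} $g$. Its midpoint is a point, and $\viska{k}{g}\subseteq\viska{0}{g}$ by \cref{obs:basicproperties}, so $\viska{k}{g}$ also meets at most five histograms. Plugging this in gives $|G|\le 5\alpha|G^\ast|$.

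The main obstacle is verifying the five-histogram bound for our embedded histograms: one must check that the partition of \cref{lem:partition} is exactly the one analysed in~\cite{3approx}, and that cell-based visibility from cell midpoints does not break their argument. If a direct citation does not fit, I would redo the case analysis on how an orthogonally convex visibility region crosses base segments of the histogram tree.
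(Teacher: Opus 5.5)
Your proposal is correct and takes the same approach as the paper: partition into embedded histograms via \cref{lem:partition}, run the $\alpha$-approximation on each piece, charge each optimal guard $g$ to the histograms that meet $\viska{k}{g}$, use $\opt(H_i)\le|G^\ast_i|$, and import the five-histogram bound of Lingas, Wasylewicz, and \.Zyli\'nski~\cite{3approx} via $\viska{k}{g}\subseteq\viska{0}{g}$. As in the paper, that bound should simply be cited; your informal sketch of it (parent plus at most one child per side) states the count of five without deriving it, and the proof does not need it.
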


\subparagraph*{The final step.}
Together, \cref{lem:5,lem:k-guard-embedded,prop:existencePolytime} yield the claim. First, we check with \cref{prop:existencePolytime} whether $P$ is \guardable{k}. If so, we partition $P$ into histograms and use \cref{lem:k-guard-embedded} to compute a $4k$-approximation for each embedded histogram. By \cref{lem:5}, the union of these guard sets yields a $20k$-approximation for $P$. 
All steps have runtimes that only depend on the number $n$ of vertices of $P$ and $k$.

\section{Improved approximation factor for small $k$}
\label{sec:better-approx-fixed-k}

With~\cref{thm:k-guard-poly} we established an approximation algorithm for arbitrary $k$. 
For small $k$, however, a refined analysis of the arguments in the proof of~\cref{thm:k-guard-poly} allows us to obtain stronger guarantees. 
In particular, for $k=1$, we can optimally guard each of the three sets of the decomposition  (separately).
This yields the following result:

\begin{restatable}[{\thmstar{thm:1-guard-poly}{proof:thmBetterAlgorithmForHoleFreeKOne}}]{theorem}{thmBetterAlgorithmForHoleFreeKOne}
	\label{thm:1-guard-poly}
	For any constant $\varepsilon > 0$, there exists a $(10+\varepsilon)$-approximation algorithm that, given a hole-free polyomino $P$, computes a \rguard{1} set  in time polynomial in the number of vertices of $P$.
\end{restatable}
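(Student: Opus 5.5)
We follow the three-step scheme from the proof of \cref{thm:k-guard-poly}: decompose $P$ into embedded histograms by \cref{lem:partition}, guard each embedded histogram, and combine via \cref{lem:5}. Since $10+\varepsilon = 5\cdot(2+\varepsilon/5)$, it suffices to give a $(2+\varepsilon')$-approximation for guarding a single embedded histogram $H$ with \rguards{1}, where $\varepsilon'=\varepsilon/5$. Note that for $k=1$ the instance is always guardable by \cref{prop:existence}, so no existence check is needed. We reuse the partition of $H$ from the proof of \cref{lem:k-guard-embedded}. For $k=1$, the bottom set $B$ consists of the single bottommost row, $T$ consists of the maximal horizontal strips of width at most $2$ above it, and $R$ consists of the remaining cells.

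For $B$ we proceed as before. By \cref{lem:strip-guarding-interval-cover} each strip in the base row is guarded optimally, and by \cref{lem:separation-better} the union over strips in that row is optimal; this uses at most $\opt_B\le\opt$ guards. For $R$, the construction in \cref{lem:k-guard-embedded} is already optimal for $R$, with one guard per top \outerdent of length $>2$ and distinct dents requiring distinct guards. This gives at most $\opt_R$ guards.

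The only lossy part for general $k$ was $T$, where we paid $2k=2$. We aim instead to guard $T$ with at most $(1+\varepsilon')\opt_T$ guards, or even optimally. The structure of $T$ for $k=1$ is very restricted. By \cref{lem:separation-better}, a guard seeing two cells of $T$ sees cells at horizontal distance at most $1$ apart, so $T$ decomposes into columns of width at most $2$ (towers above top \outerdents of length $1$ or $2$). Each tower is independent of the others: no guard sees cells in two different towers, again by \cref{lem:separation-better}, since the connecting horizontal strip would leave $H$ or would give a strip of width $>2$. Therefore $\opt_T$ is the sum of the optima over towers. Each tower is a $w\times h$ ($w\le 2$) strip-like region whose guards' visible parts are, by orthogonal convexity (\cref{obs:basicproperties}), unions of subintervals of its two columns. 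A width-$1$ tower is handled exactly by \cref{lem:strip-guarding-interval-cover}. For a width-$2$ tower we set up a dynamic program sweeping bottom to top. The state records, for each of the two columns, the topmost covered row, restricted to the polynomially many breakpoints given by the rectangle partition $R$ of \cref{claim:partition}. Candidate guards are also restricted to representatives per rectangle of that partition, and their intersections with the tower are computed via \cref{lem:guardable}. If an exact DP turns out to be hard to make polynomial in the encoding, we instead use a local-search or shifting PTAS over groups of $\lceil 1/\varepsilon'\rceil$ consecutive breakpoint rows, paying $(1+\varepsilon')$. This is where the $\varepsilon$ in the statement comes from.

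Summing, $H$ is guarded with at most $\opt_B+(1+\varepsilon')\opt_T+\opt_R\le(3+\varepsilon')\opt$ guards in the naive count. This is too weak, so the key step, and the main obstacle, is to avoid charging $B$ and $R$ separately. We plan to merge $B$ and $R$: the guards for $R$ sit in the first row ($k$th row with $k=1$) and already see the base cells below their dents. We therefore guard only the part of $B$ not covered by them, arguing via \cref{lem:separation-better} that the combined instance on the base row plus the $R$-dents is still an interval-cover-type problem solved optimally. This gives $\opt_{B\cup R}+(1+\varepsilon')\opt_T\le(2+\varepsilon')\opt$, and \cref{lem:5} then yields $10+\varepsilon$. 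Verifying that this merging preserves optimality is the delicate step. In particular, we must check that a guard covering base cells under a wide dent is never better spent elsewhere, which we would prove by an exchange argument with the centered dent guard.
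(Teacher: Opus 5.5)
Your outer frame matches the paper: a $(2+\varepsilon/5)$-approximation per embedded histogram, combined via \cref{lem:5}. But the step you call the main obstacle, merging $B$ and $R$ into one instance solved exactly, is left unproven. The observation that makes it unnecessary is missing: for $k=1$ the set $B$ is just the base strip of $H$, and a \emph{single} $1$-restless guard always sees all of it. If the row directly above $B$ contains a cell not in $H$ (an outer cell of $P$, by the embedded-histogram property), put the guard in $B$ directly below it; otherwise put it anywhere in $B$. Every cell of its $1$-neighborhood then sees $B$. Such a cell is either a cell of $B$, or the cell below (only present for an inside base, and it sees $B$ through the full row underneath), or the cell above. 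The cell above is either outside $P$ or lies in a part of the second row that is entirely in $H$.

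So $B$ costs an additive $+1$, not $\opt_B\le\opt$; your own interval-cover step on $B$ would in fact return one guard. The count becomes $\opt_R+\opt_T+1\le 2\opt(H)+1$ once $T$ is solved exactly. The $\varepsilon$ comes from absorbing this $+1$, not from $T$. If $\opt(H)\le c$ for a constant $c$, the paper finds an optimal set by checking all guard sets of size at most $c$. Otherwise $2\opt(H)+1\le(2+\tfrac{1}{c+1})\opt(H)$. Your merging could never gain more than this one guard anyway, since $\opt_R\le\opt_{B\cup R}\le\opt_R+1$.

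Your treatment of $T$ is also not yet a proof. The DP with state ``topmost covered row in each column'' assumes an optimal solution can be organized as bottom-up prefixes in each column, which you do not justify. The fallback ``local-search or shifting PTAS'' is unspecified. No approximation is needed there. Take a width-$2$ tower with width-$2$ part $\Pi$ and width-$1$ part $I$ on top. The paper uses that a guard seeing a cell of $\Pi$ and a cell $c\in I$ sees every cell of $I$ below $c$. A guard seeing the bottom-left cell of $\Pi$ and $c$ additionally sees all of $\Pi$. The paper therefore picks one guard seeing $\Pi$ together with the longest bottom substrip of $I$, found via \cref{lem:guardable} and a search over polynomially many breakpoints. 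It covers the rest of $I$ by the interval cover of \cref{lem:strip-guarding-interval-cover}, and proves optimality by an exchange argument. Width-$1$ towers are plain interval cover, and towers are independent by \cref{lem:separation-better}, as you correctly note.
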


\section{Conclusion}
\label{sec:conclusions}
Motivated by locational uncertainty, we introduced the art gallery problem with restless guards that may move in the vicinity of their prescribed location. 
Full coverage must be guaranteed no matter the exact position of guards.
We investigated this problem for polyominoes with rectangular visibility and establish tight worst-case bounds, show \NP-hardness, and develop approximation algorithms.

    \bibliography{references}

    \appendix
    \restatingthmtrue
    

\section{Omitted details for~\texorpdfstring{\cref{sec:first-observations}}{Section~\ref{sec:first-observations}}}
\label{app:first-observations}

In this section, we lay the groundwork for efficiently answering visibility queries inside a polyomino.
In particular, we prove \cref{prop:existencePolytime,lem:guardable}.

\hypertarget{proof:existencePolytime}{}
\existencePolytime*

\begin{proof}
	Clearly, $P$ is \guardable{k} if and only if all its cells are.
	Therefore, the proof is based on the following characterization.
	
	\begin{claim}
		\label{claim:char}
		A cell $c$ in $P$ is \guardable{k} if and only if $\viska{0}{c}$ contains the $k$-neighborhood of some guard position in $P$. 
	\end{claim}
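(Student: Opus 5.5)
The claim follows by unfolding the definition $\viska{k}{g}=\bigcap_{c'\in N_k(g)}\viska{0}{c'}$ and applying the symmetry property from \cref{obs:basicproperties}. A cell $c$ is \guardable{k} exactly when some \rguard{k} $g$ placed in $P$ has $c\in\viska{k}{g}$. So I will show, for every fixed guard position $g$, that $c\in\viska{k}{g}$ holds if and only if $N_k(g)\subseteq\viska{0}{c}$. Quantifying over all $g$ then gives both directions of the claim at once.

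For the forward direction, assume $c\in\viska{k}{g}$. Then $c\in\viska{0}{c'}$ for every $c'\in N_k(g)$. Symmetry ($c\in\viska{0}{c'}\iff c'\in\viska{0}{c}$) gives $c'\in\viska{0}{c}$ for each such $c'$, and hence $N_k(g)\subseteq\viska{0}{c}$.

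For the backward direction, assume $N_k(g)\subseteq\viska{0}{c}$ for some guard position $g$. Every $c'\in N_k(g)$ lies in $\viska{0}{c}$, so symmetry gives $c\in\viska{0}{c'}$. Intersecting over all $c'\in N_k(g)$ yields $c\in\viska{k}{g}$, so $g$ sees $c$ and $c$ is \guardable{k}.

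The only point needing care is the cell-versus-point convention. Guards sit at cell midpoints, and "$c\in\viska{0}{c'}$" means that the whole cell $c$ lies in the visibility region of the midpoint of $c'$. Symmetry must be read in this same sense: $\rect$ of the two midpoints lies in $P$ exactly when, for every point $p\in c$, $\rect(\text{mid}(c'),p)\subseteq P$. This holds because $P$ is a union of grid cells. The rectangle spanned by the two midpoints lies in $P$ iff the bounding box of the two cells does, and that bounding box contains $\rect(\text{mid}(c'),p)$ for every $p\in c$; conversely, taking $p=\text{mid}(c)$ recovers the midpoint rectangle. Once this is recorded, which I expect to be a one-line remark rather than a real obstacle, the two inclusions above are immediate.
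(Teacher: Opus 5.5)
Your proof is correct and is essentially the paper's argument: you unfold $\viska{k}{g}=\bigcap_{c'\in N_k(g)}\viska{0}{c'}$ and apply the symmetry of rectangular visibility from \cref{obs:basicproperties} to each cell of the $k$-neighborhood. Your extra remark, checking via the bounding box of the two cells that symmetry holds for whole cells and not just midpoints, is a valid elaboration of a point the paper leaves implicit.
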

	
	\begin{claimproof}
		The cell $c$ is \guardable{k} if and only if there exists a guard $g$ with $c\in \viska{k}{g}$ and thus $c\in \viska{0}{g'}$ for all $g'\in N_k(g)$. 
        By symmetry of rectangular visibility, this holds if and only if $g'\in \viska{0}{c}$ for all $g'\in N_k(g)$, i.e., $\viska{0}{c}$ contains the $k$-neighborhood of~$g$.\claimqedhere
	\end{claimproof}
	
	There thus remain two important steps. 
    First, as the number of cells might be exponential in the encoding of $P$, it is crucial to bound the number of interesting cells. 
    Second, we need to efficiently check whether the \vregion contains a $k$-neighborhood; 
    in particular, we want to do so in a running time independent of $k$.
	
	To this end, we show that there exists a partition of $P$ into at most $n^2$ rectangles such that any two cells in the same rectangle are either both guardable or not. 
    Moreover, all \vregions can be expressed as a union of rectangles from the partition.
	In particular, we partition $P$ by shooting two rays, one horizontal and one vertical, from every reflex vertex. 
    This yields a partition $R$ into at most $n^2$ rectangles. 
    For an illustration, consider \cref{fig:partition0}.
	
	\begin{figure}[htb]
		\centering
		\captionsetup[subfigure]{justification=centering}
		\begin{subfigure}{0.5\columnwidth}%
			\centering
			\includegraphics[page=2]{bruteForce.pdf}
			\subcaption{}
			\label{fig:partition0-a}
		\end{subfigure}%
		\begin{subfigure}{0.5\columnwidth}%
			\centering
			\includegraphics[page=4]{bruteForce.pdf}
			\subcaption{}
			\label{fig:partition0-b}
		\end{subfigure}%
		\caption{Illustration for the proof of \cref{prop:existencePolytime}.
			(a) The partition $R$ of $P$ is obtained by shooting two (red) rays from each reflex vertex. 
			(b) A cell $x\in\viska{0}{c}\setminus\viska{0}{c'}$ implies the existence of a vertex in $\viska{0}{c'}$, yielding a ray separating $c$ from $c'$.} 
		\label{fig:partition0}
	\end{figure}
	
	We show that $R$ has the desired properties.
	
	\begin{claim}\label{claim:partition}
		The partition $R$ of $P$ into at most $n^2$ rectangles has the following properties:
		\begin{enumerate}[(a)]
			\item \label{item:i} For any two cells $c$ and $c'$ of a rectangle $r$ in $R$ it holds that: $\viska{0}{c}=\viska{0}{c'}$.
			\item \label{item:ii} The visibility region $\viska{k}{g}$ of any \rguard{k} $g$ is the union of rectangles from $R$.
		\end{enumerate} 
	\end{claim}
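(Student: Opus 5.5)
For \cref{claim:partition}(\ref{item:i}), I will prove the contrapositive, following the idea indicated in \cref{fig:partition0-b}. Let $c,c'$ be two cells with $\viska{0}{c}\neq\viska{0}{c'}$, say there is a cell $x\in\viska{0}{c}\setminus\viska{0}{c'}$. I will show that some ray of $R$ separates $c$ from $c'$. Suppose instead that $c$ and $c'$ lie in a common rectangle $r\in R$. Then $r\subseteq P$, so $\rect(c,c')\subseteq r\subseteq P$. Since $\rect(c',x)\subseteq\rect(c,c')\cup\rect(c,x)$ up to the region spanned by the coordinates of $c'$ and $x$, the rectangle $\rect(c',x)$ must contain a point outside $P$. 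I will pick an outer cell $o$ in $\rect(c',x)\setminus(\rect(c,x)\cup r)$ that is extremal, meaning it is closest to $c$ in both coordinates. Then the boundary of $P$ near $o$ has a reflex vertex $v$ whose coordinates lie strictly between those of $c$ and $c'$ in at least one axis. This uses that $\rect(c,x)$ and $r$ are in $P$, so $o$ sits at a corner of the outer region that is convex as seen from outside, which means reflex for $P$. Concretely, I will choose $v$ as the corner of the outer component that is nearest to $\rect(c,x)$. The horizontal or vertical ray shot from $v$ then runs inside $P$, through $\rect(c,x)\cup r$, until it meets the boundary. It therefore crosses $r$ between $c$ and $c'$, contradicting that $r$ is a single rectangle of $R$. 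The main obstacle is to carefully choose $v$ and the ray direction so that the ray really passes between $c$ and $c'$ without hitting the boundary first. I would handle this by a case distinction on the relative position of $x$ with respect to $r$: whether $x$ lies in the horizontal span of $r$, in its vertical span, or in a diagonal quadrant.

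For \cref{claim:partition}(\ref{item:ii}), I first show that $\viska{0}{c}$ is a union of rectangles of $R$ for every cell $c$. Let $x\in\viska{0}{c}$, and let $x'$ be another cell in the same rectangle $r'\in R$ as $x$. By symmetry (\cref{obs:basicproperties}), $c\in\viska{0}{x}$. Part (\ref{item:i}) gives $\viska{0}{x}=\viska{0}{x'}$, hence $c\in\viska{0}{x'}$, and by symmetry again $x'\in\viska{0}{c}$. So $\viska{0}{c}$ contains all of $r'$. Finally, $\viska{k}{g}=\bigcap_{c\in N_k(g)}\viska{0}{c}$ is an intersection of unions of rectangles of the same partition, so it is again a union of rectangles of $R$.

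The bound of at most $n^2$ rectangles is a routine count. The at most $n$ rays split $P$ into cells of the arrangement of at most $n$ horizontal and $n$ vertical lines, which gives $\bO(n^2)$ pieces. The constant needs a small argument: at most $n/2$ distinct lines exist in each direction, since reflex vertices come with shared coordinates.
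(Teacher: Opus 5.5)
Your proposal follows essentially the same route as the paper: for (a) both arguments locate a vertex of $P$ in $\rect(c',x)\setminus(\rect(c,x)\cup r)$ whose ray would cut through $r$ between $c$ and $c'$, and for (b) both use symmetry together with (a) and then intersect over $N_k(g)$. Two small repairs: drop the hedged containment sentence, which is not needed since $\rect(c',x)\not\subseteq P$ is just $x\notin\viska{0}{c'}$; and, because a cell ``closest to $c$ in both coordinates'' need not exist, choose $o$ lexicographically (first nearest to $r$, then nearest to $\rect(c,x)$), which is exactly what makes the corner of $o$ facing $r$ and $\rect(c,x)$ a reflex vertex whose ray points toward $r$ unobstructed --- a detail the paper's sketch also leaves implicit.
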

	\begin{claimproof}
		\begin{alphaenumerate}
			\item
			Suppose there exists a cell $x\in\viska{0}{c}\setminus\viska{0}{c'}$. 
			Then the set $S:=\rect(c',x)\setminus \rect(c,x)$ is not entirely contained in $P$; 
			otherwise $x\in \viska{0}{c'}$. 
			This implies that some edge $e$ intersects $S$, see \cref{fig:partition0-b} for an illustration.
			Note that $S$ is bounded by a vertical edge from $\viska{0}{c}$ and a horizontal edge from $r$ or vice versa.
			Therefore, the edge $e$ must end in $S$; 
			otherwise its interior intersects $\rect(c,x)$ or $r$, a contradiction to the fact that $c$ sees $x$ or that $r$ contains no vertex in its interior.
            In other words, if $e$ ends in $S$, then $S$ contains a vertex. 
			This implies a ray separating $c$ and $c'$; this is a contradiction to the fact that $c$ and $c'$ stem from the same rectangle. 

			\item By definition, the visibility region $\viska{k}{g}$ of a \guard{k} $g$ is the intersection of $\viska{0}{c}$ for all $c\in N_{k}(g)$, i.e., $\viska{k}{g}=\bigcap_{c\in N_{k}(g)}\viska{0}{c}$. 
            Therefore, it suffices to show that $\viska{0}{c}$ is a union of rectangles of $R$.
			By~\cref{obs:basicproperties} we have the symmetry  
			$a\in \viska{0}{c}\iff c\in \viska{0}{a}$. 
			Therefore, if $a\in \viska{0}{c}$, and $a'$ lies in the same rectangle as $a$, we have $c\in\viska{0}{a}=\viska{0}{a'}$ by \hyperref[item:i]{{\sffamily{\textbf{(a)}}}} and hence $a'\in \viska{0}{c}$. 
			In other words, if $c$ sees one cell of a rectangle in $R$ it sees the entire rectangle. \claimqedhere
		\end{alphaenumerate}
	\end{claimproof}
	
	With property \hyperref[item:i]{{\sffamily{\textbf{(a)}}}} of \cref{claim:partition} at hand it remains to pick an arbitrary cell in each rectangle in $R$ and verify that it is \guardable{k}.
	To this end, we introduce \cref{clm:checkNeighborhood}.

	\begin{claim}\label{clm:checkNeighborhood}
		There exists a polynomial-time algorithm to check whether the visibility region $\viska{0}{c}$ of a cell $c$ contains the $k$-neighborhood of some guard position in the polyomino $P$.
	\end{claim}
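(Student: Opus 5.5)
The plan is to reduce the question to a polynomial number of candidate guard positions and then perform a containment test for each of them that does not depend on $k$. By \cref{claim:partition}(\ref{item:i}) and (\ref{item:ii}), $\viska{0}{c}$ is a union of rectangles of $R$. It is also orthogonally convex and hole-free by \cref{obs:basicproperties}. Its boundary can therefore be computed explicitly as a rectilinear polygon with $\bO(n^2)$ vertices. To do this, we test each rectangle $r\in R$ using a single representative cell of $r$. Testing whether $\rect(c,a)\subseteq P$ reduces to checking that no edge of $P$ crosses the interior of an axis-aligned rectangle, which takes $\bO(n)$ time. Call the resulting subpolyomino $Q:=\viska{0}{c}$. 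The task is then to decide whether some cell $g\in P$ satisfies $N_k(g)\subseteq Q$, where $N_k$ is measured in the dual graph of $P$, not of $Q$.

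The key observation is that $N_k(g)\subseteq Q$ holds if and only if no cell of $P\setminus Q$ lies within dual distance $k$ of $g$. Equivalently, the dual-graph distance in $P$ from $g$ to $P\setminus Q$ must exceed $k$. Every cell $g$ with $N_k(g)\subseteq Q$ lies in $Q$ (since $g\in N_k(g)$), so it suffices to compute $\max_{g\in Q} d_P(g, P\setminus Q)$ and compare it with $k$. If $Q=P$, the answer is trivially yes. I would compute this maximum with the same compression trick. Refine $R$ by additionally shooting rays from the reflex vertices of $Q$. This gives a grid-like partition $R'$ of $P$ into $\bO(n^4)$ rectangles, each either inside $Q$ or disjoint from it. The distance function $d_P(\cdot, P\setminus Q)$ is then governed by a shortest-path computation on a weighted graph whose nodes are the corners and sides of the rectangles of $R'$. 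Inside a single rectangle, the dual distance to a side is just the coordinate offset, so the distance from a cell is the minimum over sides of its offset plus the distance label of that side. Edge weights are side lengths, and Dijkstra runs in time polynomial in $n$.

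The maximum over cells within one rectangle of $R'$ is attained where a minimum of $\bO(1)$ affine $L_1$-type functions in the cell coordinates is largest. This can be evaluated in constant time per rectangle, once we handle that distances from boundary sides vary linearly along the sides; to account for this, I would subdivide sides at the breakpoints. Taking the witnessing cell then identifies the guard $g$ as well.

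I expect the main obstacle to be making this distance computation rigorous. Dual-graph distance in a polyomino is the geodesic $L_1$ distance among cells, and one must argue that the distance labels along a side of a rectangle in $R'$ are piecewise linear with few pieces. The intended argument is that geodesic $L_1$ distances in a rectilinear domain are realized by staircase paths bending only at reflex vertices, which are vertices of $R'$. If this proves delicate, the fallback is a coarser but clearly correct test. Compute an $L_1$ geodesic Voronoi-type distance only from the $\bO(n^2)$ vertices of $Q$ that are reflex in $Q$ but not boundary vertices of $P$, since the nearest cell of $P\setminus Q$ can always be chosen adjacent to such a boundary portion. Then evaluate the finitely many candidate positions of $g$ given by extreme points of the $k$-offset region.
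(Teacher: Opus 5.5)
Your reduction is correct: with $Q=\viska{0}{c}$, a suitable guard exists if and only if $\max_{g\in Q} d_P(g,P\setminus Q)>k$. The gap is that computing this maximum in time independent of $k$ is the whole content of the claim, and you leave it open.

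\begin{itemize}
\item The Dijkstra scheme on corners and sides of $R'$, with side lengths as weights and one ``distance label'' per side, is not well defined, because the distance varies along a side.
\item The bound you would need, piecewise linearity with polynomially many breakpoints, is only stated as the ``intended argument''. So the claim that each rectangle involves only $\bO(1)$ affine functions is unsupported.
\item The fallback is wrong. The nearest cell of $P\setminus Q$ is typically reached through the interior of an exit edge, not near a reflex vertex of $Q$. Take a plus-shaped polyomino with arms of width one and $c$ in the left arm. Then $Q$ is the $3\times 1$ horizontal bar, which has no reflex vertices at all. Yet the cells above and below its middle cell lie in $P\setminus Q$ at distance $1$, so the fallback would see no obstruction.
\end{itemize}

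The missing idea, which makes all the geodesic machinery unnecessary, is to use the orthogonal convexity of $Q$. You cite it, but only to compute $Q$. It gives the following chain of facts.

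\begin{itemize}
\item A shortest path from $g\in Q$ to $P\setminus Q$ stays in $Q$ until its last step. That step crosses an \emph{exit}, i.e., an edge of $\partial Q$ that is not on $\partial P$.
\item Since $Q$ is connected and orthogonally convex, any two of its cells are joined by a monotone staircase inside $Q$. Hence dual distances within $Q$ equal Manhattan distances.
\item Therefore $d_P(g,P\setminus Q)=1+\min\|g-q\|_1$, taken over the cells $q\in Q$ incident to exits. This is an explicit minimum of $L_1$ distances to the $\bO(n^2)$ exit edges. No refined partition $R'$, Dijkstra, or breakpoint analysis is needed.
\end{itemize}

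This is what the paper exploits. For each rectangle $r\subseteq Q$ of $R$ it starts from the candidate set $C=r$ and processes each exit. The cells of $r$ within distance $k$ of that exit are removed by a single cut. The cut is horizontal or vertical if the exit is a translate of a side of $r$, and diagonal if a unique corner of $r$ is closest. Finally it checks whether any cell survives, using only the extreme cuts in each direction, so the running time does not depend on $k$.
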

	\begin{claimproof} 
		By \cref{claim:partition}(\ref{item:ii}), $\viska{0}{c}$ is the union of at most $n^2$ rectangles from $R$.
		For each rectangle~$r$ of $R$ in $\viska{0}{c}$, we check if a \guard{k} can be placed in $r$, i.e., if the neighborhood of any cell in $r$ is contained in $\viska{0}{c}$. 
		Let $C$ denote the candidate region for the guard placements in $r$; 
		initially, $C:=r$. 
		We classify the boundary edges of $\viska{0}{c}$ into two sets, the ones that are subsets of boundary edges of $P$, and the remaining ones, which we call the set of \emph{exits} $X$. 
		The name reflects the property that a guard may potentially leave $\viska{0}{c}$ by traversing such an edge, see also \cref{fig:exists-a}. 
		
		\begin{figure}[htb]
			\centering
			\captionsetup[subfigure]{justification=centering}
			\begin{subfigure}{0.6\textwidth}%
				\centering
				\includegraphics[page=1]{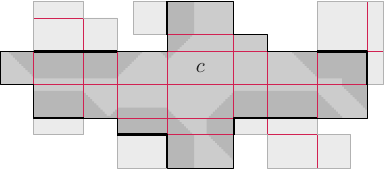}
				\subcaption{}
				\label{fig:exists-a}
			\end{subfigure}\hfil
			\begin{subfigure}{0.3\textwidth}%
				\centering
				\includegraphics[page=2]{exits.pdf}
				\subcaption{}
				\label{fig:exists-b}
			\end{subfigure}\hfil
			\caption{Illustration for the proof of \cref{clm:checkNeighborhood}. 
				(a) Example of $\viska{0}{c}$ of a cell $c$.
				Exits are shown as bold lines and cells in $\viska{0}{c}$ that may reach an exit in at most $k$ steps are shaded gray. 
				(b) For a rectangle $r$ in $R$ and a single exit, the induced cut of cells at most $k$ positions away from the exit is either horizontal, vertical or diagonal.}
			\label{exists}
		\end{figure}
		
		We now iterate over all exits in $X$. 
		Our goal is to remove all cells from $C$ that are too close to the respective exit, i.e., if the exit can be traversed in at most $k$ steps. 
		If $C$ contains a cell $g$ after all exits have been considered, the $k$-neighborhood of $g$ is contained in $\viska{0}{c}$ and we thus found a suitable \rguard{k}; otherwise $c$ is not guardable from a guard in $r$.
	
		It remains to explain how to do these computations in		polynomial time.
		To this end, note that $\viska{0}{c}$ is orthogonally convex and connected, and hence, a shortest path within $\viska{0}{c}$ matches the Manhattan distance. 
		Secondly, we observe that for an exit $x\in X$, which can be obtained by a horizontal or vertical translation of a side of $r$, a horizontal or vertical cut can remove all cells of $C$ which are too close to $x$. 
		For all other exits, a unique corner cell of $r$ is closest. 
		In this case, all close cells can be removed by cutting $C$ with a diagonal line. 
		These cuts can be computed by considering the Manhattan distances of the corners of $r$ and endpoints of each exit, 
		see also \cref{fig:exists-b}. 
		In the end, it suffices to consider the eight most extreme cuts in every direction to test whether $C$ contains a cell of $P$.
	\end{claimproof}

	This concludes the proof. 
\end{proof}

The techniques used to prove \cref{clm:checkNeighborhood} can now easily be extended to test whether two cells are seen by the same guard, a crucial tool for all our algorithms that work for arbitrary~$k$. 

\hypertarget{proof:TwoGuardable}{}
\TwoGuardable*

\begin{proof}
	If a \guard{k} sees two cells $c$ and $c'$, then its $k$-neighborhood is contained in both $\viska{0}{c}$ and $ \viska{0}{c'}$.
	Therefore, we aim to check whether  $V:=\viska{0}{c}\cap \viska{0}{c'}$ contains the $k$-neighborhood of some guard.
	As before, $V$ is a union of at most $n^2$ rectangles from $R$ by \cref{{claim:partition}}(\ref{item:ii}).
	As an intersection of orthogonal convex sets, each connected component of $V$ is orthogonally convex. 
	Thus we apply the strategy of \cref{clm:checkNeighborhood} for each connected component instead of  $\viska{0}{c}$.
\end{proof}

\section{Omitted details for~\texorpdfstring{\cref{sec:worstCaseBounds}}{Section~\ref{sec:worstCaseBounds}}}

\label{app:worstCaseBounds}

In the following, we provide the missing details of our results on the number of necessary and sufficient \rguards{k} required to guard polyominoes. 
We present these bounds in terms of both the number of vertices $n$ and the number of cells $m$ of a hole-free polyomino for \rguards{k} with $k=0$ and $k=1$. 

\head{\subsection{\rguards{0}}}

\hypertarget{proof:zerorestlessnecessarysufficient}{}
\zerorestlessnecessarysufficient*

\begin{proof}
We split the proof into three claims.

\begin{claim}
	\label{lem:nec-0}
	For all $n$ and $m$, there exist polyominoes for which $\floor{\nicefrac{n}{4}}$ and $\floor{\nicefrac{m}{2}}$ \rguards{0} are necessary.
\end{claim}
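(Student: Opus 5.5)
I would prove the claim with an explicit family of comb-like polyominoes, presumably the one in \cref{fig:necessary-0}. Fix $t\ge 1$. Take a horizontal \emph{spine} of $2t$ cells in row $1$, with columns $1,\dots,2t$. On every odd column place a \emph{tooth}, a single cell in row $2$ above the spine. On every even column place a tooth in row $0$ below the spine. The result is a hole-free polyomino with $m=4t$ cells.

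The counting step is as follows. Each interior tooth contributes $4$ vertices (two convex, two reflex). A tooth in the extreme column $1$ or $2t$ shares a side with the end of the spine, so it contributes only $2$ vertices beyond the four corners of the spine rectangle. I would check that this gives $n=8t$. Hence $\floor{\nicefrac{n}{4}}=\floor{\nicefrac{m}{2}}=2t$, which equals the number of teeth.

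The key step is that no single \rguard{0} sees two teeth. Let $g$ be a cell that sees a top tooth $a$ in column $x_a$, so $\rect(g,a)\subseteq P$. The rectangle $\rect(g,a)$ contains row $2$ in every column between $g$ and $x_a$. The only cell of row $2$ near column $x_a$ is $a$ itself, because the neighbouring columns carry bottom teeth. Hence $g$ must lie in column $x_a$. The same argument shows that a guard seeing a bottom tooth $b$ lies in column $x_b$. Teeth occupy pairwise distinct columns, so seeing two teeth would put $g$ in two columns, which is impossible. Therefore every \rguard{0} set has at least $2t$ guards, one per tooth.

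The main obstacle is covering every $n$ and every $m$ rather than only multiples of $8$ and $4$.
\begin{itemize}
\item For $m$, I would append one extra spine cell at the right end. This makes $m$ odd and leaves $\floor{\nicefrac{m}{2}}$ unchanged.
\item For $n$, recall that $n$ is even for polyominoes. When $n\equiv 4 \pmod 8$, I would use the variant with an odd number of teeth (spine length $2t+1$, alternating teeth, the last tooth at an end). I would verify that this gives $n=8t+4$ vertices and $2t+1$ teeth.
\item For $n\equiv 2 \pmod 4$, I would add a step (an extra cell on a spine end), which creates two vertices without creating a new tooth. Such a step keeps $\floor{\nicefrac{n}{4}}$ unchanged, so the established lower bound still applies.
\end{itemize}
Throughout the padding I must make sure that the column argument above is untouched, i.e., that the padding cells create no new cell of row $0$ or row $2$ adjacent to a tooth.
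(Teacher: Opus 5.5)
Your proposal is correct and is essentially the paper's proof. The paper uses the same alternating comb (a horizontal spine with one tooth per spine cell, alternately above and below), the same observation that a tooth is seen only from its own column (phrased there as $a_i$ being seen only from $a_i$ or $b_i$), and the same one-cell padding for odd $m$ and $n\equiv 2 \pmod 4$.

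The only difference is cosmetic. The paper allows any number $t$ of teeth, which gives $n=4t$ and $m=2t$ directly and makes your split into even and odd numbers of teeth unnecessary. Note also that the cases $m\equiv 2,3 \pmod 4$ come from your odd-teeth variant, without and with padding; your $m$-bullet does not say this explicitly.
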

\begin{claimproof}
	For an illustration, consider~\cref{fig:necessary-0A}. 
	Let $t$ be a fixed integer and consider the polyomino consisting of $m=2t$ cells and $n=4t$ vertices. 
    The cells $b_1,\dots,b_t$ form a horizontal strip and a cell $a_i$ is above $b_i$ if $i$ is odd and below if $i$ is even. 
    Note that $a_i$ can only be guarded by $a_i$ or $b_i$. 
    Consequently, at least $t$ \rguards{0} are necessary. 
    Moreover, attaching an additional cell yields polyominoes with odd $m$, and the $n=4t+2$ vertices with the same bounds on the number of necessary guards. 
    This implies the claim.
\end{claimproof}

\begin{figure}[htb]
	\centering
	\captionsetup[subfigure]{justification=centering}
	\begin{subfigure}{0.33\columnwidth}%
		\centering
		\includegraphics[page=8]{necessary_0}
		\subcaption{}
		\label{fig:necessary-0A}
	\end{subfigure}%
	\begin{subfigure}{0.33\columnwidth}%
		\centering
		\includegraphics[page=1]{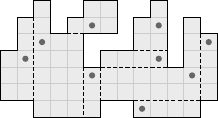}
		\subcaption{}
		\label{fig:suff-0-a}
	\end{subfigure}%
	\begin{subfigure}{0.33\columnwidth}%
		\centering
		\includegraphics[page=5]{sufficient-0}
		\subcaption{}
		\label{fig:suff-0-b}
	\end{subfigure}%
	\caption{
		\label{fig:suff-0}
		(a) A family of polyominoes requiring $\floor{\nicefrac{n}{4}}$ and $\floor{\nicefrac{m}{2}}$ \rguards{0}.
		(b)~Decomposing a polyomino into at most $\floor{\nicefrac{n}{4}}$ orthogonal pieces with $\leq 6$ vertices and guarding each with a single \rguard{0}. 
		(c) Coloring the dual graph of a polyomino black and white to find a guard set with at most $\floor{\nicefrac{m}{2}}$ \rguards{0}.
	}
\end{figure}

\begin{claim}
    \label{lem:suff-nec-0-n}
    $\floor{\nicefrac{n}{4}}$ \rguards{0} are always sufficient.
\end{claim}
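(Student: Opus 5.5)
The plan is to follow the sketch in the main text and use Gy\"ori's decomposition result~\cite{orthoDecomposition}. It states that every hole-free orthogonal polygon with $n$ vertices can be partitioned into at most $\floor{\nicefrac{n}{4}}$ orthogonal pieces. Each piece has at most $6$ vertices, so it is either a rectangle or an L-shape (a union of two rectangles sharing a corner region). The cuts in Gy\"ori's construction go along lines through reflex vertices. Since every vertex of $P$ has integer coordinates, the cut segments also lie on grid lines, and the pieces are therefore subpolyominoes of $P$. I would verify this grid alignment explicitly, because otherwise a piece could split a cell.

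The main step is to show that each piece $Q$ can be guarded by a single \rguard{0} placed in a cell of $Q$. Since $Q\subseteq P$, visibility inside $Q$ implies visibility inside $P$: $\rect(g,p)\subseteq Q$ gives $\rect(g,p)\subseteq P$. So it suffices to guard $Q$ with respect to $Q$ itself. There are two cases.
\begin{itemize}
\item If $Q$ is a rectangle, any cell $g$ of $Q$ works, because $\rect(g,p)\subseteq Q$ for every $p\in Q$.
\item If $Q$ is an L-shape, write $Q=A\cup B$, where $A$ and $B$ are rectangles that both contain the unit-sized ``corner'' square, and place $g$ in the cell of $A\cap B$ incident to the reflex vertex of $Q$. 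Take any point $p\in A$. Both $g$ and $p$ lie in the rectangle $A$, so $\rect(g,p)\subseteq A\subseteq Q$, and symmetrically for $B$.
\end{itemize}
For the L-shape I must check that a cell in $A\cap B$ exists. It does: the two arms of an L-shaped polyomino overlap in at least one full cell, namely the cell diagonally adjacent to the reflex vertex inside $Q$. With this choice $g$ is an interior cell midpoint, as required for \rguards{0}. This is exactly where \rguards{0} differ from point guards: a point guard could sit on the reflex vertex, but here a full cell in the overlap suffices.

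Combining the two steps, placing one guard per piece yields at most $\floor{\nicefrac{n}{4}}$ \rguards{0} whose visibility regions cover $P$. The main obstacle is the first step, namely confirming that Gy\"ori's theorem gives pieces with at most $6$ vertices, i.e., rectangles or L-shapes, and that these pieces are aligned to the grid. If the theorem as cited instead gives pieces that are merely guardable by a single point guard, such as arbitrary $6$-vertex polygons, I would still reduce to the L-shape case. A hole-free orthogonal hexagon is necessarily an L-shape, because it has exactly one reflex vertex.
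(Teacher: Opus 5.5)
Your proposal is correct and follows the paper's proof. Both use Gy\"ori's partition into at most $\floor{\nicefrac{n}{4}}$ rectangles and L-shapes and place one $0$-restless guard per piece: in a rectangle anywhere, and in an L-shape in the cell of the arms' overlap that touches the reflex corner. Your explicit check that the pieces are grid-aligned subpolyominoes, so that this cell exists, is a detail the paper leaves implicit.
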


\begin{claimproof}
    As shown by Gy\"ori~\cite{orthoDecomposition}, every hole-free orthogonal polygon can be decomposed into at most $\floor{\nicefrac{n}{4}}$ orthogonal pieces, each having 4 or 6 vertices. For an illustration, consider \cref{fig:suff-0-a}. 
    In other words, there exists a decomposition into rectangles and L-shapes. 
    For each of these parts one \rguard{0} suffices: 
    In an L-shape, we may place it in the cell with corner contact to the reflex corner, for rectangles a guard can be placed at any cell. 
\end{claimproof}

\begin{claim}
    \label{lem:suff-nec-0-m}
    $\floor{\nicefrac{m}{2}}$  \rguards{0} are always sufficient.
\end{claim}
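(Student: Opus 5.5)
The plan is to use the dual-graph coloring argument sketched in the main text. First I would verify the basic visibility fact: a \rguard{0} placed on a cell $g$ sees $g$ itself and every cell $c$ sharing a side with $g$. Indeed, $\rect(g,c)$ is the union of the two adjacent cells $g$ and $c$, both in $P$. Hence $c\in\viska{0}{g}$. Since $k=0$, we have $N_0(g)=\{g\}$ and $\viska{0}{g}$ is exactly this visibility region, so no intersection over neighbors is needed.

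Next I consider the dual graph $D$ of $P$. Its vertices are the cells, and its edges join side-adjacent cells. Since $D$ is a subgraph of the grid graph $\mathbb{Z}^2$, it is bipartite. Coloring each cell by the parity of $x+y$ gives a proper $2$-coloring into color classes $W$ and $K$, and $|W|+|K|=m$.

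The key step is to show that each color class is a guard set, i.e., a dominating set of $D$. Since $P$ is a polyomino with $m\geq 2$ cells, $D$ is connected with at least two vertices, so every cell has at least one neighbor in $D$. Because the coloring is proper, that neighbor lies in the other color class. So every cell of $K$ has a neighbor in $W$ and is seen by the guard there. Every cell of $W$ sees itself. Thus $W$ is a \rguard{0} set, and by the same argument so is $K$.

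Finally I choose the smaller class, which has at most $\floor{\nicefrac{m}{2}}$ cells. The only point needing care is the hypothesis $m\geq 2$: for $m=1$ the bound $\floor{\nicefrac{1}{2}}=0$ fails, and connectivity is what guarantees that no cell is isolated. I expect no real obstacle here; the argument does not even use hole-freeness.
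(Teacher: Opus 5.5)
Your proof is correct and follows the paper's argument. A restless guard with $k=0$ sees its own cell and all side-adjacent cells, so either color class of a proper $2$-coloring of the bipartite dual graph is a valid guard set, and the smaller class has at most $\lfloor m/2\rfloor$ cells. The paper leaves implicit one detail you spell out: connectivity together with $m\geq 2$ guarantees that every cell has a neighbor in the other class.
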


\begin{claimproof}
    Note that every \rguard{0} sees its own and its adjacent cells. 
    As the dual graph of a polyomino is bipartite, we may color the vertices in white and black such that no two cells of the same color share an edge, see \cref{fig:suff-0-b}. 
    Hence, placing \rguards{0} on all cells of one color yields a valid \rguard{0} set. 
    Clearly, one of the two color classes contains at most $\floor{\nicefrac{m}{2}}$ cells which proves the claim.
\end{claimproof}

This completes the proof.
\end{proof}

\head{\subsection{\rguards{1}}}
We now prove a tight bound on the number of \rguards{1} that are sometimes necessary and always sufficient to guard a hole-free polyomino.

\hypertarget{proof:onerestlessnecessarysufficient}{}
\onerestlessnecessarysufficient*

\begin{proof}
We again prove the two parts separately.

\begin{claim}
    \label{lem:suff-nec-1-n}
    $\nicefrac{(n-2)}{2}$ guards are always sufficient and sometimes necessary.
   
\end{claim}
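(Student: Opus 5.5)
We prove necessity and sufficiency separately.

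\emph{Necessity.} We use the family from \cref{fig:necessary-1-a}: a comb-like polyomino with $t$ teeth, each a thin protrusion whose cells can only be seen by \rguards{1} placed close to that tooth. The key step is a local witness argument. For a \rguard{1} $g$, the region $N_1(g)$ contains a cell adjacent to $g$ in every available direction, and $\viska{1}{g}$ is the intersection of the corresponding $\viska{0}{\cdot}$. For each tooth we pick a witness cell, say its tip. We then show, case by case over the possible positions of $g$ near a tooth, that no single \rguard{1} sees two witness cells of different teeth. Whenever $N_1(g)$ reaches into the shared spine or an adjacent tooth, \cref{lem:separation-better} (or a direct $\rect$ argument) blocks the second witness. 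Counting vertices, $n=2t+2$, which gives $t=(n-2)/2$ necessary guards.

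\emph{Sufficiency.} We partition $P$ into at most $(n-2)/2$ rectangles such that each rectangle $r$ admits a cell $g\in r$ with $r\subseteq\viska{1}{g}$. Since every cell of $N_1(g)$ lies in $P$, it suffices that $\rect(c,x)\subseteq P$ for all $c\in N_1(g)$ and $x\in r$. This holds whenever every cell of $N_1(g)$ lies in $r$ or in a rectangle of $P$ extending $r$ in that direction. We use the standard partition obtained by cutting along one segment from each reflex vertex, oriented greedily. A hole-free orthogonal polygon with $n$ vertices has $(n-4)/2$ reflex vertices, and each cut increases the number of pieces by at most one, so this yields at most $(n-4)/2+1=(n-2)/2$ rectangles. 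We choose guard positions as follows:
\begin{itemize}
\item a rectangle of width and height at least $3$ gets a guard at an interior cell, so that $N_1(g)\subseteq r$;
\item a thin rectangle (width or height at most $2$) needs an argument that the cut directions can be chosen so that its neighbors outside $r$ still see all of $r$.
\end{itemize}
For the thin case we process the cuts in an order, for instance sweeping from top to bottom with horizontal cuts by default. Whenever a rectangle would become thin and ``bad'', we switch the cut at the responsible reflex vertex to the other orientation. A bad rectangle is a strip where a neighbor of $g$ outside $r$ is blocked from part of $r$, as in the staircase configurations.

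\emph{Main obstacle.} The difficulty is showing that such a choice of cut orientations always exists without exceeding $(n-2)/2$ pieces; the paper itself notes that not every rectangle partition works. We would first try induction on the number of reflex vertices. We cut off an extremal piece (a rectangle attached to the rest along a single cut, which exists in the dual tree of a hole-free polyomino) that loses at least two vertices. We then argue that the guard for this piece can be chosen adjacent to the cut, so that any neighbor on the other side still sees the piece. We verify this directly from the definition of $\viska{1}{g}$ for width-$1$ and width-$2$ pieces, and apply induction to the remainder.
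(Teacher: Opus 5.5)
\textbf{Necessity.} Your lower-bound family does not work, and the step that fails is the count $n=2t+2$. Each tooth of a comb has its own two convex corners at its tip, and a hole-free polyomino has exactly four more convex than reflex vertices. So $t$ teeth force $n\ge 4t-4$, and one witness per tooth gives only about $n/4$ guards. The target is $(n-2)/2=r+1$, where $r=(n-4)/2$ is the number of reflex vertices.

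Concretely, take a comb whose spine has height one, with unit-width teeth in every other column, starting and ending with a tooth. It has $n=4t$ vertices, yet $t+1$ $1$-restless guards suffice: one on each tooth tip and one on a spine cell below a gap. That is far below $2t-1$.

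You need a gadget in which essentially every reflex vertex costs a guard. The paper uses windmills. The four extremal cells of a windmill can only be seen by a $1$-restless guard placed on the cell itself, and these four guards all miss the central cell. So one windmill has $12$ vertices and needs $5$ guards, and each further windmill glued along a tooth adds $10$ vertices and $5$ guards.

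\textbf{Sufficiency.} Your rectangle count ($r+1$ pieces from one cut per reflex vertex) is fine. However, you demand more than the paper proves: every rectangle must lie in $\viska{1}{g}$ for a single cell $g$ inside it. The decisive step, that cut orientations achieving this always exist within the budget, is left open.

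The proposed induction does not settle it. When the ear $Q$ is glued back onto $P'=P\setminus Q$, a guard $g'$ of $P'$ adjacent to the cut gains a neighbour $q\in Q$ in $N_1(g')$, so its $1$-visibility shrinks. Suppose the rectangle $r'$ of $g'$ extends beyond the reflex vertex from which the cut is shot. Then $\rect(q,x)\not\subseteq P$ for every cell $x\in r'$ beyond that vertex, and $g'$ no longer sees $r'$. For example, let $P$ be an L-shape, $Q$ its vertical arm, and $g'$ the corner cell of the horizontal arm; $g'$ is a legitimate choice for the rectangle $P'$ alone but fails in $P$. So the hypothesis for $P'$ does not transfer to $P$ without a stronger invariant controlling guards next to later cuts. (Also, the adjacency graph of a rectangle partition need not be a tree, although an ear can be found directly.)

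The paper sidesteps self-guarded rectangles altogether. It grows maximal rectangles from open edges; each step meets two new edges of $P$, which gives at most $(n-2)/2$ rectangles. Each rectangle's guard is placed on its base so that it sees the rectangle and, typically, its \emph{cellar}: the row directly below the base, inside an earlier rectangle. A width-one rectangle, which in general cannot be seen from any single cell inside it, is then covered jointly by the cellars of adjacent rectangles and by its own guard on its topmost cell that has an outer side neighbour. This cellar charging is the idea your proposal is missing.
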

\begin{claimproof}
    \Cref{fig:necessary-1A-a} depicts polyominoes that require at least $\nicefrac{(n-2)}{2}$ \rguards{1}.
    They consist of smaller windmill-shaped polyominoes.
    In a windmill, a \guard{1} must be placed on each of the four extremal cells; 
    otherwise these cells are not guarded. 
    Then, an additional \rguard{1} must be placed on the central cell, which remains unguarded otherwise.
    By connecting multiple windmills along one of the teeth, we can construct arbitrarily large polyominoes while preserving this structural property. 
    \begin{figure}[htb]
    	\centering
    	\captionsetup[subfigure]{justification=centering}
    	\begin{subfigure}{0.60\columnwidth}%
    		\centering
    		\includegraphics[page=1]{necessary_n}
    		\subcaption{}
    		\label{fig:necessary-1A-a}
    	\end{subfigure}%
    	\begin{subfigure}{0.40\columnwidth}%
    		\centering
    		\includegraphics[page=2]{windmill}
    		\subcaption{}
    		\label{fig:necessary-1A-c}
    	\end{subfigure}%
    	\caption{
            (a) A family of polyominoes that require $\nicefrac{(n-2)}{2}$ \rguards{1} to be entirely guarded.
            (b) A polyomino partitioned into rectangles where more than one guard is required in one rectangle.
            }
    	\label{fig:necessary-1A}
    \end{figure}

	In all but the first windmill, the ratio between the number of guards and vertices is $1:2$. 
	In the first windmill, we have two additional vertices. 
	This gives the desired bound. 
	
	\medskip
    It remains to show sufficiency. 
    To this end, we first partition the polyomino $P$ into at most $\nicefrac{(n-2)}{2}$ rectangles. 
    In a second step, we carefully place one \guard{1} in each rectangle. 
    Lastly, we show that the guard set is valid. 
    We highlight that an arbitrary partition into rectangles does not have the claimed properties; 
    for example, consider the polyomino depicted in \cref{fig:necessary-1A-c} with a partition into five rectangles. 
	In this example, no valid guard set contains only one \guard{1} in the central rectangle. 
    Hence, starting from a  good partition is crucial. 
   
	The idea of the partition is to iteratively grow new rectangles from so called \emph{open edges}. 
	Throughout the procedure, let $P' \subseteq P$ denote the union of already constructed rectangles.
	We start with a rectangle $r_0$ that is an arbitrary maximal rectangle in $P$ sharing one of its edges with an \outerdent of $P$; we call the \outerdent the \emph{open edge} of $r_0$. 
	We partition the edges of $r_0$ with vertices from $P$; note that these vertices have \emph{flat} angles in $P'$.
	We call the edges of $P'$ not belonging to $P$ \emph{open}. 
    Next, we iteratively select an open edge $e$ and construct a maximal rectangle $r$ inside $P \setminus P'$ with edge $e$. 
    In particular, the side of $r$ opposite to $e$ touches the boundary of $P$. 
    We identify open edges of~$r$ and proceed until $P'=P$. 
	For an example of the construction consider \cref{fig:suff-1-n-overview}.
    
	\begin{figure}[htb]
    	\centering
    	\captionsetup[subfigure]{justification=centering}
    	\begin{subfigure}{0.7\textwidth}%
    		\centering
    		\includegraphics[page=4]{sufficient_n}
    		\subcaption{}
    		\label{fig:suff-1-n-overview-a}
    	\end{subfigure}%
    	\begin{subfigure}{0.3\textwidth}%
    		\centering
    		\includegraphics[page=5]{sufficient_n}
    		\subcaption{}
    		\label{fig:suff-1-n-overview-b}
    	\end{subfigure}%
    	\caption{
    		Constructing a set of at most $\nicefrac{(n-2)}{2}$ \rguards{1}. 
    		(a)
    		Starting from a maximal rectangle containing the \outerdent (red), $P'$ is iteratively extended from open edges; double lines not only show the partition into rectangles, but also indicate open edges during the procedure. 
    		A~guard is placed in every rectangle;
    		in rectangles of width 1, blue triangles indicate cells that are already seen as a cellar of an adjacent rectangle.
    		(b) A rectangle $r$ with base and cellar.
    		}
    	\label{fig:suff-1-n-overview}
    \end{figure}
    
    We now argue that in each step, two new edges 
    of $P$ are \emph{encountered}, i.e., $P'$ does not yet touch these sides. 
    We note that at least one vertex of $e$ is a vertex of $P$ and hence incident to an edge of $P$. 
    Moreover, the edge opposite of $e$ is incident to an edge of $P$. 
    If any of the two edges had been encountered before, we obtain a contradiction to hole-freeness of the added rectangles.
    Hence, after $s$ steps, $P'$ is a $(2+2s)$-gon; 
	that is, we repeat the procedure until $n=2s+2\iff s=\nicefrac{(n-2)}{2}$. 
	Consequently, we obtain a partition with at most $\nicefrac{(n-2)}{2}$ rectangles, as desired.

    We now explain how to place our guards.
	To this end, we consider a rectangle $r$ from $R$ whose defining open edge is without loss of generality at the bottom.
	We denote the cells of~$r$ in the bottommost row as its \emph{base}, and the cells right below the base as its \emph{cellar}, see \cref{fig:suff-1-n-overview-b}. 
	We first note that the initial rectangle $r_0$ can be guarded from a position on its base; in the case of height 1, the guard must be placed below an outer cell, otherwise any position works.
    Now, let $r$ be any other rectangle in the partition $R$. 
    As open edges share a vertex with $P$, there exists an outer cell adjacent to one of the vertical edges right above the base, as in \cref{fig:suff-1-n-overview-b}. 
	By construction, the cellar of $r$ is contained in $P'$.
    If $r$ has width and height at least $2$, we place a \rguard{1} on the base next to the outer cell, as in \cref{fig:suff-1-n-a}.
	Note that it guards $r$ and its cellar. 
    If $r$ has a width at least $2$ and a height of~$1$, we place a guard in $r$ below an outer cell incident to the top of $r$; 
	note that such an outer cell exists as otherwise $r$ is not maximal.
	For an example consider \cref{fig:suff-1-n-b}. 
	Clearly, such a guard sees $r$ and in most cases also the cellar.
	If the guard does not see the cellar of $r$, then the cellar contains a topmost cell in the cellar's rectangle, see \cref{fig:suff-1-n-b2}.
    
	\begin{figure}[htb]
    	\centering
    	\captionsetup[subfigure]{justification=centering}
    	\begin{subfigure}{0.23\textwidth}%
    		\centering
    		\includegraphics[page=6]{sufficient_n}
    		\subcaption{}
    		\label{fig:suff-1-n-a}
    	\end{subfigure}%
    	\begin{subfigure}{0.23\textwidth}%
    		\centering
    		\includegraphics[page=7]{sufficient_n}
    		\subcaption{}
    		\label{fig:suff-1-n-b}
    	\end{subfigure}%
    	\begin{subfigure}{0.24\textwidth}%
    		\centering
    		\includegraphics[page=8]{sufficient_n}
    		\subcaption{}
    		\label{fig:suff-1-n-b2}
    	\end{subfigure}%
    	\begin{subfigure}{0.15\textwidth}%
    		\centering
    		\includegraphics[page=9]{sufficient_n}
    		\subcaption{}
    		\label{fig:suff-1-n-c}
    	\end{subfigure}%
    	\begin{subfigure}{0.15\textwidth}%
    		\centering
    		\includegraphics[page=10]{sufficient_n}
    		\subcaption{}
    		\label{fig:suff-1-n-d}
    	\end{subfigure}%
    	\caption{Guard placements for all possible types of rectangles.
    	}
    	\label{fig:suff-1-n}
    \end{figure}
  
    It remains to guard $r$ of width 1. 
    If $r$ contains a cell that is adjacent to outer cells to both its left and right, we place a guard there, as shown in \cref{fig:suff-1-n-c}. 
    Clearly, this guard sees~$r$ and its cellar. 
	Otherwise, we place a guard in the topmost cell of $r$ that has an outer cell to its left or right, see \cref{fig:suff-1-n-d}.
	This ensures that all cells in $r$ at this position and above are guarded.
	Note that in this case, the cellar of $r$ is a topmost cell. 

	It remains to show that $r$ is guarded for this last case. 
	To this end, we note that each cell of $r$ is in some cellar of an adjacent rectangle. 
	Moreover, all cells of $r$ are guarded as the cellar of their adjacent rectangles, except for possibly a sequence of topmost cells, as for example the orange cells in \cref{fig:suff-1-n-overview-a}.
	The guard placed in $r$ takes care of guarding these cells.	
    This concludes the proof.
\end{claimproof}

\begin{claim}
    \label{lem:suff-nec-1-m}
    For a polyomino with at least $m\geq 4$ cells, $(m-2)$ \rguards{1} are always sufficient and sometimes necessary.
\end{claim}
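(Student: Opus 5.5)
The plan has two parts: a lower-bound family (staircases) and an upper-bound argument that exhibits two cells which need no guard of their own.

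\textbf{Lower bound.} Consider the staircase polyomino of \cref{fig:necessary-1-b}: a chain of $m$ cells, each step attached to the next one alternately to the right and upwards. In a staircase every interior cell $c$ has neighbors on two perpendicular sides, and cells far apart along the chain do not see each other under rectangular visibility. I would show that a \rguard{1} $g$ sees a cell $c\neq g$ only in a few configurations, essentially only at the two ends of the chain. The reason is that $\viska{1}{g}$ is the intersection of the visibility regions of $g$ and its (two) neighbors, and for an interior cell of the staircase this intersection is just $\{g\}$. Hence every cell except at most two near the ends must carry its own guard, giving at least $m-2$ guards. The precise shape near the ends has to be chosen so that exactly two cells can be saved, and this needs some care.

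\textbf{Upper bound.} It suffices to find two distinct cells $a,b$ together with guard positions $g_a\neq a$ and $g_b\neq b$ such that $a\in\viska{1}{g_a}$ and $b\in\viska{1}{g_b}$. We then place a guard on every cell other than $a$ and $b$; this is at most $m-2$ guards, since $g_a$ and $g_b$ are among the guarded cells or coincide with them. Every other cell sees itself, because for $k\le 1$ a guard sees its own cell: its neighbors are adjacent cells, and the rectangle spanned by two adjacent cells lies in $P$.

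To find $a$, look at the topmost row of $P$ and take a maximal horizontal run of cells in it. If the run has length at least $3$, an interior cell of the run is seen by an endpoint of the run. Indeed, the endpoint's 1-neighborhood consists of itself, its neighbor in the run, and possibly the cell below it, and each of these spans a rectangle with the chosen cell that lies in the run or directly below it, so it is contained in $P$ whenever the cells below exist. A more robust choice is to take $a$ at an end of the run and $g_a$ adjacent to it inside the run, then check the three spanned rectangles.

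If the run has length $1$ or $2$, take $a$ to be a topmost cell and $g_a$ the cell directly below it. The neighbors of $g_a$ are $a$, the cell below $g_a$, and its left and right neighbors. The rectangle from a side neighbor of $g_a$ up to $a$ requires the cell beside $a$, which might be an outer cell. To get around this, I would instead choose $g_a$ vertically below $a$ at a spot where the column is at least two cells deep, or pick $a$ at a corner of the run so that the problematic side is empty.

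Doing the same for the bottommost row gives $b$. If the top and bottom rows coincide, $P$ is a single strip with $m\ge4$ cells, and its two end cells are each seen by the adjacent interior cell.

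\textbf{Main obstacle.} The main difficulty is this local case analysis at the extreme rows, guaranteeing that $g_a$'s whole 1-neighborhood sees $a$. I would attempt it by taking $a$ to be the topmost cell of a leftmost topmost column, so that the cells to the left of $a$ and above $a$ are outer cells, and then examining the at most four cells around the candidate guard position.
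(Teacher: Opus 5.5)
\paragraph{The lower bound is fine.} It is the paper's argument: in the staircase, an interior guard has two perpendicular neighbors and $\viska{1}{g}=\{g\}$, while a guard on an end cell sees exactly that cell and its neighbor. Hence $t$ guards cover at most $t+2$ cells.

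\paragraph{The gap in the upper bound.} You require the saved cell $a$ to lie in the topmost row, and $b$ in the bottommost row. This is not always possible.

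Take the staircase itself in vertical orientation with $m=4$, i.e.\ the cells $(0,0),(0,1),(1,1),(1,2)$. Here $\viska{0}{(1,2)}=\{(1,1),(1,2)\}$, and $N_1((1,1))$ contains $(0,1)$. So the only guard seeing $(1,2)$ is the one placed on $(1,2)$. Symmetrically, $(0,0)$ is seen only from $(0,0)$. Neither $a$ nor $b$ exists, so no local analysis around the top-left cell can produce one.

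The obstruction is general. If a topmost run is a single cell $t$, then $\viska{0}{t}$ is the vertical run below $t$. A guard $g\neq t$ sees $t$ only if $g$ lies in that run and has no horizontal neighbor, and such a cell need not exist. This is exactly the situation your ``column at least two cells deep'' patch cannot repair.

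\paragraph{The missing idea.} The saved cell may lie in the \emph{second} row. Put the guard on $t$ itself: since $t$ has no horizontal neighbor, $N_1(t)=\{t,t'\}$ with $t'$ directly below $t$, and both cells see $t'$.

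For a top run of length at least $2$, your local check does succeed. Let $d_1,d_2$ be the cells below the left end of the run and below its right neighbor. The guard on the left end saves its neighbor unless $d_1\in P$ and $d_2\notin P$. The guard on the neighbor saves the left end unless $d_2\in P$ and $d_1\notin P$. These two failure conditions are incompatible, so one choice always works.

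\paragraph{Distinctness in the counting step.} Once saved cells can sit in the second row, you must secure the distinctness your count takes for granted: $a\neq b$ and $g_a,g_b\notin\{a,b\}$.
\begin{itemize}
\item This holds automatically if the bounding box has height at least $4$ (or width at least $4$, after rotating). Then the two top rows and the two bottom rows used are disjoint.
\item For bounding boxes within $3\times3$, the top and bottom constructions can save the same middle cell, as in the plus pentomino and T-shapes. The $2\times2$ box also needs separate handling.
\end{itemize}
This is precisely how the paper's proof is organized.

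Your justification ``$g_a$ and $g_b$ are among the guarded cells or coincide with them'' genuinely fails in your own framework. In the plus pentomino, the only guard other than the top cell that sees the top cell is the bottom cell, and vice versa. This forces $g_a=b$ and $g_b=a$, and then placing guards on all cells except $a,b$ leaves both unseen.
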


\begin{claimproof}
    Staircase polyominoes, as visualized in~\cref{fig:necessary-1A-staircase}, require $(m-2)$ \rguards{1}.
    The cells at both `ends' are only seen by a guard that is placed on them; these guards also see their respective adjacent cell. Moreover, all other cells require their own \rguards{1} placed on them.

	\begin{figure}[htb]
		\centering
		\includegraphics[page=2]{necessary_m}
		\caption{Staircase polyominoes require $m-2$ guards for a complete \guarding{1}.}
		\label{fig:necessary-1A-staircase}
	\end{figure}

    We now argue that $m-2$ many \rguards{1} are always sufficient to see a polyomino. 
    No matter its position, every \rguard{1} sees the cell where it is initially placed on. 
    Therefore, it suffices to identify two \rguards{1} that can guard two cells.
    
    We first assume that $P$ has a bounding box of height (or width) $\geq 4$. 
    \Cref{fig:suff-1-m-a} shows that in two topmost rows of $P$ such a \rguard{1} can be found and that the additionally guarded cell is also contained in the first two rows. 
    With a polyomino height of $\geq 4$, the two topmost and two bottommost rows are distinct. 
    As \cref{fig:suff-1-m-a} equally works for the bottommost rows, we can identify two distinct \rguards{1} that also guard two more distinct cells.
    It remains to investigate the cases where the bounding box is at most $3 \times 3$. 
    A~closer look reveals that only in the first (vertical) case in \cref{fig:suff-1-m-a} the adjacent guarded cell lies in the second row. 
    Hence, the same argument works for height (or width) $\geq 3$ when the vertical case occurs at most once or if it identifies two different cell positions. 
    In turn, if the same cell is identified twice in the vertical case, $P$ either is a plus or a rotated `T', and we show in \cref{fig:suff-1-m-b} how to guard these. 
    Lastly, if the bounding box of $P$ is a $2\times 2$-square, a single guard suffices because $m\geq 4$.
\end{claimproof}

\begin{figure}[htb]
	\centering
		\captionsetup[subfigure]{justification=centering}
	\begin{subfigure}{0.6\textwidth}%
		\centering
		\includegraphics[page=3]{sufficient_m}
		\subcaption{}
		\label{fig:suff-1-m-a}
	\end{subfigure}%
	\begin{subfigure}{0.4\textwidth}%
		\centering
		\includegraphics[page=4]{sufficient_m}
		\subcaption{}
		\label{fig:suff-1-m-b}
	\end{subfigure}%
	\caption{
		(a) Based on the top left cell in a polyomino $P$ (blue outline), we find a \rguard{1} that also guards an adjacent cell $a$. 
		We show all possible configurations around this cell in~$P$. 
		Cells marked with a cross are known to be outer cells. 
		(b) If $P$ lies inside a $3 \times 3$ bounding box, the depicted special cases can be guarded as shown.}
	\label{fig:suff-1-m}
\end{figure}

This completes the proof.
\end{proof}

\section{Omitted details for~\texorpdfstring{\cref{sec:complexity}}{Section~\ref{sec:complexity}}}

We now provide the details of the \NP-completeness
for polyominoes with holes. 
More precisely, we show that for any $k\in\mathbb{N}$, deciding whether a given polyomino can be guarded with $\ell$ many \rguards{k} is~\NP-complete.

On a high level, our reduction for the \NP-hardness is an extension of the  approach described by Biedl and Mehrabi~\cite{Biedl016} for guarding polyominoes with point guards under rectangular visibility.
They show that finding a minimum vertex cover remains \NP-complete if the graph is orthogonally embedded into the plane, all vertices lie at integer coordinates and the maximum degree is three. 
Based on such an embedding, they construct a polyomino $P$ that is basically a ``thickened'' version of the embedding of $G$.
More precisely, it contains corridors of unit height or width for every edge in the given graph $G$, and the corridors are connected by gadgets that depend on the vertex degree. 
They then show that a minimum vertex cover in $G$ corresponds to a minimum point guard set of $P$.

Our reduction follows the same strategy: 
We reduce from the aforementioned \NP\nobreakdash-complete orthogonal \textsc{Vertex Cover} variant~\cite{Biedl016}, constructing a polyomino whose corridors are connected via three types of gadgets determined by the vertex layout.
Based on the number and direction of incoming edges, a vertex can either be a \emph{corner}, a \emph{tunnel}, or a \emph{junction}. 
Note that the reduction holds for all $k\in \mathbb{N}$, including $k=0$, provided the instance is scaled such that every corridor has width or height $2k+1$. 
All gadgets are visualized in~\cref{fig:hardnessA}. 

Building on these insights and the gadgets depicted in~\cref{fig:hardnessA}, we now establish the following result.

\hypertarget{proof:propHardness}{}
\propHardness*

\begin{proof}
We begin by showing \NP{} membership. 
Let $P$ be a polyomino with $n$ vertices.
We aim to check in time polynomial in $n$ and $\ell$ whether a given set of at most $\ell$ candidate \rguards{k}  sees the entire polyomino $P$. (In fact, \cref{claim:char} and  \cref{claim:partition}(\ref{item:i}) imply that  if  $P$ is guardable, then there exists a guard set of size at most $n^2$.)
To this end, we compute \viska{k}{g} for every guard~$g$ and check whether their union covers $P$.

Let $R$ be the partition of $P$ into rectangles as described in \cref{prop:existencePolytime}.
By \cref{claim:partition}(\ref{item:i}), the visibility polyominoes of all positions in a rectangle in $R$ are the same.
Thus, for each rectangle $r$ of $R$ intersecting the $k$-neighborhood of $g$, we compute  $\viska{0}{r}$ which is a union of all rectangles $r'$ from $R$ for which $\rect(r,r')\subseteq P$. Then we obtain $\viska{k}{r}$ by taking their intersection.
Whether a rectangle of $R$ intersects the $k$-neighborhood can be determined independent of the size $k$.
Lastly, we check if the union of $\viska{k}{g}$ for all guards covers $P$.
We remark that all intersection and union operations can be performed efficiently because by \cref{claim:partition}(\ref{item:ii}) all visibility polyominoes only consist of rectangles in $R$. 
This concludes the membership proof.

Now, we turn our attention to the \NP-hardness.
We demonstrate that the \NP-hard problem of finding a vertex cover in an orthogonal embedding of a graph $G$ reduces to finding a \rguard{k} set in a polyomino $P$. 
The polyomino $P$ is obtained by thickening $G$, using the gadgets from~\cref{fig:hardnessA}. 
Throughout the proof, we refer to various regions of $P$ as indicated and labeled in the figure.

\begin{figure}[htb]
	\centering
	\captionsetup[subfigure]{justification=centering}
	\begin{subfigure}{0.18\textwidth}%
		\centering
		\includegraphics[page=1]{hardness_pages}
		\subcaption{}
		\label{fig:hardnessA-a}
	\end{subfigure}%
	\begin{subfigure}{0.15\textwidth}%
		\centering
		\includegraphics[page=2]{hardness_pages}
		\subcaption{}
		\label{fig:hardnessA-b}
	\end{subfigure}%
	\begin{subfigure}{0.25\textwidth}%
		\centering
		\includegraphics[page=3]{hardness_pages}
		\subcaption{}
		\label{fig:hardnessA-c}
	\end{subfigure}%
	\begin{subfigure}{0.42\textwidth}%
		\centering
		\includegraphics[page=5]{hardness_pages}
		\subcaption{}
		\label{fig:hardnessA-d}
	\end{subfigure}%
	\caption{
		The gadgets used for reducing from the minimum vertex cover problem.
		The corridors connect the other gadgets.
		(a)
		A~corridor. 
		(b)
		A~corner gadget.
		(c)
		A~tunnel gadget.
		(d)~A~junction gadget.
		For all gadgets replacing a vertex of $G$, we depict two different guard placements. 
		Gray guards are sufficient if all connecting corridors are already guarded;
		otherwise, the outlined guards see the respective gadget, as well as all corridors connected to it.}
	\label{fig:hardnessA}
\end{figure}

The analysis of our reduction is slightly more involved than the reduction by Biedl and Mehrabi~\cite{Biedl016} because some of our gadgets need a `ground set' of guards that are always required;
we refer to the cardinality of this set as $\minguard$. 

Note that this number of guards only suffices if all incoming corridors are already guarded from somewhere else, otherwise $\minguard+1$ guards are required. 

\newcommand{\gad}{\ensuremath{\Gamma}\xspace}

More formally, let \gad be a gadget in $P$. 
We use $\minguard$ to assign a value to every vertex gadget based on the gadget type:
\[
\minguard(\gad) := \begin{cases} 
    0, \text{ if \gad is a corner gadget,}\\
    1, \text{ if \gad is a tunnel gadget,}\\
    2, \text{ if \gad is a junction gadget.}
\end{cases}
\]

We define $\offs$ as the sum $\sum_{\gad} \minguard(\gad)$ of all gadget values in $P$. 
We split the remaining proof into several claims.

\begin{claim}
    \label{clm:hardness-iif}
    There exists a vertex cover of size $\lambda$ for $G$ if and only if the polyomino $P$ can be guarded with $\lambda+\offs$ many \rguards{k}.
\end{claim}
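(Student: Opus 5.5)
We prove the two directions separately. Both rest on local properties of the gadgets in \cref{fig:hardnessA}. All corridors have width or height $2k+1$, so a guard placed on the central line of a corridor (or of a gadget) has its whole $k$-neighborhood inside that corridor or gadget.

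For the forward direction, let $C$ be a vertex cover of $G$ with $|C|=\lambda$. In every vertex gadget $\gad$ we place the gray ground set of $\minguard(\gad)$ guards. For every vertex in $C$ we add one more guard and switch that gadget to the outlined configuration. This uses $\lambda+\offs$ guards in total. Two facts need checking from the geometry.
\begin{itemize}
\item In the outlined configuration, the $k$-neighborhood of each guard stays inside the region that sees the gadget and all incident corridors.
\item In the gray configuration, the gadget itself is seen (apart from its connecting corridors).
\end{itemize}
Since $C$ covers every edge, each corridor is incident to at least one gadget in outlined configuration, so the whole of $P$ is seen.

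For the backward direction, take a \rguard{k} set $G'$ of size $\lambda+\offs$. The plan is a charging argument. For each gadget $\gad$ we identify $\minguard(\gad)$ pairwise ``private'' cells inside $\gad$, for instance the dead-end pockets of the tunnel and junction gadgets. We show, in the spirit of \cref{lem:separation-better}, that no \rguard{k} can see two of these cells, and that every guard seeing one of them lies in (or has its neighborhood in) $\gad$. This forces at least $\minguard(\gad)$ guards inside each gadget. These forced guards are also disjoint across gadgets, because every guard sees only cells in one gadget and its adjacent corridors.

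Next, we take a guard that sees the middle cell of a corridor and argue that its neighborhood must lie in the corridor or in one of the two end gadgets. We then show that the ground-set guards of a gadget, placed anywhere that still covers the private cells, cannot see any middle corridor cell. This is the claim that ``the ground set alone does not suffice to see any neighboring corridor''. So every corridor needs an extra guard, which we charge to one of its end vertices: the gadget containing the guard, or an arbitrary endpoint if the guard sits in the corridor. The charged vertices form a vertex cover. Their number is at most $|G'|-\offs=\lambda$, and if it is smaller we pad the cover arbitrarily.

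The main obstacle is the backward direction's claim that $\minguard(\gad)$ guards in a gadget cannot additionally see any corridor. The difficulty is that guards are restless: a guard's visibility is the intersection of the visibility regions over its whole $k$-neighborhood. I would handle this by picking, for each private cell $p$, the exact set of guard positions whose $k$-neighborhood lies in $\viska{0}{p}$ via \cref{claim:char}. I would then show that each such $\viska{0}{p}$ excludes the corridor midpoints at distance more than $k$ from the gadget. The $2k+1$ scaling ensures that any neighborhood fitting in $\viska{0}{p}$ is pushed away from the corridor entrance. This requires a case check per gadget type, which I expect to be the tedious core of the proof.
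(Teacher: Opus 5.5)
Your forward direction and the skeleton of your backward direction match the paper: you force $\minguard(\Gamma)$ guards into each gadget via witness cells, note that a guard seeing a corridor's middle cell $c$ has $N_k(g)\subseteq\viska{0}{c}=C$, charge each corridor to the gadget hosting its guard or to an arbitrary endpoint, and pad. The gap is the step your charging rests on. You claim that guards covering a \emph{fixed} set of private cells of $\Gamma$ never see a corridor, so that ``forced'' guards and corridor guards are disjoint.

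\textbf{Why the key step fails.} For tunnels and junctions this is false for every choice of private cells, and your own forward direction shows it.
\begin{itemize}
\item The outlined configuration uses $\minguard(\Gamma)+1$ guards that see all of $\Gamma$ and all of its (at least two) corridors.
\item Your private cells lie in $\Gamma$ and are pairwise not co-visible. So $\minguard(\Gamma)$ distinct outlined guards see private cells, and under your claim none of them sees a corridor.
\item That leaves one guard to see at least two corridors $C,C'$. This would force $N_k(g)\subseteq C\cap C'=\emptyset$.
\end{itemize}
Hence some guard always sees both a private cell and a corridor, and the per-gadget case check you plan cannot succeed. This matches the paper's remark that adding one guard lets \emph{all} guards move to the outlined positions.

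The criterion you propose for that check, that $\viska{0}{p}$ avoids corridor midpoints, is also not the right one. A guard sees both $p$ and $c$ iff $N_k(g)\subseteq\viska{0}{p}\cap\viska{0}{c}$, and this can hold even when $p$ and $c$ do not see each other.

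\textbf{What is missing.} The witnesses must depend on the corridor, and the counting must be per gadget rather than via a fixed split into forced and extra guards. Let $S'_\Gamma$ be the set of guards whose $k$-neighborhood lies in $\Gamma$. The paper proves two facts:
\begin{itemize}
\item $|S'_\Gamma|\ge\minguard(\Gamma)$ (\cref{clm:hardness-count-min}).
\item If some $g\in S'_\Gamma$ sees corridor $C_i$, there are $\minguard(\Gamma)$ cells $w_i$, chosen according to $i$, with three properties: $g$ does not see them, they pairwise need distinct guards, and their visibility regions lie in $\Gamma$. Hence $|S'_\Gamma|\ge\minguard(\Gamma)+1$ (\cref{clm:hardness-count-guards}).
\end{itemize}
Sum these bounds over the disjoint gadgets, and add the corridor guards not contained in any gadget. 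This gives $|S|\ge\offs+|J|$ without ever deciding which particular guard is ``ground'' and which is ``extra''.
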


In \cref{fig:hardnessA}, the gray gadgets overlap with the blue corridors in light blue rectangular regions. 
We refer to the gray part of the gadgets as the gadget's \emph{core}, to the light blue regions as the \emph{overlap}, and we call the union of both the \emph{gadget}.
We say that a \rguard{k} is \emph{fully contained within} a polyomino $P'\subseteq P$ if its $k$-neighborhood is a subset of $P'$. 

\begin{claim}
    \label{clm:hardness-place-guards}
    Every gadget $\gad$ can be guarded by $\minguard(\gad)+1$ many \rguards{k}, and its core by  $\minguard(\gad)$.
\end{claim}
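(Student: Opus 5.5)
The plan is a gadget-by-gadget verification against \cref{fig:hardnessA}. For each gadget we exhibit two explicit guard sets: the gray guards, $\minguard(\gad)$ many, which should see the core, and the outlined guards, $\minguard(\gad)+1$ many, which should see the whole gadget including all overlaps. Since the construction is scaled so that every corridor has width (or height) $2k+1$, a natural placement puts each guard on the central line of the relevant corridor or core region, at distance at least $k$ from the boundary in the relevant direction. Then its $k$-neighborhood $N_k(g)$ is a diamond of cells of radius $k$ that stays inside a single rectangle of the gadget.

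For the visibility check we do not compute $\viska{k}{g}$ exactly. We use the characterization from the proof of \cref{prop:existencePolytime}, \cref{claim:char} together with symmetry (\cref{obs:basicproperties}): a cell $c$ is seen by $g$ if and only if $N_k(g)\subseteq \viska{0}{c}$. Hence it suffices to show, for every cell $c$ of the region to be guarded, that some chosen guard $g$ satisfies $\rect(c,g')\subseteq P$ for all $g'\in N_k(g)$. By \cref{claim:partition}(\ref{item:i}) this only has to be checked for one cell per rectangle of the partition $R$ restricted to the gadget. Since the gadget has constant complexity, this leaves a constant number of checks per gadget type. A further simplification: if $N_k(g)$ lies in an axis-parallel rectangle $Q\subseteq P$, and $c$ lies in a rectangle $Q'\subseteq P$ that, together with $Q$, spans a rectangle contained in $P$, then $c\in\viska{k}{g}$. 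We would verify each case with this sufficient condition.

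\begin{itemize}
\item \textbf{Corner gadget} ($\minguard=0$). The core claim is vacuous for the core if we read ``guarded by $0$'' as meaning that the core is contained in the overlaps or is covered by corridor guards. Otherwise we interpret the figure accordingly. For the $+1$ part, a single guard at the center of the $(2k+1)\times(2k+1)$ corner square sees both arms, because each arm together with the corner square forms a rectangle containing $N_k(g)$.
\item \textbf{Tunnel gadget} ($\minguard=1$). We check the gray guard against the core rectangles.
\item \textbf{Junction gadget} ($\minguard=2$). We check both gray guards against the core rectangles. For the outlined configuration, we check that the repositioned guards each see one corridor overlap via the rectangle criterion above.
\end{itemize}

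The main obstacle is the tunnel and junction gadgets. Their geometry is designed so that the gray guards see the core but not the overlaps, while one extra guard lets all guards shift to new positions that jointly cover every overlap. The delicate point is ensuring that each shifted guard's diamond $N_k(g)$ avoids all reflex corners of the gadget. This relies on the $2k+1$ width: we would place guards exactly $k$ cells from the relevant walls and confirm that no cell of $N_k(g)$ lies outside the rectangle used in the criterion. If the figure's dimensions are tight, we would first attempt the check for $k=0$ and then argue that scaling by $2k+1$ preserves it, with each original cell becoming a $(2k+1)\times(2k+1)$ block whose central cell carries the guard.
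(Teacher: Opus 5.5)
Your plan matches the paper's proof, which consists only of pointing to the gray ($\minguard(\Gamma)$ many) and outlined ($\minguard(\Gamma)+1$ many) placements in \cref{fig:hardnessA}. Your test $N_k(g)\subseteq\viska{0}{c}$ together with the common-rectangle criterion is a correct way to make that inspection rigorous, and your scaling fallback is also sound: a guard on the central cell of a $(2k+1)\times(2k+1)$ block has its whole $k$-neighborhood inside that block, and so it sees the scaled copy of the original cell's visibility region. When you check the outlined guards, show that each one sees its entire connecting corridor rather than just the overlap, because the forward direction of \cref{clm:hardness-iif} relies on this stronger form; it follows immediately from your rectangle criterion once that guard's $k$-neighborhood lies inside the corridor rectangle.
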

\begin{claimproof}
    Corresponding guard placements are indicated in~\cref{fig:hardnessA} for the three gadgets; 
    there are $\minguard(\gad)$ many \rguards{k} that see the gadget's core (highlighted in dark gray), and $\minguard(\gad)+1$ outlined guards that see the gadget's core as well as all connecting corridors.\claimqedhere
\end{claimproof}

\begin{claim}
    \label{clm:hardness-count-min}
    Let \gad be a gadget in $P$ and let $S$ be a \rguard{k} set for $P$. 
    Then~$S$ contains at least $\minguard(\gad)$ guards that are fully contained within \gad. 
\end{claim}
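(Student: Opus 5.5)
We prove \cref{clm:hardness-count-min} by a witness-set argument applied to each gadget type separately. For a corner gadget the claim is vacuous since $\minguard(\gad)=0$. For a tunnel ($\minguard=1$) and a junction ($\minguard=2$) we exhibit $\minguard(\gad)$ \emph{witness cells} in the core of \gad. We then show two things: every \rguard{k} seeing a witness cell is fully contained within \gad, and no single \rguard{k} sees two distinct witness cells. Given these two properties, any \rguard{k} set $S$ contains a distinct guard for each witness, all fully contained within \gad, which proves the claim.

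The witness cells are the core cells farthest from the corridors, namely the ends of the dead-end pockets of the core that the gray guards of \cref{fig:hardnessA} are placed to cover. A tunnel gets one such pocket; a junction gets two, on opposite sides. To show that a guard $g$ seeing a witness cell $w$ is fully contained in \gad, we use \cref{claim:char}: $N_k(g)\subseteq \viska{0}{w}$. Because of the surrounding holes and boundary, $\viska{0}{w}$ is a small orthogonally convex region (\cref{obs:basicproperties}) lying inside the gadget, namely the core together with the parts of the overlaps that are rectangularly visible from $w$. Hence $N_k(g)\subseteq\viska{0}{w}\subseteq\gad$.

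For the second property, suppose some $g$ sees two witness cells $w,w'$ of a junction. Then $N_k(g)\subseteq \viska{0}{w}\cap\viska{0}{w'}$. We argue that this intersection contains no $k$-neighborhood: it is contained in a region of width or height less than $2k+1$, namely a portion of a corridor of thickness $2k+1$ that is cut further by a reflex corner of the core. Alternatively, we may apply \cref{lem:separation-better} to a histogram covering both pockets whose horizontal connecting set $S$ contains an outer cell (a hole of the gadget). This rules out one guard seeing both witnesses.

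The main obstacle is that all of this depends on exact gadget dimensions that are available only from the figure. In particular, the scaling to thickness $2k+1$ must be checked to guarantee two things: a $k$-neighborhood fits in $\viska{0}{w}$ only when it lies inside \gad, and it never fits in the intersection of two witnesses' visibility regions. I would first try to settle the junction case by an explicit Manhattan-distance computation in the style of \cref{clm:checkNeighborhood}, measuring the distance from candidate guard cells to the exits of $\viska{0}{w}\cap\viska{0}{w'}$.
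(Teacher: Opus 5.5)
Your proposal follows the paper's proof. Corners are trivial, a tunnel gets one witness cell (the paper's $w_1$) and a junction two (the cells labeled $w_2$), each with $\viska{0}{w}\subseteq\Gamma$; by symmetry any \rguard{k} seeing $w$ then has $N_k(g)\subseteq\viska{0}{w}$ and is fully contained in $\Gamma$, and the two junction witnesses cannot share a guard.

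Two justifications in that last step should change. First, drop the alternative via \cref{lem:separation-better}: that lemma needs a hole-free polyomino, and the reduction's $P$ has holes. Second, replace the ``width less than $2k+1$'' heuristic by the exit-distance check of \cref{clm:checkNeighborhood}, i.e., show that every cell of $\viska{0}{w}\cap\viska{0}{w'}$ reaches a cell outside this set within $k$ steps. The heuristic is not enough because a narrow region bounded by $\partial P$ can still contain a truncated $k$-neighborhood.
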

\begin{claimproof}
    The statement is clear for corner gadgets, as $\minguard(\gad)=0$. 
	For the other gadgets, refer to \cref{fig:hardnessA} for the cells labeled $w_1$ and $w_2$. 
    In tunnel gadgets, we note that $\viska{0}{w_1}$ is contained within \gad, and thus a \rguard{k} that sees it must be fully contained within it as well. 
    For junction gadgets, similar reasoning applies: 
	The visibilities of both cells labeled~$w_2$ have the same property. 
    Additionally, as both visibility polyominoes are distinct, two separate \rguards{k} are required.
\end{claimproof}

\begin{claim}
    \label{clm:hardness-corridor}
    Let $C$ be a corridor in $P$, and let $S$ be a \rguard{k} set for $P$. 
    Then there exists a \rguard{k} in $S$ that sees $C$.
	Also, every guard that sees $C$ is fully contained within $C$.
\end{claim}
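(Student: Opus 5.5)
We prove the two parts of \cref{clm:hardness-corridor} separately, using the corridor geometry: by construction, $C$ is a straight strip-shaped region of width (say) $2k+1$, i.e., $2k+1$ rows of cells. Its two short ends open into gadget overlaps, and all cells along its long sides are outer cells of $P$. First, fix a cell $c$ in the middle of $C$, namely in the central row and at least $2k+1$ columns away from both ends (we scale the construction so that corridors are long enough for such a cell to exist). Since the long sides of $C$ are boundary, $\viska{0}{c}$ is contained in the union of $C$ and the two adjacent overlap regions; the exact shape of the overlaps is fixed by \cref{fig:hardnessA}. As $S$ guards $P$, some $g\in S$ has $c\in\viska{k}{g}$, and by \cref{claim:char} we get $N_k(g)\subseteq\viska{0}{c}$.

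Second, we show that this $g$ sees all of $C$, and in fact we prove the stronger statement that any $g$ seeing a cell of $C$ sees all of $C$ and is fully contained within $C$. Start from $N_k(g)\subseteq\viska{0}{c}$. Because the corridor has width exactly $2k+1$, a $k$-neighborhood inside the corridor must be centered on the central row. We then argue that its center cannot lie in the overlap: there, the $k$-neighborhood would either leak into the gadget core, or reach cells at the corridor's side rows that do not see $c$. Hence $N_k(g)\subseteq C$. Next, every cell $c'\in N_k(g)$ lies in the rectangle $C$, and $\rect(c',x)\subseteq C\subseteq P$ for every $x\in C$. Hence $C\subseteq\viska{0}{c'}$ for all such $c'$, i.e., $C\subseteq\viska{k}{g}$, which gives the first statement.

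For the second statement, we take an arbitrary guard $g'$ that sees $C$. In particular, $g'$ sees the middle cell $c$ and both corner cells of $C$ at one end. By symmetry (\cref{obs:basicproperties}), $N_k(g')$ lies in $\viska{0}{c}$ intersected with the visibility regions of those corner cells. We then check from the gadget drawings that this intersection is exactly $C$, possibly together with parts of the overlap that are too thin to host a $k$-neighborhood. This shows that $g'$ is fully contained within $C$.

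The main obstacle is this last geometric verification at the corridor ends. We must check, for each gadget type (corner, tunnel, junction), that the overlap region cannot host a $k$-neighborhood that still sees all of $C$. The difficulty is that overlaps are wider than the corridor and adjacent to the core. The first approach we would try is a separation argument in the spirit of \cref{lem:separation-better}: a guard seeing both far-end cells of $C$ and an overlap-adjacent cell forces a horizontal or vertical strip through an outer cell of the gadget boundary, which is a contradiction. Should that fail, we would fall back on an explicit case check of the $k$-neighborhood positions in \cref{fig:hardnessA}.
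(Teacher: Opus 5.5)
Your plan has a genuine gap, and it comes from a misreading of the construction.

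\textbf{What the construction actually is.} In the paper, the light-blue overlap is the \emph{intersection} of a gadget with a corridor. So the corridor $C$ contains the overlaps at its two ends, and $C$, overlaps included, is a single rectangle. The overlaps are not wider than the corridor.

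\textbf{Why your key step fails.} Your argument hinges on showing that the center of $N_k(g)$ cannot lie in an overlap, so that $N_k(g)$ stays in the part of the corridor between the overlaps. This is false for the construction, and no case check of \cref{fig:hardnessA} can establish it. Each corridor incident to a gadget is seen by one of the outlined guards of \cref{fig:hardnessA}, which sit inside the gadget. Hence their $k$-neighborhoods lie in overlaps; \cref{clm:hardness-count-guards} and Case~(1) of \cref{clm:hardness-iif} rely on exactly such guards. The same objection defeats your plan for the second statement: the intersection you want to compute contains the overlaps, and those do host $k$-neighborhoods.

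\textbf{Two further incorrect assertions.}
\begin{enumerate}
\item $N_k(g)$ consists only of cells of $P$. A guard in a side row of the corridor therefore has a truncated neighborhood that still lies in the corridor, so nothing forces its center onto the central row.
\item Read literally, the claim that any guard seeing \emph{some} cell of $C$ sees all of $C$ is false. What you actually use is the statement for guards seeing the particular cell $c$.
\end{enumerate}

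\textbf{The missing observation.} By construction, a cell $c$ of $C$ that lies in no overlap satisfies $\viska{0}{c}=C$ exactly. With this, your first step already finishes the proof:
\begin{itemize}
\item Some $g\in S$ must see $c$, and every guard that sees $C$ does. Any such $g$ satisfies $N_k(g)\subseteq\viska{0}{c}=C$ by \cref{claim:char}, which is the second statement.
\item Since $C$ is a rectangle, $\rect(c',x)\subseteq C\subseteq P$ for all $c'\in N_k(g)$ and $x\in C$. So $g$ sees all of $C$, which gives the first statement.
\end{itemize}
You need no analysis of the corridor ends, no separation argument, and no intersection with visibility regions of corner cells. Guards located in the overlaps are explicitly allowed.
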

\begin{claimproof}
    By construction, there is a cell $c$ in the corridor $C$ that is not part of any gadget~overlap. 
    As $\viska{0}{c}$ coincides with $C$, it implies that if a \rguard{k} $g$ sees $c$ (and at least one guard in $S$ must do so), its $k$-neighborhood~$N_k(g)$ is contained in $C$.
    Moreover, since $C$ is a rectangle, every cell is seen from any position within~$C$, implying that $g$ sees $C$. 
\end{claimproof}

\begin{claim}
    \label{clm:hardness-count-guards}
    Let \gad be a gadget in $P$ and let $S$ be a \rguard{k} set for $P$. 
    Furthermore, let $S' \subseteq S$ be the subset of guards that are fully contained within \gad. 
    Then, if a guard $g\in S'$ sees a corridor, we have $|S'| \geq \minguard(\gad)+1$. 
\end{claim}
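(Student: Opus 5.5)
We argue gadget by gadget, using the labeled cells $w_1$, $w_2$ from \cref{fig:hardnessA}. The case of corner gadgets is immediate: there $\minguard(\gad)=0$ and $g\in S'$ itself gives $|S'|\geq 1$. So the work lies in tunnel and junction gadgets. There we take the $\minguard(\gad)$ guards of $S'$ guaranteed by \cref{clm:hardness-count-min}, i.e., those seeing the witness cells $w_1$ (tunnel) or the two cells $w_2$ (junction). We then show that none of them can also see a corridor. Once this is done, $g$ must be a further guard in $S'$, and $|S'|\geq\minguard(\gad)+1$ follows.

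The key step is a separation statement: no single \rguard{k} sees both a witness cell $w$ and a cell of a connecting corridor $C$. We plan to prove it via \cref{claim:char}. A guard $h$ sees $w$ and a corridor cell $c$ if and only if $N_k(h)\subseteq \viska{0}{w}\cap\viska{0}{c}$. By \cref{clm:hardness-corridor}, any guard seeing $C$ is fully contained in $C$, so $N_k(h)\subseteq C$. Since the corridor has width $2k+1$, such an $N_k(h)$ stretches over the full corridor width, or at least reaches cells of $C$ far from the gadget core.

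We then verify from the gadget geometry that $\viska{0}{w}$ does not contain any $k$-neighborhood lying inside a corridor. Concretely, $\viska{0}{w}$ is confined to the core together with a thin part of the overlap. This is exactly how the witness cells were chosen, namely as cells in pockets whose visibility is blocked by the walls separating the core from the corridors. Combining the two facts gives a contradiction. For junctions we additionally use that the two $w_2$-guards are distinct, which \cref{clm:hardness-count-min} already provides, and we apply the separation to each of them.

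There is one subtle point. A guard $h$ fully contained in $\gad$ could see a corridor by having its neighborhood inside the overlap region, and the overlap belongs to both the gadget and the corridor. We handle this with the stronger form of \cref{clm:hardness-corridor}: every guard seeing $C$ has $N_k(h)\subseteq C$. The separation then only needs $\viska{0}{w}\cap C$ to contain no full $k$-neighborhood. This reduces to checking that the part of the overlap visible from $w$ has thickness less than $2k+1$ in the relevant direction.

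The main obstacle is this geometric verification, which depends on the exact scaled dimensions of the gadgets in \cref{fig:hardnessA}. We would carry it out by computing $\viska{0}{w}$ explicitly for each gadget type and comparing its extent with the corridor width $2k+1$.
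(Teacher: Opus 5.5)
There is a genuine gap in your key separation step. You fix the witnesses of \cref{clm:hardness-count-min} (the cell $w_1$ of a tunnel, the two cells $w_2$ of a junction). You then want to show that no guard seeing one of them can see \emph{any} connecting corridor, so that $g$ is automatically an additional guard. This cannot hold for a tunnel, because it contradicts \cref{clm:hardness-place-guards}:
\begin{itemize}
\item There, two guards inside $\Gamma$ see the core together with both corridors $C_1$ and $C_2$.
\item By \cref{clm:hardness-corridor}, a guard seeing $C_j$ is fully contained in $C_j$. The two corridors of a tunnel enter from opposite sides and are disjoint, so no guard sees both.
\item Hence each of the two guards sees exactly one corridor, and whichever of them sees $w_1$ also sees a corridor.
\end{itemize}
The same counting applies to junctions. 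Two of the three corridors are opposite, so no guard sees all three, and at most one of the three guards sees no corridor. Since the two cells $w_2$ need two different guards, at least one of them is seen by a corridor-seeing guard. So the geometric check you defer ($\viska{0}{w}\cap C$ contains no $k$-neighborhood for every connecting corridor $C$) must fail for some $C$. When it fails, $g$ may be the very guard you already counted for $w$, and your count only gives $|S'|\geq\minguard(\Gamma)$.

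The missing idea is that the witnesses must depend on which corridor $g$ sees; this is how the paper argues. Suppose $g$ sees $C_i$, so it is fully contained in the overlap of $C_i$. For this $i$, the gadget has cells $w_i$ (one in a tunnel, two in a junction) with three properties:
\begin{itemize}
\item $g$ cannot see them;
\item their visibility regions lie inside $\Gamma$, so every guard of $S$ seeing them belongs to $S'$;
\item in a junction, the two cells $w_i$ cannot be seen by a common guard.
\end{itemize}
This yields $\minguard(\Gamma)$ guards in $S'\setminus\{g\}$, hence $|S'|\geq\minguard(\Gamma)+1$. The separation you actually need is only between $C_i$ and its own witnesses $w_i$, not between a fixed witness and all corridors. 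In the tunnel, a guard seeing $C_2$ is allowed to see $w_1$, and the two-guard configuration relies on this. Your corner case, and your use of $N_k(h)\subseteq C$ via \cref{clm:hardness-corridor}, carry over unchanged.
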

\begin{claimproof}
    The statement is trivially true for corners: 
	Once any guard is fully contained in a corner, we already have $\minguard(\gad)+1$ guards.
    
    For tunnels and junctions, we observe that, by~\cref{clm:hardness-corridor}, the guard $g$ is fully contained in a (light-blue) overlap.
    In tunnels, either $w_1$ or $w_2$ remains unseen by $g$; 
	consequently, a second \rguard{k} is necessary to guard the gadget~\gad. 
    Similarly, in junctions there exist two cells that are not seen by $g$ and whose visibility polyominoes are distinct, implying that they cannot be covered by the same \rguard{k}. 
    In \cref{fig:hardnessA}, we indicate such possible cell positions; when $g$ sees the corridor $C_i$, the two associated cells are labeled $w_i$. 
\end{claimproof}    

We are now ready to prove \cref{clm:hardness-iif}. 
\begin{claimproof}[Proof of \cref{clm:hardness-iif}.]
    We show both directions separately. 
    Let $J$ be a vertex cover of size $\lambda$ for~$G$. 
    We use~\cref{clm:hardness-place-guards} to guard $P$ by placing $\minguard(\gad)+1$ many \rguards{k} for every vertex that is part of the vertex cover $J$, and $\minguard(\gad)$ many \rguards{k} for the other vertices to guard the corresponding gadgets; in total, this guards the entire polyomino.

    Now, let $S$ be a \rguard{k} set for $P$. 
    Given $S$, we construct a vertex cover $J$ for~$G$ as follows. 
    As argued in~\cref{clm:hardness-corridor}, every corridor in $P$ fully contains a \rguard{k}~$g$. 
    We distinguish two cases: (1) If $g$ is fully contained in the overlap of a gadget, we add the associated vertex to the vertex cover $J$.
    (2) Otherwise, we arbitrarily choose one of the incident vertices in $G$, and add it to $J$. 
    After considering all corridors, $J$ necessarily constitutes a vertex cover of $G$.
    
    It remains to show that $|J|+\offs \leq |S|$. 
    To this end, we apply \cref{clm:hardness-count-min} to account for $\minguard(\gad)$ guards contained in every gadget \gad in $P$. 
    Altogether, this identifies $\offs$ many different guards in $S$. 
    What remains is to identify another \rguard{k} for every vertex in $J$.
    For this, consider the following case distinction:
    If $v \in J$ was added in Case (1), it follows that a guard was fully contained in the overlap of the corresponding gadget.
    In this case, \cref{clm:hardness-count-guards} ensures that there is another \rguard{k} $g$ fully contained in the gadget, and this guard has not yet been accounted for in $\offs$. 
    Otherwise, if $v$ arose from Case (2), then the guard in question was located within a corridor, and was therefore not fully contained in a gadget.
    Clearly, $\offs$ does not include this guard either. 
    Repeating this reasoning for every vertex in $J$ yields at least $|J|$ additional \rguards{k} in $S$.
    Therefore,~$S$ contains at least $|J|+\offs$ \rguards{k}. 
\end{claimproof}

\cref{clm:hardness-iif,clm:hardness-place-guards,clm:hardness-corridor,clm:hardness-count-min,clm:hardness-count-guards} together with the aforementioned construction of $P$ from a given embedding of $G$ concludes the proof.
\end{proof}

\section{Omitted details for~\texorpdfstring{\cref{sec:k-restless}}{Section~\ref{sec:k-restless}}}

\label{sec:k-guards}
In this section, we present the details for the proofs in \cref{sec:k-restless}, namely for \cref{lem:separation-better}, \cref{thm:k-guard-opt-hist}, \cref{prop:0-approx-2,lem:5}.

\subsection{A crucial tool for lower bounds}
We now present the general proof for \cref{lem:separation-better} that not only works in histograms, but also for embedded histograms. 

\hypertarget{proof:separation}{}
\separation*

\begin{figure}[htb]
	\centering
	\includegraphics[page=2]{separation_embedded_1.2}
	\caption{Illustration for the proof of \cref{lem:separation-better}: 
		If $S$ contains a cell $x$ not contained in the histogram $H$ and a guard $g$ sees both $c$ and $c'$, we can find an outer cell $o$ that is enclosed by paths of cells in $P$ both above and below, implying that $P$ is not hole-free.}
	\label{fig:separation-embedded}
\end{figure}

\begin{proof}
	Without loss of generality, we assume that $H$ is upwards, as visualized in~\cref{fig:separation-embedded}.
	For the sake of contradiction, suppose that there exists a cell $x$ in $S$ that is not contained in~$H$.  
	We identify an outer cell $o$ of $P$ not above $x$ as follows: 
	If $x$ is an outer cell, we set~$o:=x$. 
	Otherwise, we pick the first outer cell of $P$ directly below $x$. 
	Since the top \outerdents of an embedded histogram are adjacent to outer cells, the outer cell $o$ always lies above~$H$. 
	
	Note that $g$ is placed in a position above $o$; otherwise $o$ blocks the vision to  $c$ or $c'$. 
	Because $g$ sees both $c$ and $c'$, there exists a path inside $P$ from $c$ over~$g$ to~$c'$ through $\rect(c,g)$ and $\rect(g,c')$ above $o$. 
	However, $c$ and $c'$ are also connected by a path through the histogram which lies below $o$. 
	Therefore, $o$ is enclosed by cells of $P$ which implies that $P$ has a hole, a contradiction to the hole-freeness of $P$.
\end{proof}

\subsection{On optimally \guarding{k} histograms}

We now provide the details on optimally \guarding{k} histograms. 

\hypertarget{proof:theoremOptHist}{}
\theoremOptHist*

\begin{proof}
	We start by explaining our strategy. 
	First, we partition $H$ into maximal horizontal strips and merge all vertically adjacent strips of equal width. 
	This yields a partition of $H$ into rectangles, which we refer to as \emph{blocks}. 
	For an illustration, consult \cref{fig:blocksA}. 
	Note that the number of blocks is at most $\nicefrac{n}{2}$; the top edge of each block is incident to at least two vertices.
    Because blocks are enclosed by outer cells to the left and right, blocks are either disjoint or \emph{vertically adjacent}. 
	These adjacencies yield an \emph{adjacency graph} $T$ of all blocks in~$H$, where edges are directed upwards. 
    Note that by the histogram structure, $T$ is a rooted tree, and the root corresponds to the block containing the histogram base. 
	The \emph{extension} of a block~\block is the rectangle $\block'$ obtained by extending~\block vertically to the histogram base.
    In every block, we select one cell that is adjacent to an outer cell to its top. 
    We refer to this cell as the block's \emph{key} cell. 
    
    \begin{figure}[htb]
    	\centering
    	\includegraphics{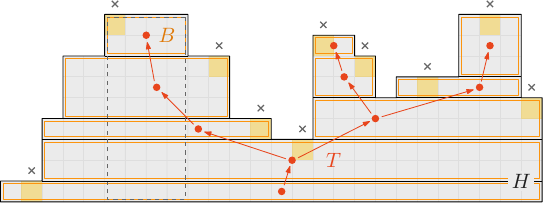}
    	\caption{
    		Optimally \guarding{k} a histogram $H$ is based on a partition into rectangular blocks (highlighted in orange) that form a rooted tree $T$. 
        	The extension of a block reaches down to the histogram base, as outlined for block $B$ in dashed lines.
        	Each block is assigned a key cell within the respective block, visualized in yellow.}
    	\label{fig:blocksA}
    \end{figure}

We establish the following key lemma.
    \begin{lemma}
	\label{clm:k-guard-hist-path-vis}
	The blocks seen by a \rguard{k} induce a directed path in $T$. 
\end{lemma}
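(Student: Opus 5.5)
We need to show that the set of blocks seen by a \rguard{k} $g$ forms a directed path in $T$, i.e., any two seen blocks are comparable in the tree order (one is an ancestor of the other), and the seen set is closed under taking blocks in between. Throughout, a block is \emph{seen} if it lies entirely in $\viska{k}{g}$. The plan has two steps: comparability via \cref{lem:separation-better}, and convexity along ancestor chains via the histogram structure.

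\emph{Step 1 (comparability).} Suppose $g$ sees two blocks $B_1,B_2$ that are incomparable in $T$. Take their lowest common ancestor $A$. The two children of $A$ leading to $B_1$ and $B_2$ are disjoint blocks sitting on top of $A$. Since they are distinct maximal strips in the same row, they are separated horizontally by at least one outer cell in that row. Pick cells $c\in B_1$ and $c'\in B_2$; without loss of generality $y\ge y'$. The horizontal set $S$ from $(x',y')$ to $(x,y')$ lies in row $y'$. Row $y'$ lies above the top of $A$, since both $B_1$ and $B_2$ are strictly above $A$. Hence, in that row, the cells of $B_1$'s subtree and $B_2$'s subtree are separated by a cell outside $H$. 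Concretely, $B_1$'s subtree and $B_2$'s subtree occupy disjoint column ranges, because every descendant block lies within the column range of its ancestor. Thus $S\nsubseteq H$, and \cref{lem:separation-better} shows that no \rguard{k} sees both $c$ and $c'$, a contradiction. The one detail to verify is that the row $y'$ at the column range between the two children is indeed not in $H$. This follows because the children are maximal horizontal strips at the row just above $A$, and any cell of $H$ above a gap would, by the histogram property, force the gap cells into $H$.

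\emph{Step 2 (no holes in the chain).} Suppose $g$ sees $B_1$ and a descendant $B_2$, and let $B$ be a block on the $T$-path strictly between them. I claim $g$ sees $B$. Every $B$ between them has a column range containing that of $B_2$ and contained in that of $B_1$, and it lies vertically between them. For any $g'\in N_k(g)$, both $\rect(g',c)$ for $c\in B_1$ and $\rect(g',c')$ for $c'\in B_2$ lie in $H$. For a cell $b\in B$, I would argue that $\rect(g',b)$ is contained in $H$ as follows. Every column of $B$ spans from the base up to at least the top of $B$, so the part of $\rect(g',b)$ below the top of $B$ within $B$'s columns lies in $H$. The remaining part is horizontally bounded by $g'$'s column and $b$'s column. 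This part is controlled by $\rect(g',c)$ and $\rect(g',c')$, choosing $c,c'$ in the same column as $b$ when possible (using the column range containment), together with the downward closure of histograms. If $g'$ is above $B$, the needed region lies inside $\rect(g',c')$ extended downward, which is in $H$ by downward closure. If $g'$ is below, the vertical extent of $\rect(g',b)$ is within that of $\rect(g',c)$ for $c$ chosen above $b$ in $B_1$'s column range. Combining this with the cell-to-block fact (seeing any cell of a block with an appropriate column choice gives the whole block, since a block is a rectangle and visibility restricted to strips is an interval, as in \cref{lem:strip-guarding-interval-cover}) yields $B\subseteq\viska{k}{g}$.

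Together, Steps 1 and 2 imply that the seen blocks are pairwise comparable and closed under intermediate blocks, hence form a directed path in $T$. I expect Step 2 to be the main obstacle. It requires handling guards positioned in other branches or above/below, and the column containment arguments; I would first try to reduce it to downward closure plus the fact that $B$'s columns are sandwiched between those of $B_2$ and $B_1$.
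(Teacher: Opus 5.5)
Your Step~1 is sound and takes a slightly different route from the paper. You argue by contradiction via the lowest common ancestor $A$: the two relevant children of $A$ occupy disjoint column ranges, and by downward closure the gap between them stays outside $H$ in every higher row, so \cref{lem:separation-better} rules out a common guard. The paper instead applies \cref{lem:separation-better} directly to the two key cells and reads off the path from the blocks met by the vertical strip between them. Either way, comparability is fine.

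The gap is Step~2, whose case analysis does not work as written.
\begin{itemize}
\item If $g'$ is above $B$, then $\rect(g',c')$ with $c'\in B_2$, extended downward, contains $\rect(g',b)$ only if $b$'s column lies between those of $g'$ and $c'$. This need not hold when $b$ is outside $B_2$'s columns.
\item If $g'$ is below $B$, there is no $c\in B_1$ ``above $b$'', since $B_1$ lies below $B$.
\item The closing ``cell-to-block fact'' is false for general cells. A guard deep inside a tall width-$1$ child of $B$ sees the cells of $B$ in its own column but no others.
\end{itemize}

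What you are missing is to pin down where $N_k(g)$ itself lies. The top row of $B_1$ contains a cell $a_1$ with an outer cell directly above it (a key cell), and $\viska{0}{a_1}$ is exactly the extension $B_1'$ of $B_1$ down to the base. Every $g'\in N_k(g)$ sees $a_1$, so by symmetry $g'\in B_1'$; likewise $g'\in B_2'$. Since $B$'s columns contain $B_2$'s and $B$ lies above the top of $B_1$, you get $N_k(g)\subseteq B_1'\cap B_2'\subseteq B'$. As $B'$ is a rectangle in $H$ containing $B$, every $g'\in N_k(g)$ sees all of $B$.

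This is exactly the paper's argument: its claim that seeing a key cell confines $N_k(g)$ to the block's extension. It also shows that your case ``$g'$ above $B$'' is vacuous, and it makes the column-choice bookkeeping unnecessary.
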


We show the lemma with the help of three claims.
    \begin{claim}
		\label{clm:k-guard-hist-ext}
		If a \rguard{k} $g$ sees the key cell $a$ of block \block, then its $k$-neighborhood~$N_k(g)$ lies  in the extension $\block'$ of \block. 
	\end{claim}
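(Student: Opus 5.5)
The claim follows from a slightly stronger statement: the whole visibility region $\viska{0}{a}$ of the key cell lies in the extension $\block'$. Every cell $c\in N_k(g)$ must see $a$, because $a\in\viska{k}{g}=\bigcap_{c\in N_k(g)}\viska{0}{c}$. By the symmetry in \cref{obs:basicproperties}, this means $c\in\viska{0}{a}$. Hence $N_k(g)\subseteq\viska{0}{a}$, and it suffices to show $\viska{0}{a}\subseteq\block'$.

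Assume $H$ is upwards. Let $\block$ occupy columns $x_1,\dots,x_2$ and rows $y_1,\dots,y_2$. Recall how blocks arise: maximal horizontal strips are merged when they are vertically adjacent and have equal width. Each row of $\block$ is therefore a maximal horizontal strip of $H$. So for every row $y\in\{y_1,\dots,y_2\}$, the cells $(x_1-1,y)$ and $(x_2+1,y)$ are outer cells of $H$. By the choice of the key cell, $a=(x_a,y_2)$ has $x_1\le x_a\le x_2$, and the cell $(x_a,y_2+1)$ directly above it is an outer cell.

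Now let $c=(x,y)$ be any cell with $\rect(a,c)\subseteq H$. We rule out three cases.
\begin{itemize}
\item If $x<x_1$, then $\rect(a,c)$ spans columns $x,\dots,x_a$, which include $x_1-1$, and it contains row $y_2$. So it contains the outer cell $(x_1-1,y_2)$, a contradiction.
\item If $x>x_2$, the same argument with the outer cell $(x_2+1,y_2)$ gives a contradiction.
\item If $y>y_2$, then $\rect(a,c)$ contains the column of $a$ over rows $y_2,\dots,y$. In particular it contains the outer cell $(x_a,y_2+1)$, a contradiction.
\end{itemize}
Hence $x_1\le x\le x_2$ and $y\le y_2$. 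Since $c\in H$, the cell also lies on or above the base row. Thus $c$ lies in the rectangle spanned by columns $x_1,\dots,x_2$ from the base up to row $y_2$, which is exactly $\block'$.

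Combining, $N_k(g)\subseteq\viska{0}{a}\subseteq\block'$. The only point needing care is that the side cells of $\block$ are outer cells in every row of $\block$. This is exactly the maximality of the strips forming the block, and it holds because $H$ is a genuine histogram, whose boundary is the boundary of the polyomino. The argument therefore does not extend directly to embedded histograms.
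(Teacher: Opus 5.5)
Your proof is correct and follows the paper's argument: use symmetry to get $N_k(g)\subseteq \viska{0}{a}$, then use the outer cells flanking each row of $B$ and the outer cell above $a$ to confine $\viska{0}{a}$ to $B'$. The paper only asserts $\viska{0}{a}=B'$ in one line, whereas you explicitly verify the inclusion $\viska{0}{a}\subseteq B'$, which is all the claim needs, via your three-case check.
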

	\begin{claimproof}
        We notice that $\viska{0}{a}=\block'$, as \block is enclosed by outer cells to the left and right and the adjacent outer cell above $a$ restricts the visibility to the top. 
        Let $c\in N_k(g)$. Because $g$ sees $a$, we have $a\in \viska{0}{c}$ and by symmetry $c\in \viska{0}{a}=\block'$.
	\end{claimproof}

	\begin{claim}
		\label{clm:k-guard-hist-iif}
		A \rguard{k} $g$ sees a block \block if and only if it sees its key cell $a$. 
	\end{claim}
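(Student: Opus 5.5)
The forward direction is immediate: the key cell $a$ lies in $\block$, so if $g$ sees all of $\block$ it sees $a$.

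For the converse, suppose $g$ sees $a$. By \cref{clm:k-guard-hist-ext}, every cell $c\in N_k(g)$ lies in the extension $\block'$. So it suffices to show that every cell $c\in\block'$ sees every cell $b\in\block$, i.e.\ $\rect(c,b)\subseteq H$. Then $b\in\viska{0}{c}$ for all $c\in N_k(g)$, hence $b\in\viska{k}{g}$, and $g$ sees $\block$.

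Both $c$ and $b$ lie in $\block'$, and $\block'$ is an axis-aligned rectangle. Hence $\rect(c,b)\subseteq\block'$, and it remains to check $\block'\subseteq H$. The block $\block$ is a rectangle in the upwards histogram $H$. Every cell of $H$ forces all cells between it and the base strip to be in $H$. Therefore every column segment from a cell of $\block$ down to the base lies in $H$, and their union is exactly $\block'$.

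The only subtle point is already handled by \cref{clm:k-guard-hist-ext}: the guard may move, but all of its possible positions stay inside $\block'$. Once that is granted, there is no real obstacle; the argument only uses that $\block'$ is a rectangle contained in $H$, where rectangular visibility is complete. I would state it in exactly this order: the trivial direction, then the reduction via \cref{clm:k-guard-hist-ext}, then $\block'\subseteq H$ from the histogram property.
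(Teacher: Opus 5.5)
Your proposal is correct and follows essentially the same argument as the paper. Both prove the forward direction trivially. For the converse, both use \cref{clm:k-guard-hist-ext} to place $N_k(g)$ inside $\block'$, and then use $\block'\subseteq H$ to conclude that every such position sees all of $\block$. Your only addition is spelling out why $\block'\subseteq H$ holds via the histogram property.
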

	\begin{claimproof}
		Clearly, if $g$ sees \block, this includes $a \in \block$. 
		If $g$ sees $a$ then, by \cref{clm:k-guard-hist-ext}, $N_k(g)$ is contained in the extension $\block'$ of $\block$. 
		Because $\block'$ is contained in $H$, for each cell $c\in N_k(g)\subseteq \block'$, we have $\block\subseteq \viska{0}{c}$. 
		Thus $g$ sees each cell of \block. 
	\end{claimproof}

    \begin{claim}
		\label{clm:k-guard-hist-path-exists}
        If a \rguard{k} $g$ sees two different blocks $B_1$ and $B_2$, there exists a directed path $\rho$ in~$T$ between them. 
		Moreover, $g$ sees all blocks on $\rho$.
	\end{claim}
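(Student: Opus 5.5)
Let $a_1$ and $a_2$ be the key cells of $B_1$ and $B_2$. By \cref{clm:k-guard-hist-iif}, $g$ sees both, so \cref{clm:k-guard-hist-ext} places $N_k(g)$, and in particular $g$, inside both extensions $B_1'$ and $B_2'$. The plan is to turn this nonempty intersection of extensions into ancestry in $T$, and then to show that every block on the path between them is seen.

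\emph{Step 1: $B_1$ and $B_2$ lie on a common root-to-leaf path.} I will first show that for two blocks, $B_1'\cap B_2'\neq\emptyset$ forces one block to be an ancestor of the other. The extension $B'$ of a block $B$ is the union of $B$ with the parts of its ancestors' blocks that lie in the $x$-range of $B$. This holds because the blocks below $B$ that meet the columns of $B$ are exactly its ancestors, and each is at least as wide as $B$ and covers the $x$-range of $B$.

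The $x$-ranges of two blocks are either nested, with the deeper one lying inside the other, or disjoint. If they are disjoint, then $B_1'$ and $B_2'$ are disjoint. Otherwise the ranges are nested, say that of $B_2$ inside that of $B_1$. If $B_2$ were not a descendant of $B_1$, then $B_2$ would lie below $B_1$. It could not be an ancestor of $B_1$, since ancestors are at least as wide, so it would sit in a different branch; but then the $x$-ranges would be disjoint, a contradiction. Hence, without loss of generality, $B_2$ is a descendant of $B_1$. Let $\rho$ be the directed path from $B_1$ to $B_2$ in $T$.

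\emph{Step 2: every block on $\rho$ is seen.} Let $B$ be a block on $\rho$. Its $x$-range contains that of $B_2$, and its top row lies at or below the bottom row of $B_2$. Therefore $B_2'\subseteq B_1'$ and $B_2'\subseteq B'$, since $B'$ contains every cell of $H$ in the $x$-range of $B$ up to the top of $B$. So $N_k(g)\subseteq B_2'\subseteq B'$. Since $B'$ is a rectangle contained in $H$ and includes $B$, every $c\in N_k(g)$ sees all of $B$; this is the same argument as in \cref{clm:k-guard-hist-iif}. Hence $g$ sees $B$.

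\emph{Main obstacle.} The delicate part is Step 1, namely establishing cleanly that extensions intersect only along ancestor chains. I will do this via the nested/disjoint structure of block $x$-ranges, which follows from the maximality of horizontal strips in a histogram. An alternative route is \cref{lem:separation-better}: if $g$ saw blocks in different branches, the horizontal set $S$ between their key cells would pass through an outer cell separating the branches, which gives a contradiction.
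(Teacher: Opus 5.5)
\paragraph{Step 2 contains a false inclusion.} You chose $B_2$ to be a descendant of $B_1$, so $B_2$ lies \emph{above} $B_1$ and above every block $B\neq B_2$ on $\rho$. Since $B_2\subseteq B_2'$, this gives $B_2'\not\subseteq B_1'$ and $B_2'\not\subseteq B'$. The extension $B'$ reaches only up to the top row of $B$, whereas $B_2'$ reaches up to the top row of $B_2$. Your justification, that $B'$ contains every cell of $H$ in the $x$-range of $B$ up to the top of $B$, covers only the part of $B_2'$ at or below the top of $B$. So the chain $N_k(g)\subseteq B_2'\subseteq B'$ breaks.

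This is not just a wording slip. The chain uses only $N_k(g)\subseteq B_2'$, and that information alone cannot suffice. For example, let $B_1$ be a $5\times 1$ base and $B_2$ a $1\times 4$ column standing on its middle cell. A guard with $k=0$ at the top cell of $B_2$ has $N_k(g)\subseteq B_2'$. It does not see the corner cells of $B_1$, because the rectangle to such a corner leaves $H$.

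\paragraph{The fix.} Use both constraints from \cref{clm:k-guard-hist-ext} at once, as the paper does. You have $N_k(g)\subseteq B_1'\cap B_2'$. This intersection is the rectangle spanned by the columns of $B_2$ and the rows from the base up to the top row of $B_1$. For any $B$ on $\rho$, the $x$-range of $B$ contains that of $B_2$, and the top row of $B$ is at or above the top row of $B_1$. Hence $B_1'\cap B_2'\subseteq B'$, and your rectangle argument then shows that $g$ sees $B$. With this replacement the proof is complete.

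\paragraph{Step 1 is correct and differs from the paper.} The paper applies \cref{lem:separation-better} to the key cells. This yields a horizontal strip placing the cell below the upper key cell into the lower block, and then a vertical strip crossing a chain of vertically adjacent blocks. You instead obtain ancestry directly from $B_1'\cap B_2'\neq\emptyset$, using the nested-or-disjoint structure of block $x$-ranges. That route is self-contained and valid for genuine histograms. The paper's route reuses its general separation tool, which is stated so that it also covers embedded histograms.
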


    \newcommand{\north}{1}
    \newcommand{\south}{2}
    
	\begin{claimproof}
		We illustrate the proof in \cref{fig:histogram-details}. 
		Let $a_i$ denote the key cell of $\block_i$. 
		By \cref{lem:separation-better},~$a_1$ and~$a_2$ have different vertical positions; 
		otherwise $a_1$ and $a_2$ belong to the same block, a contradiction.
		Let $a_{\north}$ denote the cell further to the top. 
		Let $c$ be the cell at the $x$-position of~$a_{\north}$ and the $y$-position of~$a_{\south}$. 
        By  \cref{lem:separation-better}, the horizontal strip from $c$ to $a_{\south}$ is contained in $H$. 
        This means that $c$ also lies in~$\block_{\south}$. 
		Moreover, because $H$ is a histogram, the vertical strip $V$ from $c$ to $a_{\north}$ is contained in~$H$. 
        Consequently, there exists a sequence of blocks intersecting~$V$ that are vertically adjacent and therefore form a directed path $\rho$ from~$\block_{\south}$ to~$\block_{\north}$ in $T$. 

		It remains to argue that $g$ sees each block $\block$ on $\rho$. 
		By \cref{clm:k-guard-hist-ext}, $N_k(g)$ lies in the intersection of both extended blocks $\block_{\north}' \cap \block_{\south}'$. 
        By the structure of histograms and the fact that~$\block$ is a block between $\block_{\north}$ and $\block_{\south}$, $\block_{\north}' \cap \block_{\south}'$ is contained in the extension of $\block$. 
		Therefore,~$g$ sees every cell of $\block$. 
    \end{claimproof}
	\begin{figure}[htb]
    	\centering
    	\includegraphics[page=2]{histogram_details}
    	\caption{
    		Proving \cref{clm:k-guard-hist-path-exists}: 
			If a \rguard{k} sees the blocks $\block_1$ and $\block_2$, we know that the strips that connect the respective key cells $a_1$ and $a_2$ with $c$ are contained in $H$, implying that there is a directed path in $T$.
			Also, the $k$-neighborhood of the guard must be contained in $\block_{\north}' \cap \block_{\south}'$.}
    	\label{fig:histogram-details}
    \end{figure}

    Now, \cref{clm:k-guard-hist-path-exists} immediately implies the lemma.

    \begin{proof}[Proof of \cref{clm:k-guard-hist-path-vis}]
		\cref{clm:k-guard-hist-path-exists} implies that any two seen blocks are in child-parent relationship. 
		Moreover, all blocks on their path are seen. 
        Note that this also covers the cases where a \rguard{k} only sees a single block or no block in $H$. 
    \end{proof}

    We now explain how to compute a \rguard{k} set for a histogram $H$.
	By \cref{clm:k-guard-hist-path-vis}, guarding $H$ translates to covering the rooted tree $T$ by a minimum number of paths induced by all \rguards{k}. 
	To this end, we first identify the set of paths in $T$. 
	We can test the existence of a path between two blocks $B_1$ and $B_2$ of $T$ in child-parent relationship as follows. 
	Let $a_1$ and $a_2$ be the key cells of $B_1$ and $B_2$, respectively. 
	By \cref{lem:guardable}, we can check whether~$a_1$ and $a_2$ can be seen by a \rguard{k} in polynomial time, and identify a suitable guard if one exists. 
	This gives a trivial upper bound of at most $\bO(|T|^2)\subseteq \bO(n^2)$ pairs to check. 
	As we are only interested in maximal paths, this can be pruned, e.g., to~$\bO(n\log n)$ pairs with binary search. 
	Hence, we may assume that we are given a tree $T$ and a polynomial number of paths in $T$.
   	
	Now that we identified all directed paths in $T$, we can cover $T$. We use the following strategy: 
    In $T$, we mark each block to be either guarded or unguarded. Initially, all blocks are unguarded. 
    We also maintain an initially empty guard set $G$ and an initially empty set of key cells $A$. 
    While it exists, we select an unguarded block $\block$ in $T$ whose children are already guarded (or $\block$ is a leaf). 
    Then, we choose a path seeing \block which comes closest to the root of~$T$.
    We add the corresponding guard~$g$ to~$G$ and the key cell of~$\block$ to~$A$.
    If no such path exists, $\block$ was not guardable, and hence we return that $H$ cannot be \guarded{k}. 
	We mark all blocks on that path as guarded.

    It remains to show correctness.
    The fact that $G$ is a valid guard set follows from \cref{clm:k-guard-hist-path-vis} and the fact that by construction, each vertex of $T$ is covered by a path.
    
	For minimality, we show that no two key cells in~$A$ can be guarded by the same \rguard{k}, i.e., $A$ is a \emph{witness} set. 
	Then, any guard set requires at least $|A|$ guards and the fact that $|G|=|A|$ implies optimality.
	
	For the sake of a contradiction, suppose there exists a \rguard{k} $g$ that sees two different key cells $a$ and $a'$ in~$A$. 
	By \cref{clm:k-guard-hist-path-exists}, $a$ and $a'$ lie on a directed path in $T$. 
	Without loss of generality, let $a$ denote the key cell of the block closer to the root in $T$. Note that $a'$ was added to $A$ before $a$ because it lies in a subtree of $a$. 
	When handling $a'$, we added a guard $g'$ to $G$ that sees $a'$ and maximizes the number of seen blocks in root direction. 
	As $a$ is added to $A$, it remains unguarded after $g'$ is added.
	However, $g$ sees both $a$ and $a'$ and therefore reaches further to the root than $g'$. 
	Therefore, $g$ would have been chosen instead of~$g'$, a contradiction. 
    This proves the theorem.
\end{proof}

\subsection{A 2-approximation for \guarding{0} polyominoes}

We now show how a simple observation allows us to approximate \rguard{0} sets for hole-free polyominoes. 

\hypertarget{proof:twoRestlessOnePoint}{}
\twoRestlessOnePoint*
\begin{proof}
	If the point guard $p$ lies in the interior of a cell, we replace it by a \rguard{0}~$g$ at the same location; 
	note that their visibility regions coincide. 
	If $p$ lies at the boundary of a cell, we split the visibility region of $p$ along the bisector coinciding with the boundary; 
	without loss of generality vertical. 
	Because the visibility polygons are star-shaped, the resulting polyominoes are histograms that have a single left or right \outerdent, respectively. 
	In \cref{fig:two-approxA} these are marked with a bold line. 
	
	\begin{figure}[htb]
		\centering
		\includegraphics[page=2]{2-approx}
		\caption{The visibility region of a point guard can be split into two histograms that can be guarded with a single \rguard{0} each.
		}
		\label{fig:two-approxA}
	\end{figure}
	
	Placing a \rguard{0} on the base of each histogram such that it sees the \outerdent guards the respective histogram. 
	Hence, the visibility region of $p$ is covered by two \rguards{0}.
	Note that the same arguments hold if $p$ lies on a grid point.
\end{proof}

\subsection{From embedded histograms to hole-free polyominoes}

We provide details on combining guard sets for embedded histograms to entirely guard hole-free polyominoes, while introducing an additional approximation factor of five. 

\hypertarget{proof:fiveapproxlemma}{}
\fiveapproxlemma*

\begin{proof}
This lemma extends a result of Lingas, Wasylewicz, and \.Zyli\'nski~\cite{3approx}, which is in turn based on a linear-time partition of hole-free orthogonal polygons into a set of histograms presented by Levcopoulos~\cite{levcopoulos}, as described in \cref{thm:k-guard-poly} and visualized in~\cref{fig:5A}.
	
	\begin{figure}[htb]
		\centering
		\includegraphics[]{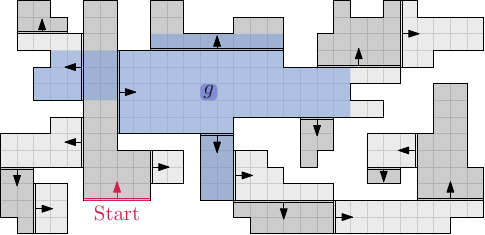}
		\caption{
			Partitioning a polyomino into histograms, with red being the starting edge. 
		}
		\label{fig:5A}
	\end{figure}

	In this decomposition, Lingas, Wasylewicz, and \.Zyli\'nski~\cite{3approx} observe that the visibility region of a point guard intersects at most five histograms. 
	In  \cref{fig:5A}, the visibility region of a guard $g$ that intersects five histograms~is highlighted in blue.
	We observe that this property equally applies to \rguards{0} in polyominoes, as (1) polyominoes are orthogonal polygons, and (2) \rguards{0} (and therefore \rguards{k}) see no more than a point guard at the same position. 
	We also note that (as argued in the proof of~\cref{thm:k-guard-poly}) all histograms resulting from Levcopoulos's partition are embedded histograms. 

	Let $\apx$ be an $\alpha$-approximation that computes a \rguard{k} set for a given embedded histogram. 
	We obtain a $5\alpha$-approximation $\mathcal{A}'$ for hole-free polyominoes as follows.
	By~\cref{lem:partition}, we obtain a partition of $P$ into a set of $\ell$ embedded histograms $\mathcal{H}=\{H_1, \dots, H_{\ell}\}$. 
	We then run  $\apx$ on every $H_i \in \mathcal{H}$ and return the union of all derived \rguard{k} sets. 

	Let $G$ be a minimum guard set for~$P$ and $\opt(X)$ a minimum set of guards in $P$ for guarding the subset $X\subseteq P$.
	Moreover, let  $G_i$ denote the set of guards in $G$ that help to guard~$H_i$, i.e., see at least one cell of $H_i$. 
	By the above discussion, each guard appears in at most 5 $G_i$'s, i.e., ${\sum_{i}|G_i|\leq 5\cdot|G| = 5\cdot\opt(P)}$. 
	Moreover, $G_i$ is a valid guard set of $H_i$ and hence, $\opt(H_i)\leq |G_i|$. 
	Together, this yields the following chain of inequalities: 
	\[
	\sum_{H_i \in \mathcal{H}} |\apx(H_i)| 
	\; \leq\; 
	\sum_{H_i \in \mathcal{H}} \alpha \cdot\opt(H_i)
	\; \leq\; 
	\sum_{H_i\in \mathcal{H}} \alpha\cdot|G_i|
	\; \leq\; 
	5\alpha \cdot \opt(P)
	\]
	This completes the proof. 
\end{proof}


\section{Omitted details for~\cref{sec:better-approx-fixed-k}}
\label{app:hole-free-1-restless}

We now fix $k=1$ and give all details for proving \cref{thm:1-guard-poly}.

\hypertarget{proof:thmBetterAlgorithmForHoleFreeKOne}{}
\thmBetterAlgorithmForHoleFreeKOne*

\begin{proof}
	We refine the proof of  \cref{thm:k-guard-poly} by using the fact that $k=1$. 
    As before, we use \cref{lem:partition} to partition the polyomino $P$ into embedded histograms,
    and apply \cref{lem:k-guard-embedded} to approximate a minimum guard set on every embedded histogram. 
    Let~$\opt$ denote the number of guards in an optimal solution.
    The main insight is that for $k=1$, we can easily improve the maximum number of placed guards from $4\cdot \opt$ (for arbitrary~$k$) to $2\cdot \opt + 1$ for $k=1$. 
    The approximation factor then depends on the worst ratio between $\opt$ and $2\cdot \opt + 1$, which is 3 for the case $\opt=1$. 
    However, by checking all possible combinations of guard placements up to a fixed cardinality and preferring such optimal guardings over the approximation, we can decrease the approximation factor as close as desired to factor 2, while retaining polynominal runtime. 
    Because the generalization from embedded histograms to hole-free polyominoes introduces a factor~5 by \cref{lem:5}, we obtain a ($10+\varepsilon$)-approximation algorithm for every $\varepsilon > 0$. 
    
    It remains to show that an embedded histogram $H$ can be \guarded{1} with at most $2\cdot\opt+1$ guards, in polynomial time. 
    As in the proof of \cref{lem:k-guard-embedded}, we partition $H$ into the sets $T$, $B$, and $R$, see also \cref{fig:approx-1-improved}.
        \begin{figure}[htb]
        \centering
        \includegraphics[]{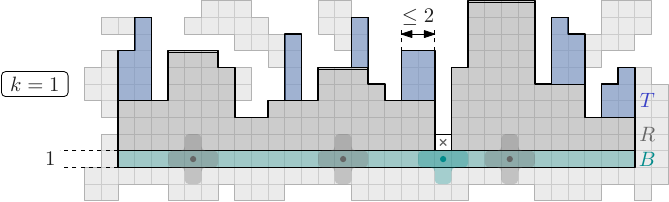}
        \caption{
            Illustration for the proof of \cref{thm:1-guard-poly} -- a simplification of our techniques for $k=1$. 
            We use that the base $B$ is a single strip and $T$ consists of embedded histograms of width at most 2.
        }
        \label{fig:approx-1-improved}
    \end{figure}
    While we use the fact that $R$ can be \guarded{1} with at most $\opt$ guards as before, we present improvements for the sets $B$ and $T$. 
    In particular, we show  that $B$ is always guardable by a single \rguard{1}, and that $T$ can be guarded optimally, i.e., with at most $\opt$ guards. 
    In total, we thus place $(2\cdot\opt+1)$ \rguards{1} which together guard $H$. 
    Let $n$ denote the number of vertices of $P$.
    
    \begin{claim}
        \label{clm:single-guard}
        We can identify a single \rguard{1} which sees $B$ in time linear in $n$.
    \end{claim}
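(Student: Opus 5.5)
For $k=1$ the bottom set $B$ of $H$ has $2k-1=1$ row. So $B$ is the single base strip of $H$; call it $b$, of width $w$. By the histogram property, the second row of $H$ is a union of horizontal substrips lying directly above $b$. My plan is to place one \rguard{1} $g$ in $b$ and check two things: $N_1(g)\subseteq \rect(g,c)$-compatible positions, and every cell of $b$ lies in $\viska{0}{c'}$ for all $c'\in N_1(g)$. The neighborhood $N_1(g)$ consists of $g$, its left and right neighbors in $b$ (if present), the cell above $g$ (if it lies in $H$, hence in $P$), and the cell below $g$ (if it lies in $P$, which happens exactly when $H$ has an \fullbase).

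\textbf{Bordered base.} Here there is no cell below $b$. Any $c'\in N_1(g)$ lying in $b$ sees all of $b$, since $b$ is a rectangle in $P$. So the only issue is the cell $u$ above $g$. If $u\in P$, then $u$ must see all of $b$, i.e. the $2\times w$ rectangle spanned by $u$ and $b$ must lie in $P$. I would therefore place $g$ below an outer cell if row two has a cell above $b$ that is outer. Otherwise the whole row above $b$ lies in $P$, and by the $k$-neighborhood structure the $2\times w$ rectangle is contained in $P$, so any position works. The first case covers in particular a cell below a top \outerdent, or the first/last column when $H$ is narrow.

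\textbf{Inside base.} Now all cells directly below $b$ are in $P$; let $d$ be the cell below $g$. Then $d$ sees $b$ because the $2\times w$ rectangle formed by $b$ and the row below it is in $P$. The cell $u$ above $g$ is handled exactly as before: choose $g$ beneath an outer cell of row two if one exists, and otherwise the full rectangle above is in $P$. Finally, the horizontal neighbors of $g$ in $b$ trivially see $b$.

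The main obstacle is making the choice of $g$ explicit and linear-time. One also has to confirm that horizontal neighbors of $g$ cannot leave $b$. Since $b$ is a maximal strip, the cells left and right of its ends are outer by the embedding properties, with the inside-base case handled as "all or none". The position can be found by a linear scan over the vertices of $P$ on the line one unit above $b$, which takes $\bO(n)$ time.
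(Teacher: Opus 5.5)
Your proposal is correct and follows the same argument as the paper: place $g$ on the base strip directly below an outer cell of the row above $B$ if one exists (anywhere otherwise), then check that every cell of $N_1(g)$ sees all of $B$, using the all-or-none property for the cell below $g$. One small simplification: you do not need to show that horizontal neighbors of $g$ stay inside $b$, because any cell of $P$ in the same row adjacent to $b$ sees all of $b$ through a horizontal strip.
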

    \begin{claimproof}
        Note that for $k=1$, $B$ is exactly the histogram's base. 
        Let $H_2$ be the set of all cells one position above $B$. 
        It is easy to see that if~$H_2$ contains no outer cell, placing a \rguard{1} anywhere on $B$ guards~$B$ entirely.
        If $H_2$ contains an outer cell (as in \cref{fig:approx-1-improved}), placing a \rguard{1} on $B$ right below this outer cell guards $B$ entirely. 
    \end{claimproof}

    \newcommand{\thinhist}[0]{\ensuremath{H}\xspace}
    \newcommand{\thinhistPI}[0]{\ensuremath{\Pi}\xspace}
    \newcommand{\thinhistI}[0]{\text{I}\xspace}

It remains to show that an optimal \guard{1} set of $T$ can be computed in time polynomial in $n$.
Note that for $k=1$, $T$ consists only of embedded histograms of width 1 or~2. 
Hence it suffices to  show in the following \cref{clm:thin} that we can optimally \guard{1} each of these histograms in polynomial time. 
By \cref{lem:separation-better}, all these histograms require separate guards, allowing us to combine all \rguards{k} for the single embedded histograms to optimally guard their union $T$. 

	To guard embedded histograms of width 1 or 2, we start with some definitions. 
    For an embedded histogram of width 2, let $\thinhistPI$ denote the union of all rows of width 2 and  let \thinhistI be $\thinhist \setminus \thinhistPI$. 
    We know that $\thinhistPI$ is adjacent to at least one outer cell~$o$ to its top; 
    without loss of generality, we assume $o$ to be at the left side (of both possible positions) in the following proofs. 
    We illustrate this setting in \cref{fig:approx-1-details}.  

    \begin{figure}[htb]
        \centering
        \includegraphics[]{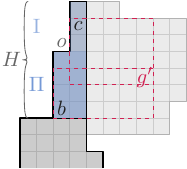}
        \caption{
            Terminology and details on guarding an embedded histogram $\thinhist$ of width 2. 
            We note that a guard $g'$ that sees both cell $b$ and $c$ also sees all cells in between, including all cells in $\thinhistPI$.}
        \label{fig:approx-1-details}
    \end{figure}

    \begin{claim}
        \label{clm:two-and-one}
        We can identify a \rguard{1} in $P$ that sees both $\thinhistPI$ and the longest possible substrip of \thinhistI from its bottom cell in time polynomial in $n$. 
    \end{claim}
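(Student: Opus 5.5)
The claim reduces to a one-dimensional search that we can run over the rectangle partition $R$ of \cref{prop:existencePolytime}. Let $b$ be the bottom cell of the width-1 strip $\thinhistI$. Since $\thinhistI$ lies on top of $\thinhistPI$, let $c$ be the topmost cell of $\thinhistPI$ in the column of $\thinhistI$, so that $b$ is directly above $c$. Fix the outer cell $o$ on the left above $\thinhistPI$. For any cell $t$ of $\thinhistI$ (a vertical strip), define $\phi(t)$ to be true if there is a \rguard{1} seeing all of $\thinhistPI$ together with the vertical substrip from $b$ to $t$. We want the topmost $t$ with $\phi(t)$ true.

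The first step is to reduce ``sees $\thinhistPI$ and the substrip from $b$ to $t$'' to ``sees two cells''. This is the observation in \cref{fig:approx-1-details}. Let $b'$ be the bottom-left cell of $\thinhistPI$, at the lowest row of width~$2$ on the side opposite to $\thinhistI$ if necessary. A guard $g$ that sees both $b'$ and $t$ has every reachable position $c''\in N_1(g)$ with $\rect(c'',b')$ and $\rect(c'',t)$ in $P$.

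I then argue that these two rectangles, together with the histogram structure, cover every cell of $\thinhistPI$ and of the strip from $b$ to $t$, for each $c''$. The argument parallels the proof of \cref{lem:strip-guarding-interval-cover}. By \cref{lem:separation-better}, such a guard's neighborhood lies in the region below the top of the histogram over the relevant columns. Since $H$ has width at most~$2$, every cell of $\thinhistPI$ lies in the bounding box spanned by $c''$, $b'$ and $t$, restricted to columns of $H$. Together with the extension of $\thinhistPI$ downward being in $P$, this gives that $c''$ sees it.

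If this covering argument turns out to be too optimistic, I fall back to testing the finite set of critical cells. These are the bottom-left and top-left cells of $\thinhistPI$, the top cell next to $o$, and $t$. I then intersect their visibility regions and apply the neighborhood test of \cref{clm:checkNeighborhood} directly to the intersection. This is possible because each such region is a union of rectangles of $R$, and each connected component of the intersection is orthogonally convex, as in the proof of \cref{lem:guardable}.

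The second step is monotonicity, which is immediate: if $\phi(t)$ holds and $t'$ lies between $b$ and $t$, then $\phi(t')$ holds. By the strip argument, the region a guard sees within $\thinhistI$ is an interval, and here it contains $b$.

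The third step is to search efficiently. By \cref{claim:partition}(\ref{item:i}), all cells of $\thinhistI$ lying in the same rectangle of $R$ behave identically. Hence it suffices to test $\phi$ at the topmost cell of each of the at most $n$ rectangles of $R$ met by $\thinhistI$. Each test runs in polynomial time via the routine of \cref{lem:guardable} and \cref{clm:checkNeighborhood}, and this routine also returns the guard. Binary search over these candidates reduces the number of tests to $\bO(\log n)$.

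The main obstacle is the first step: proving rigorously that a guard seeing the few critical cells sees all of $\thinhistPI$, given that it may step outside $H$. I expect hole-freeness of $P$, through \cref{lem:separation-better}, to be essential there, and if the argument fails I would use the critical-cell intersection described above.
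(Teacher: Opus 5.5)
\textbf{Verdict: genuine gap in the key step.} Your plan matches the paper's: reduce to a two-cell query for the bottom cell $b'$ of $\Pi$ below $o$ and a cell $t$ of $\mathrm{I}$, use monotonicity, and search over the $\mathcal{O}(n)$ relevant cells of $\mathrm{I}$ with \cref{lem:guardable}. But the step everything rests on is not proved: that a $1$-restless guard seeing $b'$ and $t$ sees all of $\Pi$ and the part of $\mathrm{I}$ up to $t$. The reasons you offer do not establish it.
Applied to $b'$ and $t$, \cref{lem:separation-better} only certifies the two cells in the row of $b'$, which lie in $\Pi$ anyway, so it says nothing about where $N_1(g)$ lies. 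A cell lying in the bounding box of $c''$, $b'$ and $t$ does not give $\rect(c'',x)\subseteq P$. The downward extension of $\Pi$ need not be in $P$ (bordered base). Your fallback does not close the gap either, since checking four critical cells requires exactly the same implication.

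The missing idea is the outer cell $o$, which you fix but never use. Let $\Pi$ occupy columns $x_0$ (below $o$) and $x_0+1$ and rows $y_0,\dots,y_1$. Then $o=(x_0,y_1+1)$, $b'=(x_0,y_0)$, and $t=(x_0+1,y_t)$ with $y_t>y_1$. Let $c''=(x'',y'')\in N_1(g)$. If $y''>y_1$, then $o\in\rect(c'',b')$. If $y''\le y_1$ and $x''\le x_0$, then $o\in\rect(c'',t)$. Hence $x''\ge x_0+1$ and $y''\le y_1$. A short case check then gives $\rect(c'',x)\subseteq\rect(c'',b')\cup\rect(c'',t)\cup\Pi\subseteq P$ for every cell $x$ of $\Pi$ and of $\mathrm{I}$ below $t$. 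This argument is purely local: neither hole-freeness nor \cref{lem:separation-better} is involved.

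A fallback that genuinely avoids this argument would intersect $\viska{0}{r}$ over \emph{all} rectangles $r$ of the partition from \cref{prop:existencePolytime} that meet the target region, and run the neighborhood test on that intersection. By \cref{claim:partition}(\ref{item:i}), this captures exactly the guards seeing the region.

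Finally, you omit the base case. The claim also asserts that some $1$-restless guard sees $\Pi$ at all. You need this when $\mathrm{I}=\emptyset$, or when no guard sees $b'$ together with the bottom cell of $\mathrm{I}$; in those cases your search over $t$ returns nothing. The paper handles this with an explicit placement in the column below $o$.
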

    \begin{claimproof}
        We first show that we can always place a \rguard{1} that sees $\thinhistPI$.
        We find such a position on the vertical strip of cells in $\thinhistPI$ below $o$; 
        if $\thinhistPI$ is adjacent to an outer cell to its left, we place the guard next to it, otherwise we place it directly below $o$. 
        Note that this also works if~$\thinhistPI$ has a height of 1 or 2 because $\thinhist$ has an \fullbase or \emptybase.

        If $\thinhistI = \emptyset$, this already identifies a suitable guard. 
        If $\thinhistI \neq \emptyset$, we denote the bottom left cell in $\thinhistPI$ by $b$. 
        Let $c$ be a cell in~$\thinhistI$. 
        We show that a \rguard{1} $g$ sees both $b$ and $c$ exactly if it sees~$\thinhistPI$ and all cells in $\thinhistI$ not above $c$. 

        First, if $g$ sees $\thinhistPI$ and all cells in $\thinhistI$ not above $c$, this includes $b$ and $c$. 
        Now, in the case where $g$ sees $b$ and $c$, let $g'$ be a \rguard{0} in the 1-neighborhood of $g$. 
        The outer cell~$o$ restricts the possible placements of $g'$ (that sees both $b$ and $c$); $g'$ must be both to the right and below $o$. 
        Now, combining that $\rect(g',b)$, $\rect(g',c)$, and~$\thinhistPI$ are contained in $P$, it follows that $g'$ sees $\thinhistPI$ and all cells in $\thinhistI$ below $c$, as visualized in \cref{fig:approx-1-details}.

        With this statement, we can now use \cref{lem:guardable} and perform a (binary) search to find the maximum number of cells in $\thinhistI$ that can be guarded together with $\thinhistPI$. 
        As also argued in \cref{lem:strip-guarding-interval-cover}, we know by \cref{prop:existencePolytime} that we only need to search among $\bO(n)$ interesting cells. 
        This concludes the proof. 
    \end{claimproof}
    
    We use \Cref{clm:two-and-one} to show the following.

    \begin{claim}
        \label{clm:thin}
        For an embedded histogram \thinhist of width at most 2, an optimal \guard{1} set can be computed in time polynomial in $n$. 
    \end{claim}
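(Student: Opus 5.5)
Width 1 and width 2 need different arguments, but both reduce to one-dimensional covering problems that a greedy exchange argument solves.

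\emph{Width 1.} Here \thinhist is a single vertical strip, so I would apply \cref{lem:strip-guarding-interval-cover} directly. It returns a minimum set of \rguards{1} in $P$ covering \thinhist, or reports that none exists.

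\emph{Width 2.} Write $\thinhist = \thinhistPI \cup \thinhistI$, where \thinhistI is a vertical strip of width 1 sitting on top of $\thinhistPI$ (it is empty if all rows have width 2). The plan is to reduce to covering \thinhistI from below, plus one ``anchoring'' guard for $\thinhistPI$.
\begin{enumerate}
\item If $\thinhistI=\emptyset$, \cref{clm:two-and-one} gives one guard seeing $\thinhistPI = \thinhist$, which is clearly optimal.
\item Otherwise, use \cref{clm:two-and-one} to pick a guard $g_0$ that sees $\thinhistPI$ together with the longest substrip of \thinhistI starting at its bottom cell. Let $c^*$ be the topmost cell of \thinhistI seen by $g_0$.
\item Cover the remaining part $S$ of \thinhistI above $c^*$, which is a vertical strip, with a minimum number of guards via \cref{lem:strip-guarding-interval-cover}.
\item Output $g_0$ together with these guards.
\end{enumerate}

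\emph{Optimality.} I would argue by exchange on an optimal solution $G^*$. Some guard $g\in G^*$ sees the bottom-left cell $b$ of $\thinhistPI$. Consider the guards of $G^*$ that see at least one cell of $\thinhistPI$. By the orthogonal convexity of each visibility region restricted to a strip, and by the interval argument of \cref{lem:strip-guarding-interval-cover}, each such guard sees a contiguous part of each column.

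The key step is to show that $G^*$ can be assumed to contain a single guard $g$ that sees $b$ and all of $\thinhistPI$ and reaches into \thinhistI. Two facts should give this:
\begin{itemize}
\item By the characterization in the proof of \cref{clm:two-and-one}, any guard seeing $b$ and a cell $c \in \thinhistI$ sees all of $\thinhistPI$ and everything in \thinhistI up to $c$.
\item Any guard of $G^*$ that sees cells of \thinhistI but not $b$ sees only a substrip of \thinhistI, or possibly a part of $\thinhistPI$. Using the outer cell $o$ and \cref{lem:separation-better}, such a guard should see a contiguous set of rows of \thinhist.
\end{itemize}

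Then, as in greedy interval cover, I replace the guard covering $b$ by $g_0$, which reaches at least as high in \thinhistI. The remaining guards of $G^*$ must still cover the cells of \thinhistI above $c^*$, so there are at least as many of them as the minimum cover of $S$ computed in step 3. If no guard sees $b$ together with any cell of \thinhistI, then $G^*$ uses at least one guard for $b$ plus a cover of all of \thinhistI. This is at least our count, because $g_0$ covers $\thinhistPI$ and the strip algorithm is optimal on the rest.

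\emph{Main obstacle.} The hardest part is making rigorous that guards not seeing $b$ cannot help more efficiently. A single guard could cover the top rows of $\thinhistPI$ and the lower part of \thinhistI, while another covers the bottom of $\thinhistPI$. I would first try to show that any guard seeing a top-right cell of $\thinhistPI$ and a cell of \thinhistI also sees all of $\thinhistPI$ above $b$'s row, via the position constraint imposed by $o$. Failing that, I would generalize the greedy to a two-column interval cover, treating each column as an interval and exchanging guards column by column.

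\emph{Running time.} All searches are restricted to the $\bO(n)$ interesting cells given by the partition $R$ of \cref{prop:existencePolytime}, and each query uses \cref{lem:guardable}, so the algorithm runs in time polynomial in $n$.
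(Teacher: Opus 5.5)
Your algorithm is the paper's: interval cover for width~1, and for width~2 the guard from \cref{clm:two-and-one} that reaches highest into $\mathrm{I}$, plus a minimum strip cover of the rest. Your optimality argument is correct but takes a different, more direct route than the paper.

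\textbf{The paper's route.} It argues by contradiction. A solution $G$ with $|G|\le|G'|$ would be a minimum cover of $\mathrm{I}'$, so every guard in $G$ has a private cell there. It then uses a further fact, justified only briefly: a guard seeing \emph{any} cell of $\Pi$ and a cell $c\in\mathrm{I}$ sees all of $\mathrm{I}$ below $c$. From this it deduces that a unique guard $\bar g\in G$ meets $\Pi$. Comparing $\bar g$ with $\pi$ then yields a cover of $\mathrm{I}'$ by $|G'|-1$ guards, a contradiction.

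\textbf{Your route.} You give a direct lower bound. Any guard of $G^*$ that sees $b$ sees no cell of $S$. Indeed, if it saw $b$ and some $s\in S$, the characterization proved in \cref{clm:two-and-one} would give it $\Pi$ and all of $\mathrm{I}$ up to $s$, which is higher than $g_0$ reaches. Hence $G^*$ minus that guard covers $S$. This gives $|G^*|\ge 1+{}$(size of a minimum cover of $S$), which is exactly your output size. You need the characterization only for the single cell $b$, which the paper actually proves, and you need no private-cell or uniqueness reasoning.

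\textbf{What to drop.} The parts you call the ``key step'' and the ``main obstacle'' are unnecessary. Your counting never depends on what the other guards of $G^*$ see. Consider a guard that covers the top rows of $\Pi$ and the lower part of $\mathrm{I}$ while a different guard covers $b$. It is simply one of the guards in $G^*$ minus the $b$-guard, which must cover $S$ anyway. So you can remove the vague second bullet (``should see a contiguous set of rows'') and the fallback two-column greedy. Your two cases also merge into one, since in both the guard seeing $b$ misses $S$ entirely. If you prefer the exchange formulation, it is immediate as well: within $H$, the region seen by the guard covering $b$ is contained in the region seen by $g_0$.
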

    \begin{claimproof}
        Clearly, if \thinhist has width 1, it can be optimally guarded using the interval cover strategy from \cref{lem:strip-guarding-interval-cover}.
        If \thinhist has width 2, we apply \cref{clm:two-and-one} and identify a guard $\pi$ that sees~$\thinhistPI$ and the largest number of cells from $\thinhistI$ starting from the bottom. 
        Let $\thinhistI' \subseteq \thinhistI$ be the cells not yet guarded. 
        We guard $\thinhistI'$ using the interval cover strategy from \cref{lem:strip-guarding-interval-cover} and derive the guard set $G'$. 

        We show that $G' \cup \{\pi\}$ is an optimal guard set for $\thinhist$. 
        Otherwise there exists a guard set~$G$ of  $\thinhist$ with $|G| \leq |G'|$. 
        Because $\thinhistI'$ needs $|G'|$ guards, every guard in~$G$  sees some private cell in $\thinhistI'\subseteq \thinhistI$. 
        Note that a guard $g$ seeing a (private) cell $c\in \thinhistI$ and a cell $d\in\thinhistPI$, also sees every cell~$c'$ in~$\thinhistI$ below~$c$: Either $g$ lies below $o$ and hence $\rect(g,c)$ contains $c'$ or $\rect(g,d)$ contains~$c'$. 
        Consequently, there exists a unique guard $\bar g$ in $G$ that sees a cell of~$\thinhistPI$, otherwise~$G$ contains guards without a private cell in~$\thinhistI'$. 
        In particular, $\bar g\in G$ sees all of $\thinhistPI$ and some cells in~$\thinhistI$, such that $|G'|-1$ guards suffice to cover the remainder $\thinhistI''$ of  $\thinhistI$. By our choice of $\pi$, $\thinhistI'\subseteq \thinhistI''$. Hence, $|G'|-1$ guards suffice for $\thinhistI'$, a contradiction. 
    \end{claimproof}
    
    With these improvements, embedded histograms can be guarded with at most $2\cdot\opt + 1$ \rguards{1}, which can be turned into a $(10+\varepsilon$)-approximation for hole-free polyominoes, as described at the beginning of the proof.
\end{proof}
\end{document}